\documentclass[12pt]{article}
\usepackage{fullpage,amsmath,amsthm,amssymb,comment,graphicx}
\usepackage[pdftex]{hyperref}
\hypersetup{
    linktocpage=true,
    colorlinks=true,				
    linkcolor=blue,		
    citecolor=green,	
    urlcolor=blue,	
}
\newtheorem{lem}{Lemma}[section]
\newtheorem{defn}[lem]{Definition}
\newtheorem{thm}[lem]{Theorem}

\newtheorem{prop}[lem]{Proposition}
\newtheorem{cor}[lem]{Corollary}

\newtheorem{rem}[lem]{Remark}
\newcommand{\eps}{\varepsilon}

\newcommand{\ex}[2]{\underset{#1}{\mathbb{E}}\left[#2\right]}
\newcommand{\pr}[2]{\underset{#1}{\mathbb{P}}\left[#2\right]}

\newcommand{\dr}[3]{\mathrm{D}_{#1}\left(#2\middle\|#3\right)}

\newcommand{\nope}[1]{}
\newcommand{\privloss}[2]{\mathsf{PrivLoss}\left(#1\middle\|#2\right)}

\newcommand{\con}{\xi}
\newcommand{\lin}{\rho}
\newcommand{\IIP}{zCDP}
\newcommand{\mcdp}{mCDP}
\newcommand{\cdp}{CDP}

\title{Concentrated Differential Privacy:\\Simplifications, Extensions, and Lower Bounds}
\author{Mark Bun\thanks{Supported by an NDSEG Fellowship and NSF grant CNS-1237235.} \and Thomas Steinke\thanks{Supported by NSF grants CCF-1116616, CCF-1420938, and CNS-1237235.}}
\date{\texttt{\{mbun,tsteinke\}@seas.harvard.edu}}
\begin{document}
\maketitle

\begin{abstract}
``Concentrated differential privacy'' was recently introduced by Dwork and Rothblum as a relaxation of differential privacy, which permits sharper analyses of many privacy-preserving computations.
We present an alternative formulation of the concept of concentrated differential privacy in terms of the R\'{e}nyi divergence between the distributions obtained by running an algorithm on neighboring inputs. With this reformulation in hand, we prove sharper quantitative results, establish lower bounds, and raise a few new questions. We also unify this approach with approximate differential privacy by giving an appropriate definition of ``approximate concentrated differential privacy.''
\end{abstract}

\break
{\small
\tableofcontents
}
\break 

\section{Introduction}

Differential privacy \cite{DworkMNS06} is a formal mathematical standard for protecting individual-level privacy in statistical data analysis. In its simplest form, (pure) differential privacy is parameterized by a real number $\eps > 0$, which controls how much ``privacy loss''\footnote{The privacy loss is a random variable which quantifies how much information is revealed about an individual by a computation involving their data; it depends on the outcome of the computation, the way the computation was performed, and the information that the individual wants to hide. We discuss it informally in this introduction and define it precisely in Definition \ref{defn:PrivLoss} on page \pageref{defn:PrivLoss}.} an individual can suffer when a computation (i.e., a statistical data analysis task) is performed involving his or her data. 

One particular hallmark of differential privacy is that it degrades smoothly and predictably under the \emph{composition} of multiple computations. In particular, if one performs $k$ computational tasks that are each $\eps$-differentially private and combines the results of those tasks, then the computation as a whole is $k\eps$-differentially private. This property makes differential privacy amenable to the type of modular reasoning used in the design and analysis of algorithms: When a sophisticated algorithm is comprised of a sequence of differentially private steps, one can establish that the algorithm as a whole remains differentially private.

A widely-used relaxation of pure differential privacy is \emph{approximate} or $(\eps, \delta)$-differential privacy \cite{DworkKMMN06}, which essentially guarantees that the probability that any individual suffers privacy loss exceeding $\eps$ is bounded by $\delta$. For sufficiently small $\delta$, approximate $(\eps,\delta)$-differential privacy provides a comparable standard of privacy protection as pure $\eps$-differential privacy, while often permitting substantially more useful analyses to be performed.

Unfortunately, there are situations where, unlike pure differential privacy, approximate differential privacy is not a very elegant abstraction for mathematical analysis, particularly the analysis of composition. The ``advanced composition theorem'' of Dwork, Rothblum, and Vadhan \cite{DworkRV10} (subsequently improved by \cite{KairouzOV15,MurtaghV16}) shows that the composition of $k$ tasks which are each $(\eps, \delta)$-differentially private is $ (\approx\!\!\sqrt{k}\eps, \approx\!\!k\delta)$-differentially private. However, these bounds can be unwieldy; computing the tightest possible privacy guarantee for the composition of $k$ arbitrary mechanisms with differing $(\eps_i, \delta_i)$-differential privacy guarantees is $\#\mathsf{P}$-hard \cite{MurtaghV16}! Furthermore, these bounds are not tight even for simple and natural privacy-preserving computations. For instance, consider the mechanism which approximately answers $k$ statistical queries on a given database by adding independent Gaussian noise to each answer. Even for this basic computation, the advanced composition theorem does not yield a tight analysis.\footnote{In particular, consider answering $k$ statistical queries on a dataset of $n$ individuals by adding noise drawn from $\mathcal{N}(0,(\sigma/n)^2)$ independently for each query. Each individual query satisfies $(O(\sqrt{\log(1/\delta)}/\sigma),\delta)$-differential privacy for any $\delta>0$. Applying the advanced composition theorem shows that the composition of all $k$ queries satisfies $(O(\sqrt{k} \log(1/\delta)/\sigma),(k+1)\delta)$-differential privacy for any $\delta>0$. However, it is well-known that this bound can be improved to $(O(\sqrt{k \log(1/\delta)}/\sigma),\delta)$-differential privacy.}

Dwork and Rothblum \cite{DworkR16} recently put forth a different relaxation of differential privacy called \emph{concentrated differential privacy}. Roughly, a randomized mechanism satisfies concentrated differentially privacy if the privacy loss has small mean and is subgaussian. Concentrated differential privacy behaves in a qualitatively similar way as approximate $(\eps, \delta)$-differential privacy under composition. However, it permits sharper analyses of basic computational tasks, including a tight analysis of the aforementioned Gaussian mechanism.

Using the work of Dwork and Rothblum \cite{DworkR16} as a starting point, we introduce an alternative formulation of the concept of concentrated differential privacy that we call ``zero-concentrated differential privacy'' (\IIP{} for short). To distinguish our definition from that of Dwork and Rothblum, we refer to their definition as ``mean-concentrated differential privacy'' (\mcdp{} for short). Our definition uses the R\'{e}nyi divergence between probability distributions as a different method of capturing the requirement that the privacy loss random variable is subgaussian.

\subsection{Our Reformulation: Zero-Concentrated Differential Privacy}

As is typical in the literature, we model a dataset as a multiset or tuple of $n$ elements (or ``rows'') in $\mathcal{X}^n$, for some ``data universe'' $\mathcal{X}$, where each element represents one individual's information. A (privacy-preserving) computation is a randomized algorithm $M : \mathcal{X}^n \to \mathcal{Y}$, where $\mathcal{Y}$ represents the space of all possible outcomes of the computation.

\begin{defn}[Zero-Concentrated Differential Privacy (\IIP{})]
A randomised mechanism $M : \mathcal{X}^n \to \mathcal{Y}$ is $(\con,\lin)$-zero-concentrated differentially private (henceforth $(\con,\lin)$-\IIP{}) if, for all $x,x' \in \mathcal{X}^n$ differing on a single entry and all $\alpha \in (1,\infty)$, \begin{equation}\dr{\alpha}{M(x)}{M(x')} \leq \con + \lin \alpha,\label{eqn:IIP-Renyi} \end{equation} where $\dr{\alpha}{M(x)}{M(x')}$ is the $\alpha$-R\'enyi divergence\footnote{R\'enyi divergence has a parameter $\alpha \in (1,\infty)$ which allows it to interpolate between KL-divergence ($\alpha \!\to\! 1$) and max-divergence ($\alpha \!\to\! \infty$). It should be thought of as a measure of dissimilarity between distributions. We define it formally in Section \ref{sec:Renyi}.   Throughout, we assume that all logarithms are natural unless specified otherwise --- that is, base $e \approx 2.718$. This includes logarithms in information theoretic quantities like entropy, divergence, and mutual information, whence these quantities are measured in \emph{nats} rather than in \emph{bits}.}  between the distribution of $M(x)$ and the distribution of $M(x')$.

We define $\lin$-\IIP{} to be $(0,\lin)$-\IIP{}.\footnote{For clarity of exposition, we consider only $\lin$-\IIP{} in the introduction and give more general statements for $(\con,\lin)$-\IIP{} later. We also believe that having a one-parameter definition is desirable.}
\end{defn}
Equivalently, we can replace \eqref{eqn:IIP-Renyi} with \begin{equation}\ex{}{e^{(\alpha-1)Z}} \leq e^{(\alpha-1)(\con + \lin \alpha)},\label{eqn:GDP-MGF}\end{equation} where $Z = \privloss{M(x)}{M(x')}$ is the privacy loss random variable:
\begin{defn}[Privacy Loss Random Variable] \label{defn:PrivLoss}
Let $Y$ and $Y'$ be random variables on $\Omega$. We define the \emph{privacy loss random variable between $Y$ and $Y'$} -- denoted $Z=\privloss{Y}{Y'}$ -- as follows. Define a function $f : \Omega \to \mathbb{R}$ by $f(y) = \log(\pr{}{Y=y}/\pr{}{Y'=y})$.\footnote{Throughout we abuse notation by letting $\pr{}{Y=y}$ represent either the probability mass function or the probability density function of $Y$ evaluated at $y$. Formally, $\pr{}{Y=y}/\pr{}{Y'=y}$ denotes the Radon-Nikodym derivative of the measure $Y$ with respect to the measure $Y'$ evaluated at $y$, where we require $Y$ to be absolutely continuous with respect to $Y'$, i.e. $Y \ll Y'$.} Then $Z$ is distributed according to $f(Y)$.
\end{defn}
Intuitively, the value of the privacy loss $Z = \privloss{M(x)}{M(x')}$ represents how well we can distinguish $x$ from $x'$ given only the output $M(x)$ or $M(x')$. If $Z>0$, then the observed output of $M$ is more likely to have occurred if the input was $x$ than if $x'$ was the input. Moreover, the larger $Z$ is, the bigger this likelihood ratio is. Likewise, $Z<0$ indicates that the output is more likely if $x'$ is the input. If $Z=0$, both $x$ and $x'$ ``explain'' the output of $M$ equally well.

A mechanism $M : \mathcal{X}^n \to \mathcal{Y}$ is $\varepsilon$-differentially private if and only if $\pr{}{Z>\varepsilon}=0$, where $Z = \privloss{M(x)}{M(x')}$ is the privacy loss of $M$ on arbitrary inputs $x,x' \in \mathcal{X}^n$ differing in one entry.
On the other hand, $M$ being $(\varepsilon,\delta)$-differentially private is equivalent, up to a small loss in parameters, to the requirement that $\pr{}{Z>\varepsilon}\leq \delta$.

In contrast, \IIP{} entails a bound on the \emph{moment generating function} of the privacy loss $Z$ --- that is, $\ex{}{e^{(\alpha-1)Z}}$ as a function of $\alpha-1$. The bound \eqref{eqn:GDP-MGF} implies that $Z$ is a \emph{subgaussian} random variable\footnote{A random variable $X$ being subgaussian is characterised by the following four equivalent conditions \cite{Rivasplata12}. (i) $\pr{}{|X-\ex{}{X}|>\lambda} \leq e^{-\Omega(\lambda^2)}$ for all $\lambda>0$. (ii) $\ex{}{e^{t(X-\ex{}{X})}} \leq e^{O(t^2)}$ for all $t \in \mathbb{R}$. (iii) $\ex{}{(X-\ex{}{X})^{2k}} \leq O(k)^k$ for all $k \in \mathbb{N}$. (iv) $\ex{}{e^{c(X-\ex{}{X})^2}} \leq 2$ for some $c>0$.} with small mean. Intuitively, this means that $Z$ resembles a Gaussian distribution with mean $\con+\lin$ and variance $2\lin$. In particular, we obtain strong tail bounds on $Z$. Namely \eqref{eqn:GDP-MGF} implies that $$\pr{}{Z>\lambda+\con+\lin} \leq e^{-\lambda^2/4\lin}$$ for all $\lambda>0$.\footnote{We only discuss bounds on the upper tail of $Z$. We can obtain similar bounds on the lower tail of $Z=\privloss{M(x)}{M(x')}$ by considering $Z'=\privloss{M(x')}{M(x)}$.}

Thus \IIP{} requires that the privacy loss random variable is concentrated around zero (hence the name). That is, $Z$ is ``small'' with high probability, with larger deviations from zero becoming increasingly unlikely. Hence we are unlikely to be able to distinguish $x$ from $x'$ given the output of $M(x)$ or $M(x')$. Note that the randomness of the privacy loss random variable is taken only over the randomnesss of the mechanism $M$.

\subsubsection{Comparison to the Definition of Dwork and Rothblum}

For comparison, Dwork and Rothblum \cite{DworkR16} define \emph{$(\mu,\tau)$-concentrated differential privacy} for a randomized mechanism $M : \mathcal{X}^n \to \mathcal{Y}$ as the requirement that, if $Z=\privloss{M(x)}{M(x')}$ is the privacy loss for $x,x'\in \mathcal{X}^n$ differing on one entry, then $$\ex{}{Z} \leq \mu \qquad\text{and}\qquad \ex{}{e^{(\alpha-1)(Z-\ex{}{Z})}} \leq e^{(\alpha-1)^2\frac12\tau^2}$$ for all $\alpha \in \mathbb{R}$. That is, they require both a bound on the mean of the privacy loss and that the privacy loss is tightly concentrated around its mean.
To distinguish our definitions, we refer to their definition as \emph{mean-concentrated differential privacy} (or \mcdp{}).

Our definition, \IIP{}, is a \emph{relaxation} of \mcdp{}. In particular, a $(\mu,\tau)$-\mcdp{} mechanism is also $(\mu-\tau^2/2,\tau^2/2)$-\IIP{} (which is tight for the Gaussian mechanism example), whereas the converse is not true. (However, a partial converse holds; see Lemma \ref{lem:IIPtoCDP}.) 

\subsection{Results}

\subsubsection{Relationship between \IIP{} and Differential Privacy} \label{sec:GDP-DP-relationship}

Like Dwork and Rothblum's formulation of concentrated differential privacy, \IIP{} can be thought of as providing guarantees of $(\eps, \delta)$-differential privacy \emph{for all} values of $\delta > 0$:

\begin{prop} \label{prop:CDPtoDP-intro}
If $M$ provides $\lin$-\IIP{}, then $M$ is $(\lin+2\sqrt{\lin\log(1/\delta)},\delta)$-differentially private for any $\delta>0$.
\end{prop}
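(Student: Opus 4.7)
The plan is to convert the moment-generating-function bound \eqref{eqn:GDP-MGF} (with $\con = 0$) into a high-probability tail bound on the privacy loss $Z = \privloss{M(x)}{M(x')}$ via Markov's inequality, and then translate a tail bound on $Z$ into an $(\eps, \delta)$-differential privacy guarantee. By hypothesis, for every $\alpha > 1$ we have $\ex{}{e^{(\alpha-1)Z}} \le e^{(\alpha-1)\lin\alpha}$. Applying Markov to $e^{(\alpha-1)Z}$ yields
\[
\pr{}{Z > \eps} \;\le\; \frac{\ex{}{e^{(\alpha-1)Z}}}{e^{(\alpha-1)\eps}} \;\le\; e^{(\alpha-1)(\lin\alpha - \eps)}.
\]

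Next I would optimize the free parameter $\alpha > 1$ to make the right-hand side equal $\delta$. Writing $t = \alpha - 1 > 0$, the exponent is $\lin t^2 + (\lin - \eps)t$, which is minimized at $t^* = (\eps - \lin)/(2\lin)$. Plugging in $\eps = \lin + 2\sqrt{\lin\log(1/\delta)}$ gives $t^* = \sqrt{\log(1/\delta)/\lin}$, so the minimum value of the exponent is exactly $-\log(1/\delta)$, and this choice of $\alpha = 1 + t^*$ lies in $(1,\infty)$ for any $\delta \in (0,1)$. Hence $\pr{}{Z > \eps} \le \delta$.

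Finally, I would convert the tail bound into differential privacy in the standard way. For any measurable $S \subseteq \mathcal{Y}$, split
\[
\pr{}{M(x) \in S} = \pr{}{M(x) \in S,\, Z \le \eps} + \pr{}{M(x) \in S,\, Z > \eps}.
\]
On the event $\{Z \le \eps\}$, the likelihood ratio satisfies $\pr{}{M(x)=y}/\pr{}{M(x')=y} \le e^{\eps}$ pointwise, so the first term is at most $e^{\eps}\pr{}{M(x') \in S}$; the second is at most $\pr{}{Z > \eps} \le \delta$. This gives the desired $(\eps, \delta)$-differential privacy guarantee.

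The calculation is essentially a one-parameter Chernoff-style optimization, and there is no real obstacle beyond checking the algebra; the mild subtlety is ensuring the optimal $\alpha$ genuinely exceeds $1$ (which it does whenever $\delta < 1$) and carefully handling the likelihood-ratio-to-$(\eps,\delta)$ conversion with respect to the Radon--Nikodym derivative in the continuous case, but the split above works in full generality.
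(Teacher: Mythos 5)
Your proof is correct and takes essentially the same approach as the paper's Lemma \ref{lem:CDPtoDP}: apply Markov's inequality to the MGF bound implied by $\lin$-\IIP{}, optimize the Chernoff parameter $\alpha$, and then convert the tail bound on the privacy loss into $(\eps,\delta)$-DP via the standard event split. Your choice $\alpha = 1 + \sqrt{\log(1/\delta)/\lin}$ is exactly the paper's $\alpha = (\eps - \con + \lin)/2\lin$ specialized to $\con = 0$.
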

We also prove a slight strengthening of this result (Lemma \ref{lem:CDPtoDP2}). Moreover, there is a partial converse, which shows that, up to a loss in parameters, \IIP{} is equivalent to differential privacy with this $\forall \delta>0$ quantification (see Lemma \ref{lem:DPtoCDP}).

There is also a direct link from pure differential privacy to \IIP{}:

\begin{prop} \label{prop:PDPtoCDP-intro}
If $M$ satisfies $\eps$-differential privacy, then $M$ satisfies $(\frac12 \eps^2)$-\IIP{}.
\end{prop}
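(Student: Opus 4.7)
The plan is to prove that $\dr{\alpha}{M(x)}{M(x')} \leq \tfrac{1}{2}\eps^2 \alpha$ for every $\alpha \in (1,\infty)$ and every pair of neighboring inputs $x, x'$, which is exactly the definition of $(\tfrac{1}{2}\eps^2)$-\IIP{}. Writing $P = M(x)$, $Q = M(x')$, and $r(y) = P(y)/Q(y)$ for the Radon--Nikodym derivative, I will use the identity
$$\dr{\alpha}{P}{Q} = \frac{1}{\alpha-1}\log \ex{Y \sim Q}{r(Y)^\alpha},$$
so that the goal reduces to showing $\ex{Y \sim Q}{r(Y)^\alpha} \leq \exp(\alpha(\alpha-1)\eps^2/2)$. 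Pure $\eps$-differential privacy gives the pointwise bound $r(Y) \in [e^{-\eps}, e^\eps]$ almost surely, and since $P$ is a probability measure $\ex{Y \sim Q}{r(Y)} = 1$.

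The first substantive step is a one-line convexity reduction. Because $r \mapsto r^\alpha$ is convex for $\alpha > 1$, among all distributions of $r$ supported in $[e^{-\eps}, e^\eps]$ with mean $1$, the expectation $\ex{}{r^\alpha}$ is maximized by the unique two-point distribution on the endpoints with this mean, namely $\pr{}{r = e^{-\eps}} = e^\eps/(e^\eps+1)$ and $\pr{}{r = e^\eps} = 1/(e^\eps+1)$. Substituting and grouping the exponentials into hyperbolic cosines yields
$$\ex{Y \sim Q}{r(Y)^\alpha} \leq \frac{e^{(1-\alpha)\eps} + e^{\alpha \eps}}{e^\eps + 1} = \frac{\cosh((\alpha - \tfrac{1}{2})\eps)}{\cosh(\eps/2)}.$$

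It then remains to verify the scalar inequality $\log\cosh((\alpha - \tfrac{1}{2})\eps) - \log\cosh(\eps/2) \leq \alpha(\alpha-1)\eps^2/2$, which I would prove by a second-order Taylor expansion of $f(t) = \log\cosh(t)$ around $t = \eps/2$ evaluated at $t = \eps/2 + (\alpha-1)\eps$. Two elementary facts suffice: $f''(t) = \operatorname{sech}^2(t) \leq 1$ globally, which bounds the quadratic remainder by $\tfrac{1}{2}(\alpha-1)^2 \eps^2$; and $f'(\eps/2) = \tanh(\eps/2) \leq \eps/2$, which bounds the linear contribution by $\tfrac{1}{2}(\alpha-1)\eps^2$. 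Summing these gives exactly $\alpha(\alpha-1)\eps^2/2$. I expect this final scalar inequality to be the main obstacle: the naive alternative of invoking Hoeffding's lemma on $Z = \log r$ under $Q$ yields only $\ex{}{e^{\alpha Z}} \leq e^{\alpha^2 \eps^2/2}$ and hence $\dr{\alpha}{P}{Q} \leq \alpha^2 \eps^2 / (2(\alpha-1))$, which is strictly weaker than the target because it does not exploit the full constraint $\ex{Y \sim Q}{e^Z} = 1$ beyond its weakening $\ex{Y \sim Q}{Z} \leq 0$.
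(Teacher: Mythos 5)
Your proof is correct and shares its skeleton with the paper's own argument (Proposition \ref{prop:EpsSquared}): both reduce by convexity of $r \mapsto r^\alpha$ to the unique two-point distribution on $\{e^{-\eps}, e^\eps\}$ with mean $1$ --- the paper phrases this via a random function $A$ and a pointwise application of Jensen's inequality, which is the same step --- and both then have to bound a scalar hyperbolic expression. The genuine difference is in how that scalar bound is established. The paper leaves the quantity in the form $\frac{\sinh(\alpha\eps)-\sinh((\alpha-1)\eps)}{\sinh(\eps)}$ and offloads the estimate to Lemma \ref{lem:HyperTrigIneq}, whose proof is a fairly involved product-to-sum manipulation culminating in $\tanh(x/2)\tanh(y/2)\le\tanh(xy/4)$ and which is only stated for $\alpha\eps\le 2$; this is what forces the opening case split ``we may assume $\frac12\eps\alpha\le 1$'' handled by monotonicity. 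Your equivalent rewrite as $\cosh((\alpha-\tfrac12)\eps)/\cosh(\eps/2)$ (the same quantity, via $\sinh x - \sinh y = 2\sinh\frac{x-y}{2}\cosh\frac{x+y}{2}$), followed by a second-order Taylor estimate of $\log\cosh$ with Lagrange remainder using only $\tanh(t)\le t$ for $t\ge 0$ and $\operatorname{sech}^2(t)\le 1$, is shorter and, as a small bonus, valid for all $\alpha>1$ and $\eps>0$ with no case split. Your closing remark --- that Hoeffding's lemma alone gives only $\dr{\alpha}{P}{Q}\le\frac{\alpha^2\eps^2}{2(\alpha-1)}$, which exceeds $\frac12\alpha\eps^2$ for every $\alpha>1$ --- is also accurate and correctly identifies why the constraint $\ex{x\sim Q}{r(x)}=1$ must be used in full rather than only through its weaker consequence $\ex{x\sim Q}{\log r(x)}\le 0$.
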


Dwork and Rothblum \cite[Theorem 3.5]{DworkR16} give a slightly weaker version of Proposition  \ref{prop:PDPtoCDP-intro}, which implies that $\eps$-differential privacy yields $(\frac12 \eps (e^\eps-1))$-\IIP{}; this improves on an earlier bound \cite{DworkRV10} by the factor $\frac12$.

We give proofs of these and other properties using properties of R\'{e}nyi divergence in Sections \ref{sec:Renyi} and \ref{sec:DP}.

Propositions \ref{prop:CDPtoDP-intro} and \ref{prop:PDPtoCDP-intro} show that \IIP{} is an intermediate notion between pure differential privacy and approximate differential privacy. Indeed, many algorithms satisfying approximate differential privacy do in fact also satisfy \IIP{}.

\subsubsection{Gaussian Mechanism} \label{sec:Gaussian-intro}

Just as with \mcdp{}, the prototypical example of a mechanism satisfying \IIP{} is the \emph{Gaussian mechanism}, which answers a real-valued query on a database by perturbing the true answer with Gaussian noise.

\begin{defn}[Sensitivity]
A function $q : \mathcal{X}^n \to \mathbb{R}$ has \emph{sensitivity} $\Delta$ if for all $x, x' \in \mathcal{X}^n$ differing in a single entry, we have $|q(x) - q(x')| \le \Delta$.
\end{defn}

\begin{prop}[Gaussian Mechanism] \label{prop:gaussian-mech}
Let $q : \mathcal{X}^n \to \mathbb{R}$ be a sensitivity-$\Delta$ query. Consider the mechanism $M : \mathcal{X}^n \to \mathbb{R}$ that on input $x$, releases a sample from $\mathcal{N}(q(x), \sigma^2)$. Then $M$ satisfies $(\Delta^2/2\sigma^2)$-\IIP{}.
\end{prop}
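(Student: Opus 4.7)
The plan is to reduce the statement to a direct computation of the R\'enyi divergence between two univariate Gaussians with the same variance but possibly different means, and then invoke the sensitivity bound on $q$.

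Fix neighboring inputs $x,x' \in \mathcal{X}^n$, and let $\mu = q(x)$ and $\mu' = q(x')$, so that $|\mu - \mu'| \le \Delta$. By the definition of $\lin$-\IIP{} (with $\con = 0$), it suffices to show that for every $\alpha \in (1,\infty)$,
\[
\dr{\alpha}{\mathcal{N}(\mu,\sigma^2)}{\mathcal{N}(\mu',\sigma^2)} \;\le\; \frac{\alpha \Delta^2}{2\sigma^2}.
\]
Since only $\mu - \mu'$ enters (by translation invariance), I would prove the cleaner identity
\[
\dr{\alpha}{\mathcal{N}(\mu,\sigma^2)}{\mathcal{N}(\mu',\sigma^2)} \;=\; \frac{\alpha (\mu - \mu')^2}{2\sigma^2},
\]
and then the sensitivity bound $|\mu - \mu'| \le \Delta$ immediately yields the required inequality.

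To prove the identity, I would plug the two Gaussian densities into the definition
\[
\dr{\alpha}{P}{Q} = \frac{1}{\alpha-1} \log \int P(y)^{\alpha} Q(y)^{1-\alpha}\, dy,
\]
collect the exponent into the quadratic $-\frac{1}{2\sigma^2}\bigl(\alpha (y-\mu)^2 + (1-\alpha)(y-\mu')^2\bigr)$, and complete the square in $y$. The coefficient of $y^2$ is $1/(2\sigma^2)$ (independent of $\alpha$, which is crucial), so the Gaussian integral factors cleanly against the normalizing constants $1/\sqrt{2\pi\sigma^2}$ coming from $P^{\alpha}Q^{1-\alpha}$, leaving only the ``residual'' term $\alpha(\alpha-1)(\mu - \mu')^2/(2\sigma^2)$ in the exponent. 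Dividing by $\alpha-1$ and taking the logarithm (which just peels off the exponent) gives the stated closed form.

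There is no real obstacle beyond the bookkeeping in the completion of the square; the mild thing to be careful about is that the $y^2$ terms from $P^{\alpha}$ and $Q^{1-\alpha}$ cancel in the right way because the two distributions share the variance $\sigma^2$. Once the identity is established, the bound $\dr{\alpha}{M(x)}{M(x')} \le \alpha \cdot \Delta^2/(2\sigma^2)$ holds for all $\alpha > 1$, which is exactly $(0, \Delta^2/(2\sigma^2))$-\IIP{}, i.e.\ $(\Delta^2/2\sigma^2)$-\IIP{} in the notation of the paper.
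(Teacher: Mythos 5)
Your proposal follows the same route as the paper: the paper's Lemma 2.3 establishes precisely the identity $\dr{\alpha}{\mathcal{N}(\mu,\sigma^2)}{\mathcal{N}(\nu,\sigma^2)} = \alpha(\mu-\nu)^2/(2\sigma^2)$ by completing the square inside the integral $\int P^\alpha Q^{1-\alpha}$, and the proposition then follows immediately from $|q(x)-q(x')|\le\Delta$. Your argument and the paper's are essentially identical.
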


We remark that either inequality defining \IIP{} --- \eqref{eqn:IIP-Renyi} or \eqref{eqn:GDP-MGF} --- is exactly tight for the Gaussian mechanism for all values of $\alpha$. Thus the definition of \IIP{} seems tailored to the Gaussian mechanism.

\subsubsection{Basic Properties of \IIP} \label{sec:BasicProperties}
Our definition of \IIP{} satisfies the key basic properties of differential privacy. Foremost, these properties include smooth degradation under composition, and invariance under postprocessing:

\begin{lem}[Composition] \label{lem:Composition-intro}
Let $M : \mathcal{X}^n \to \mathcal{Y}$ and $M' : \mathcal{X}^n \to \mathcal{Z}$ be randomized algorithms. Suppose $M$ satisfies $\lin$-\IIP{} and $M'$ satisfies $\lin'$-\IIP{}. Define $M'' : \mathcal{X}^n \to \mathcal{Y} \times \mathcal{Z}$ by $M''(x) = (M(x),M'(x))$. Then $M''$ satisfies $(\lin+\lin')$-\IIP{}.
\end{lem}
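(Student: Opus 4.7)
The plan is to reduce the composition bound to the additivity of R\'enyi divergence on product distributions. Since $M$ and $M'$ are modeled as separate randomized algorithms, on any fixed input $x$ the joint output $M''(x)=(M(x),M'(x))$ has distribution equal to the product measure of $M(x)$ and $M'(x)$, and likewise for $x'$.

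First I would record (or invoke from the R\'enyi divergence preliminaries of Section \ref{sec:Renyi}) the product identity
\begin{equation*}
\dr{\alpha}{P_1\times P_2}{Q_1\times Q_2} \;=\; \dr{\alpha}{P_1}{Q_1} + \dr{\alpha}{P_2}{Q_2}
\end{equation*}
valid for every $\alpha\in(1,\infty)$. This is an immediate consequence of the definition $\dr{\alpha}{P}{Q}=\frac{1}{\alpha-1}\log \ex{Y\sim Q}{\left(\mathrm{d}P/\mathrm{d}Q\right)^{\alpha}(Y)}$: the Radon--Nikodym derivative of a product factors across the two coordinates, the expectation under a product measure factors as a product of expectations, and the logarithm converts the resulting product into a sum.

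Given this, I fix neighboring $x,x'\in\mathcal{X}^n$ and any $\alpha\in(1,\infty)$, and combine the additivity identity with the $\lin$-\IIP{} hypothesis on $M$ and the $\lin'$-\IIP{} hypothesis on $M'$:
\begin{equation*}
\dr{\alpha}{M''(x)}{M''(x')} \;=\; \dr{\alpha}{M(x)}{M(x')} + \dr{\alpha}{M'(x)}{M'(x')} \;\leq\; \lin\alpha + \lin'\alpha \;=\; (\lin+\lin')\alpha.
\end{equation*}
Since this holds for all $\alpha>1$ and all neighboring pairs $x,x'$, the definition gives $(\lin+\lin')$-\IIP{} for $M''$.

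I do not anticipate any serious obstacle: the argument reduces to one structural fact about R\'enyi divergence together with two applications of the hypothesis. The only subtlety worth flagging is that the product-measure step uses independence of the internal randomness of $M$ and $M'$; this is implicit in the formulation $M''(x)=(M(x),M'(x))$ as non-adaptive composition, and the adaptive case (where $M'$ may depend on the output of $M$) would instead require a chain-rule version of the additivity identity which I would defer to a separate, stronger composition statement.
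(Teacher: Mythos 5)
Your proof is correct and matches the paper's approach: Lemma \ref{lem:Composition-intro} is derived directly from the composition (product-additivity) property of R\'enyi divergence stated in Lemma \ref{lem:Renyi}, exactly as you do. Your remark about the adaptive case is also consistent with how the paper separately handles it via Lemma \ref{lem:CompositionPostprocessing}, which follows from the more general chain-rule form of the composition bound in Lemma \ref{lem:Renyi}.
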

\begin{lem}[Postprocessing] \label{lem:Postprocessing-intro}
Let $M : \mathcal{X}^n \to \mathcal{Y}$ and $f : \mathcal{Y} \to \mathcal{Z}$ be randomized algorithms. Suppose $M$ satisfies $\lin$-\IIP{}. Define $M' : \mathcal{X}^n \to \mathcal{Z}$ by $M'(x) = f(M(x))$. Then $M'$ satisfies $\lin$-\IIP{}.
\end{lem}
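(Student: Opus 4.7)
The natural approach is to derive the statement as an immediate consequence of the data-processing inequality for R\'enyi divergence: for any (possibly randomized) map $g$ and distributions $P, Q$ on the same space, $\dr{\alpha}{g(P)}{g(Q)} \leq \dr{\alpha}{P}{Q}$. Once this is in hand, we pick arbitrary neighboring $x, x' \in \mathcal{X}^n$, set $P = M(x)$ and $Q = M(x')$, and chain
\[
\dr{\alpha}{M'(x)}{M'(x')} \;=\; \dr{\alpha}{f(M(x))}{f(M(x'))} \;\leq\; \dr{\alpha}{M(x)}{M(x')} \;\leq\; \lin\alpha,
\]
for every $\alpha \in (1,\infty)$, which is exactly the definition of $\lin$-\IIP{} for $M'$.

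The only real work is therefore establishing the data-processing inequality in the generality needed here, namely for a randomized $f$ acting on an arbitrary output space $\mathcal{Y}$. I plan to handle this in two steps. First, for a \emph{deterministic} measurable $f$, the inequality is essentially built into the definition of R\'enyi divergence via pushforward measures and Jensen's inequality applied to the convex function $t \mapsto t^\alpha$ on the conditional expectations of the Radon-Nikodym derivative $\mathrm{d}P/\mathrm{d}Q$ with respect to the $\sigma$-algebra generated by $f$. Second, to extend to randomized $f$, I write $f$ as $f(y) = g(y, R)$ for a deterministic $g$ and an independent random seed $R$; then I invoke the fact that R\'enyi divergence of independent product distributions adds, so $\dr{\alpha}{(M(x), R)}{(M(x'), R)} = \dr{\alpha}{M(x)}{M(x')}$, and apply the deterministic case to $g$.

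I expect the main conceptual obstacle to be nothing more than being careful about measurability and absolute continuity when invoking data processing for general output spaces; the inequalities themselves are standard. Since the Section on R\'enyi divergence promised earlier in the paper is the natural home for these facts, in the actual write-up I would simply cite the data-processing inequality from that section and present the two-line chain above as the proof of the lemma.
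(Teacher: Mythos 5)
Your proposal is correct and follows essentially the same route as the paper: both reduce the lemma to the data-processing (postprocessing) inequality for R\'enyi divergence (the paper's Lemma \ref{lem:Renyi}), and then apply the two-line chain $\dr{\alpha}{f(M(x))}{f(M(x'))} \leq \dr{\alpha}{M(x)}{M(x')} \leq \lin\alpha$. The one minor difference is in how the randomized case of $f$ is dispatched: you propose absorbing the randomness into an independent seed $R$, using additivity of R\'enyi divergence over product distributions (so $\dr{\alpha}{(M(x),R)}{(M(x'),R)} = \dr{\alpha}{M(x)}{M(x')}$) and then applying the deterministic data-processing inequality; the paper instead invokes quasi-convexity of R\'enyi divergence, viewing $f(P)$ as a mixture over deterministic maps. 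Both routes are standard and correct, so this is a cosmetic rather than a substantive divergence.
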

These properties follow immediately from corresponding properties  of the R\'enyi divergence outlined in Lemma \ref{lem:Renyi}.

We remark that Dwork and Rothblum's definition of \mcdp{} is not closed under postprocessing; we provide a counterexample in Appendix \ref{app:pp-mcdp}. (However, an arbitrary amount of postprocessing can worsen the guarantees of \mcdp{} by at most constant factors.)

\subsubsection{Group Privacy} \label{sec:IntroGroupPrivacy}

A mechanism $M$ guarantees \emph{group privacy} if no small group of individuals has a significant effect on the outcome of a computation (whereas the definition of \IIP{} only refers to individuals, which are groups of size $1$). That is, group privacy for groups of size $k$ guarantees that, if $x$ and $x'$ are inputs differing on $k$ entries (rather than a single entry), then the outputs $M(x)$ and $M(x')$ are close. 

Dwork and Rothblum \cite[Theorem 4.1]{DworkR16} gave nearly tight bounds on the group privacy guarantees of concentrated differential privacy, showing that a $(\mu=\tau^2/2, \tau)$-concentrated differentially private mechanism affords $(k^2\mu \cdot (1 + o(1)), k\tau \cdot (1 + o(1)))$-concentrated differential privacy for groups of size $k = o(1/\tau)$. We are able to show a group privacy guarantee for \IIP{} that is exactly tight and works for a wider range of parameters:

\begin{prop} \label{prop:group-intro}
Let $M : \mathcal{X}^n \to \mathcal{Y}$ satisfy $\lin$-\IIP{}. Then $M$ guarantees $(k^2\lin)$-\IIP{} for groups of size $k$ --- i.e. for every $x, x' \in \mathcal{X}^n$ differing in up to $k$ entries and every $\alpha \in (1, \infty)$, we have
\[\dr{\alpha}{M(x)}{M(x')} \le (k^2\lin) \cdot \alpha.\]
\end{prop}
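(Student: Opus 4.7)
The approach is a hybrid argument executed through a carefully-tuned H\"older inequality. Pick a sequence $x = x_0, x_1, \ldots, x_k = x'$ in which each consecutive pair of databases differs on at most one entry, and let $P_i$ denote the output distribution $M(x_i)$. For $\alpha > 1$, rewriting R\'enyi divergence in its ``$Q$-form'' gives $(\alpha - 1)\dr{\alpha}{P_0}{P_k} = \log \int P_0^\alpha P_k^{1-\alpha}$, so it suffices to upper bound this integral by $\exp((\alpha-1)k^2\lin \alpha)$.

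The key idea is to decompose the integrand as a telescoping product $\prod_{i=1}^k P_{i-1}^{b_{i-1}} P_i^{-b_i}$, where $\alpha = b_0 > b_1 > \cdots > b_k = \alpha - 1$ is a decreasing sequence to be chosen. Applying generalized H\"older with exponents $p_i = 1/(b_{i-1} - b_i)$, the telescoping ensures $\sum_i 1/p_i = b_0 - b_k = 1$. Moreover the identity $b_{i-1} p_i - b_i p_i = 1$ converts each H\"older factor $\int P_{i-1}^{b_{i-1} p_i} P_i^{-b_i p_i}$ into $\ex{P_i}{(P_{i-1}/P_i)^{b_{i-1} p_i}}$, to which $\lin$-\IIP{} on the neighboring pair $x_{i-1}, x_i$ applies directly, yielding the bound $\exp(\lin \cdot b_{i-1} b_i / (b_{i-1} - b_i))$ on the $i$-th factor. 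This reduces the problem to minimizing $\sum_i b_{i-1} b_i / (b_{i-1} - b_i)$ over monotone interpolations with the prescribed endpoints.

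A variational calculation reveals that the optimum is the harmonic interpolation $b_i = \alpha(\alpha - 1)k / (i + (\alpha - 1)k)$; under this choice a direct computation shows every summand equals exactly $\alpha(\alpha - 1)k$, so the total is $k^2 \alpha(\alpha - 1)$. Dividing through by $\alpha - 1$ yields $\dr{\alpha}{P_0}{P_k} \le k^2 \lin \alpha$, matching the statement. The main obstacle is discovering this harmonic interpolation: more naive choices (uniform spacing of the $b_i$, or an inductive H\"older / Cauchy--Schwarz chain on pairs) each lose a multiplicative constant or pick up an additive term of order $\lin (k^2-1)/(\alpha-1)$ that blows up as $\alpha \to 1^+$, and cannot recover the exact $k^2$ constant for all $\alpha$.
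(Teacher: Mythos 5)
Your proof is correct, and it takes a genuinely different route from the paper's. The paper first establishes a ``triangle-like'' inequality for R\'enyi divergence via a single two-term H\"older application,
\[
\dr{\alpha}{P}{Q} \leq \frac{k\alpha}{k\alpha-1}\,\dr{\frac{k\alpha-1}{k-1}}{P}{R} + \dr{k\alpha}{R}{Q},
\]
and then runs an induction on $k$, peeling off one index of the hybrid chain at a time. You instead apply a single $k$-fold generalized H\"older to the telescoping factorization $P_0^\alpha P_k^{1-\alpha} = \prod_{i=1}^k P_{i-1}^{b_{i-1}} P_i^{-b_i}$ with exponents $p_i = 1/(b_{i-1}-b_i)$, and then exhibit the harmonic interpolation $b_i = \alpha(\alpha-1)k/(i+(\alpha-1)k)$ under which every factor contributes exactly $\exp(\lin\,\alpha(\alpha-1)k)$. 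Both arguments are H\"older at bottom, and unwinding the paper's induction produces an equivalent nested exponent schedule, but your version is more direct: it displays the optimal interpolation in closed form and verifies the constant $k^2$ by a one-line telescoping sum, with no inductive hypothesis to carry. What the paper's inductive organization buys in exchange --- and what your argument deliberately omits, since the statement here has $\con=0$ --- is that it extends painlessly to $(\con,\lin)$-\IIP{}, yielding the additional $\con \cdot k\sum_{i=1}^k 1/i$ term of Proposition~\ref{prop:GroupPrivacy}. One small point worth recording in a full write-up: generalized H\"older requires each exponent $p_i \geq 1$, i.e.\ $b_{i-1}-b_i \leq 1$; under your interpolation the gaps $b_{i-1}-b_i = \alpha(\alpha-1)k/\bigl((i-1+(\alpha-1)k)(i+(\alpha-1)k)\bigr)$ decrease in $i$, and the largest one is $b_0 - b_1 = \alpha/(1+(\alpha-1)k) \leq 1$ precisely because $k\geq 1$, so the application is legitimate. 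Likewise each inner R\'enyi order $b_{i-1}p_i = b_{i-1}/(b_{i-1}-b_i) > 1$ since $b_i > b_k = \alpha - 1 > 0$, so $\lin$-\IIP{} applies to every factor.
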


In particular, this bound is achieved (simultaneously for all values $\alpha$) by the Gaussian mechanism. Our proof is also simpler than that of Dwork and Rothblum; see Section \ref{sec:GroupPrivacy}.

\subsubsection{Lower Bounds}

The strong group privacy guarantees of \IIP{} yield, as an unfortunate consequence, strong lower bounds as well. We show that, as with pure differential privacy, \IIP{} is susceptible to  information-based lower bounds, as well as to so-called packing arguments \cite{HardtT10,McGregorMPRTV10, De12}:

\begin{thm} \label{thm:LB-intro}
Let $M : \mathcal{X}^n \to \mathcal{Y}$ satisfy $\lin$-\IIP{}. Let $X$ be a random variable on $\mathcal{X}^n$. Then $$I\left(X;M(X)\right) \leq \lin \cdot n^2,$$ where $I(\cdot;\cdot)$ denotes the mutual information between the random variables (in nats, rather than bits). Furthermore, if the entries of $X$ are independent, then $I(X;M(X)) \leq \lin \cdot n$.
\end{thm}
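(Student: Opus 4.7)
My strategy is to translate the $\lin$-\IIP{} hypothesis into a bound on KL-divergence, then plug it into the standard variational identity for mutual information. Since the R\'enyi divergence $\dr{\alpha}{P}{Q}$ is continuous at $\alpha=1$ with limit equal to the KL-divergence, the hypothesis $\dr{\alpha}{M(x)}{M(x')} \le \lin\alpha$ yields $D_{\mathrm{KL}}(M(x)\|M(x')) \le \lin$ for neighboring $x,x'$. Applying the same $\alpha \to 1^+$ passage to the group-privacy guarantee of Proposition~\ref{prop:group-intro} gives the more general bound $D_{\mathrm{KL}}(M(x)\|M(x')) \le k^2\lin$ whenever $x,x' \in \mathcal{X}^n$ differ in at most $k$ coordinates.

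For the first (general) bound, let $X'$ be an independent copy of $X$. Using the identity
\[ I(X;M(X)) = \mathbb{E}_{X}\!\left[D_{\mathrm{KL}}\!\left(M(X) \,\middle\|\, \mathbb{E}_{X'}\!\left[M(X')\right]\right)\right] \]
together with convexity of $D_{\mathrm{KL}}(P\|\cdot)$ in its second argument (equivalently, Jensen's inequality on $-\log$), I obtain
\[ I(X;M(X)) \le \mathbb{E}_{X,X'}\!\left[D_{\mathrm{KL}}(M(X)\|M(X'))\right]. \]
Since $X$ and $X'$ disagree on at most $n$ coordinates, the KL-form of group privacy bounds the integrand by $n^2\lin$ pointwise, giving $I(X;M(X)) \le \lin \cdot n^2$.

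For the independent-entries bound, I would use a conditional chain rule. Write $X_{-i}$ for the coordinates of $X$ other than the $i$th. Independence gives $H(X) = \sum_i H(X_i)$; combined with the chain rule for entropy and the fact that conditioning never increases entropy, this yields $H(X \mid M(X)) \ge \sum_i H(X_i \mid M(X), X_{-i})$, whence
\[ I(X;M(X)) \le \sum_{i=1}^{n} I(X_i; M(X) \mid X_{-i}). \]
For each fixed $x_{-i}$, the single-argument mechanism $x_i \mapsto M(x_i, x_{-i})$ is itself $\lin$-\IIP{} (two inputs of this mechanism correspond to $\mathcal{X}^n$-inputs differing in exactly one coordinate), so the first-bound argument applied with $n=1$ gives $I(X_i; M(X) \mid X_{-i}=x_{-i}) \le \lin$. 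Averaging over $X_{-i}$ and summing over $i$ yields $I(X;M(X)) \le \lin \cdot n$.

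The main subtleties I expect are: (i) the $\alpha \to 1^+$ limit used to extract the KL bound from the R\'enyi bound, which is routine but deserves a one-line justification; and (ii) the chain-rule inequality $I(X;M(X)) \le \sum_i I(X_i;M(X) \mid X_{-i})$ used in the independent case, which is \emph{not} the usual chain rule and depends crucially on independence of the $X_i$ via the entropy manipulation sketched above.
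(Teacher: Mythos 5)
Your proof is correct, and the first (general) bound is proved by essentially the same argument the paper uses in Proposition~\ref{prop:MutualInformation}: pass the group-privacy R\'enyi bound to a KL bound at $\alpha\to 1^+$ (justified here by monotonicity, since $\dr{1}{P}{Q}\le\dr{\alpha}{P}{Q}\le\lin\alpha$ for all $\alpha>1$), write $I(X;M(X))=\ex{x\leftarrow X}{\dr{1}{M(x)}{M(X)}}$, and use convexity of KL in the second argument to replace the mixture $M(X)$ by an averaged comparison against $M(x')$.

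For the independent-entries bound your route differs slightly from Lemma~\ref{lem:MutualInformation}. The paper uses the exact chain rule
$I(X;M(X))=\sum_i I(X_i;M(X)\mid X_{1\cdots i-1})$,
rewrites each term as an expected KL divergence $\dr{1}{M(x,y,X_{i+1\cdots m})}{M(x,X_{i\cdots m})}$, and bounds that by $\con+\lin$ using convexity of KL in \emph{both} arguments. You instead condition on all of $X_{-i}$ and invoke the sub-additivity inequality
$I(X;M(X))\le\sum_i I(X_i;M(X)\mid X_{-i})$,
which you derive from $H(X)=\sum_i H(X_i)$ together with $H(X\mid M(X))=\sum_i H(X_i\mid M(X),X_{1\cdots i-1})\ge\sum_i H(X_i\mid M(X),X_{-i})$ (a Han-type inequality, and it genuinely needs independence since the first identity uses $H(X_i\mid X_{-i})=H(X_i)$). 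The payoff is that once $X_{-i}$ is fully fixed, $x_i\mapsto M(x_i,x_{-i})$ is a bona fide single-input $\lin$-\IIP{} mechanism, so you can invoke the first part with $n=1$ and avoid the paper's separate two-sided convexity argument for the mixed-coordinate KL term. Both routes are valid; yours localizes the convexity step entirely inside the $n=1$ case, which is arguably a cleaner modular decomposition, at the cost of needing the (slightly less standard) conditional sub-additivity inequality rather than the plain chain rule.
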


Theorem \ref{thm:LB-intro} yields strong lower bounds for \IIP{} mechanisms, as we can construct distributions $X$ such that, for any accurate mechanism $M$, $M(X)$ reveals a lot of information about $X$ (i.e.~$I(X;M(X))$ is large for any accurate $M$).

In particular, we obtain a strong separation between approximate differential privacy and \IIP{}. For example, we can show that releasing an accurate approximate histogram (or, equivalently, accurately answering all point queries) on a data domain of size $k$ requires an input with at least $n=\Theta(\sqrt{\log k})$ entries to satisfy \IIP{}. In contrast, under approximate differential privacy, $n$ can be \emph{independent} of the domain size $k$ \cite{BeimelNS13}!  In particular, our lower bounds show that ``stability-based'' techniques (such as those in the propose-test-release framework \cite{DworkL09}) are not compatible with \IIP{}. 

Our lower bound exploits the strong group privacy guarantee afforded by \IIP{}. Group privacy has been used to prove tight lower bounds for pure differential privacy \cite{HardtT10,De12} and approximate differential privacy \cite{SteinkeU15b}. These results highlight the fact that group privacy is often the limiting factor for private data analysis. For $(\varepsilon,\delta)$-differential privacy, group privacy becomes vacuous for groups of size $k=\Theta(\log(1/\delta)/\varepsilon)$. Indeed, stability-based techniques exploit precisely this breakdown in group privacy.

As a result of this strong lower bound, we show that any mechanism for answering statistical queries that satisfies \IIP{} can be converted into a mechanism satisfying pure differential privacy with only a quadratic blowup in its sample complexity. More precisely, the following theorem illustrates a more general result we prove in Section \ref{sec:CDPvsPDP}.

\begin{thm} Let $n \in \mathbb{N}$ and $\alpha \geq 1/n$ be arbitrary. Set $\eps=\alpha$ and $\lin=\alpha^2$.
Let $q : \mathcal{X} \to [0,1]^k$ be an arbitrary family of statistical queries.
Suppose $M : \mathcal{X}^n \to [0,1]^k$ satisfies $\lin$-\IIP{} and $$\ex{M}{\|M(x)-q(x)\|_\infty} \leq \alpha$$ for all $x \in \mathcal{X}^n$. 
Then there exists $M' : \mathcal{X}^{n'} \to [0,1]^k$ for $n'=5n^2$ satisfying $\varepsilon$-differential privacy and $$\ex{M'}{\|M'(x)-q(x)\|_\infty} \leq 10\alpha$$ for all $x \in \mathcal{X}^{n'}$.
\end{thm}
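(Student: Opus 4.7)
The strategy is to combine the mutual-information bound of Theorem~\ref{thm:LB-intro} with the classical exponential mechanism: the existence of an $\alpha$-accurate $\lin$-\IIP{} mechanism $M$ forces the query family $q$ to have low ``answer complexity,'' and this is exactly what makes pure $\varepsilon$-differential privacy feasible on $n' = 5n^2$ samples. Concretely, I first bound the $\ell_\infty$ $\Theta(\alpha)$-packing number of $\{q(y) : y \in \mathcal{X}^{n'}\}$ by $\exp(O(\lin n^2))$. Given candidate packing elements $y_1, \dots, y_m \in \mathcal{X}^{n'}$ with $\|q(y_i) - q(y_j)\|_\infty > \Theta(\alpha)$, I define an auxiliary mechanism $\widetilde M : \mathcal{X}^{n'} \to [0,1]^k$ that runs $M$ on a uniformly random $n$-subsample of its input. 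A data-processing argument (conditioning on the subsample $S$ and applying Theorem~\ref{thm:LB-intro} to $M$ viewed on $\mathcal{X}^n$) shows $I(Y;\widetilde M(Y)) \leq \lin n^2$ for any distribution $Y$ on $\mathcal{X}^{n'}$. Taking $Y$ uniform on the packing and combining Markov's inequality on the accuracy of $M$ (plus the mild concentration of $q(Y|_S)$ around $q(Y)$) with nearest-neighbour decoding and Fano's inequality then yields $\log m \leq O(\lin n^2)$.

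Next, I take $R$ to be such a maximal packing, so by maximality $R$ is automatically an $O(\alpha)$-cover of $\{q(y) : y \in \mathcal{X}^{n'}\}$ with $|R| \leq \exp(O(\lin n^2))$. The mechanism $M'$ is then the exponential mechanism over $R$ with utility score $s(y, r) = -\|r - q(y)\|_\infty$. Since $q$ is a statistical query this score has sensitivity $1/n'$ in the input, so running with privacy parameter $\varepsilon = \alpha$ gives pure $\varepsilon$-differential privacy. The standard utility bound for the exponential mechanism yields expected suboptimality $O(\log|R|/(\varepsilon n')) = O(\lin n^2/(\alpha \cdot n^2)) = O(\alpha)$ relative to the best element of $R$, and adding the $O(\alpha)$ cover error by the triangle inequality bounds the total expected $\ell_\infty$ error by $10\alpha$ for suitable absolute constants.

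The main obstacle is the first step: Theorem~\ref{thm:LB-intro} controls $M$ as a map out of $\mathcal{X}^n$, while the packing lives in $\mathcal{X}^{n'}$. The subsampling construction of $\widetilde M$ preserves the $\lin n^2$ information bound, but it introduces an $O(\sqrt{\log k/n})$ accuracy slack from the deviation $q(Y|_S) - q(Y)$, which must be controlled in order for Fano's decoding argument to go through and for the final error to stay below $10\alpha$. Handling this slack — either by coarsening the packing separation and tracking constants, or by using the fact that the very existence of an $\alpha$-accurate $\lin$-\IIP{} mechanism on $n$ rows with $\alpha \geq 1/n$ already implicitly constrains $\log k$ relative to $n\alpha^2$ — is the technical crux of the proof.
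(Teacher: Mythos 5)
Your high-level plan — derive a small packing/cover from the information bound, then run the exponential mechanism over it — is exactly the paper's route, and you correctly spotlight the real difficulty: $M$ is a mechanism on $\mathcal{X}^n$ while $M'$ must live on $\mathcal{X}^{n'}$. But your proposed way of bridging the two input sizes has a genuine gap.

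You want to pack $\{q(y) : y \in \mathcal{X}^{n'}\}$ directly and control it via a subsampled mechanism $\widetilde M(y)=M(y|_S)$. The mutual-information bound $I(Y;\widetilde M(Y)) \le \lin n^2$ does go through by a data-processing/conditioning argument, but the accuracy of $\widetilde M$ degrades to $\alpha + O(\sqrt{\log k / n})$ because $\|q(y|_S)-q(y)\|_\infty$ concentrates only at the Hoeffding rate. Neither of your two escape hatches closes this. Coarsening the packing separation to $\Theta(\alpha + \sqrt{\log k/n})$ rescues Fano, but then the cover error of $M'$ inherits the same $\sqrt{\log k/n}$ term, which is not $O(\alpha)$. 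And the existence of an $\alpha$-accurate $\lin$-zCDP mechanism on $n$ rows does \emph{not} force $\log k = O(n\alpha^2)$: e.g.\ for point queries the paper's own Gaussian-mechanism analysis gives $n = \Theta(\sqrt{\log k}/\alpha)$, so $\log k = \Theta(n^2\alpha^2)$ and $\sqrt{\log k/n} = \Theta(\alpha\sqrt{n}) \gg \alpha$. So the slack is unbounded relative to $\alpha$.

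The paper's fix (Theorem~\ref{thm:CDPtoPDP} via Lemmas~\ref{lem:PackNet}, \ref{lem:PackLB}, \ref{lem:ExpMech}, and crucially Lemma~\ref{lem:SubSampAcc}) is to build the packing $T$ over the \emph{small} answer set $\{q(\hat x): \hat x \in \mathcal{X}^n\}$, so Proposition~\ref{prop:MutualInformation} and Fano give $\tfrac12\log|T| - \log 2 \le \lin n^2$ directly with no subsampling, and then show separately that this small set is already a $(2\alpha + \sqrt{2(\con+\lin)})$-cover of $\{q(x): x\in\mathcal{X}^{n'}\}$. That covering statement is Lemma~\ref{lem:SubSampAcc}, proved via Proposition~\ref{prop:Generalize}: because $M$ is zCDP \emph{and} accurate, a privacy-implies-generalization argument bounds $\ex{\hat x \sim \mathcal D^n}{\|q(\hat x) - q(\mathcal D)\|}$ by $2\alpha + \sqrt{2(\con+\lin)}$ \emph{independently of $k$} — replacing the naive $\sqrt{\log k/n}$ concentration bound with one that only depends on $\alpha$, $\con$, $\lin$. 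This is precisely the lemma your sketch is missing; with it, the net error is $\le 4\alpha + 2\alpha + \sqrt 2\alpha$ and the exponential-mechanism suboptimality is $\le 2\alpha$, giving $10\alpha$ as claimed.
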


For some classes of queries, this reduction is essentially tight. For example, for $k$ one-way marginals, the Gaussian mechanism achieves sample complexity $n=\Theta(\sqrt{k})$ subject to \IIP{}, whereas the Laplace mechanism achieves sample complexity $n=\Theta(k)$ subject to pure differential privacy, which is known to be optimal.

For more details, see Sections \ref{sec:LowerBounds} and \ref{sec:CDPvsPDP}.

\subsubsection{Approximate \IIP{}}

To circumvent these strong lower bounds for \IIP{}, we consider a relaxation of \IIP{} in the spirit of approximate differential privacy that permits a small probability $\delta$ of (catastrophic) failure:

\begin{defn}[Approximate Zero-Concentrated Differential Privacy (Approximate \IIP{})]
A randomized mechanism $M : \mathcal{X}^n \to \mathcal{Y}$ is $\delta$-approximately $(\con,\lin)$-\IIP{} if, for all $x,x' \in \mathcal{X}^n$ differing on a single entry, there exist events $E$ (depending on $M(x)$) and $E'$ (depending on $M(x')$) such that $\pr{}{E} \geq 1-\delta$, $\pr{}{E'} \geq 1-\delta$, and $$\forall \alpha \in (1,\infty) \qquad \dr{\alpha}{M(x)|_{E}}{M(x')|_{E'}} \leq \con + \lin \cdot \alpha \qquad \wedge \qquad \dr{\alpha}{M(x')|_{E'}}{M(x)|_{E}} \leq \con + \lin \cdot \alpha,$$ where $M(x)|_{E}$ denotes the distribution of $M(x)$ conditioned on the event $E$. We further define $\delta$-approximate $\lin$-\IIP{} to be $\delta$-approximate $(0,\lin)$-\IIP{}.
\end{defn}

In particular, setting $\delta=0$ gives the original definition of \IIP{}. However, this definition unifies \IIP{} with approximate differential privacy:

\begin{prop}
If $M$ satisfies $(\varepsilon,\delta)$-differential privacy, then $M$ satisfies $\delta$-approximate $\frac12 \varepsilon^2$-\IIP{}.
\end{prop}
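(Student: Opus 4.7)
The plan is to produce, for any neighboring $x,x'\in \mathcal{X}^n$, events $E, E'$ with $\pr{}{E},\pr{}{E'}\ge 1-\delta$ such that the conditional distributions $M(x)|_E$ and $M(x')|_{E'}$ satisfy pure $\eps$-differential privacy, and then invoke Proposition~\ref{prop:PDPtoCDP-intro} to upgrade this to $\tfrac12\eps^2$-\IIP{} on those conditionals. Since $\con=0$ and $\lin=\tfrac12\eps^2$ is exactly the target, this is $\delta$-approximate $\tfrac12\eps^2$-\IIP{}.

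Write $a(y)=\pr{}{M(x)=y}$ and $b(y)=\pr{}{M(x')=y}$ (Radon--Nikodym derivatives with respect to a common dominating measure). Augment the probability space with an independent uniform $U\sim\mathrm{Unif}[0,1]$ and define the randomized events
\[ E \;=\; \{U \le \min(1,\,e^\eps b(M(x))/a(M(x)))\}, \qquad E' \;=\; \{U \le \min(1,\,e^\eps a(M(x'))/b(M(x')))\}. \]
The min-truncation is designed to ``cap'' the privacy loss at exactly $\eps$ on each side.

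To verify $\pr{}{E}\ge 1-\delta$, I would compute $\pr{}{E}=\int \min\{a(y),e^\eps b(y)\}\,dy = 1-\int \max\{0,\,a(y)-e^\eps b(y)\}\,dy$. The last integral equals $\pr{}{M(x)\in S^\ast}-e^\eps \pr{}{M(x')\in S^\ast}$ for $S^\ast = \{y:a(y)>e^\eps b(y)\}$, which is at most $\delta$ by the $(\eps,\delta)$-DP condition applied to $S^\ast$. The bound for $E'$ is symmetric. The two probabilities $\pr{}{E}$ and $\pr{}{E'}$ need not be equal, so I would further subsample the larger event independently of $y$ to force $\pr{}{E}=\pr{}{E'}=:q\ge 1-\delta$; this leaves the conditional distributions unchanged. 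The conditional densities then become $\min\{a,e^\eps b\}/q$ and $\min\{b,e^\eps a\}/q$. A short case split on the three (mutually exclusive) possibilities $a(y)>e^\eps b(y)$, $b(y)>e^\eps a(y)$, or neither shows that both ratios of these conditional densities are bounded by $e^\eps$ pointwise; the $1/q$ factors cancel. Hence $M(x)|_E$ and $M(x')|_{E'}$ are pure $\eps$-DP indistinguishable, and Proposition~\ref{prop:PDPtoCDP-intro} gives $\dr{\alpha}{M(x)|_E}{M(x')|_{E'}}\le \tfrac12\eps^2\alpha$ (and the reverse) for every $\alpha\in(1,\infty)$.

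The main obstacle is obtaining the \emph{exact} constant $\tfrac12$ in front of $\eps^2$ rather than $\tfrac12(\eps+O(\delta))^2$. Naive attempts---e.g.\ conditioning on the deterministic ``good'' set $\{y:|\log(a/b)|\le \eps\}$---cannot be shown to have probability $\ge 1-\delta$ from $(\eps,\delta)$-DP alone (only the \emph{difference} $\pr{}{M(x)\in S^\ast}-e^\eps\pr{}{M(x')\in S^\ast}$ is $\le \delta$), and cruder truncations leave a multiplicative slack $\pr{}{E'}/\pr{}{E}$ in the conditional density ratios. The randomized min-truncation together with the subsampling-to-equalize step is precisely what avoids both losses and yields pure $\eps$-DP on the conditionals, so that Proposition~\ref{prop:PDPtoCDP-intro} can be applied with no slack.
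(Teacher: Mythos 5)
Your approach---truncating the privacy loss on each side separately via an auxiliary uniform random variable, subsampling to equalize the event probabilities, and then invoking Proposition~\ref{prop:PDPtoCDP-intro} on the conditional distributions---is genuinely different from the paper's proof, which applies the Kairouz--Oh--Viswanath / Murtagh--Vadhan characterization (Lemma~\ref{lem:KOV-MV}): every $(\eps,\delta)$-DP pair $M(x_0),M(x_1)$ is a single postprocessing $T$ of the dominant pair $\tilde{M}_{\eps,\delta}(0),\tilde{M}_{\eps,\delta}(1)$, so defining $E_b$ as the $\bot$-event of $\tilde{M}_{\eps,\delta}(b)$ gives $\pr{}{E_b}=1-\delta$ exactly, with $M(x_b)|_{E_b}=T(\tilde{M}_{\eps,0}(b))$ forming an $\eps$-DP pair by postprocessing.

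There is, however, a genuine gap in your argument. The subsampling that equalizes $\pr{}{E}$ and $\pr{}{E'}$ is independent of $y$, so---as you yourself note---it leaves both conditional distributions unchanged. But that is precisely why it cannot make the normalizers cancel. The conditional density of $M(x)|_E$ is $\min\{a,e^\eps b\}/p_E$ with $p_E=\int\min\{a,e^\eps b\}$, and that of $M(x')|_{E'}$ is $\min\{b,e^\eps a\}/p_{E'}$ with $p_{E'}=\int\min\{b,e^\eps a\}$; these denominators are fixed by the truncation alone and are unaffected by any $y$-independent subsampling. In general $p_E\neq p_{E'}$, and the pointwise ratio of the conditional densities is $\frac{p_{E'}}{p_E}\cdot\frac{\min\{a,e^\eps b\}}{\min\{b,e^\eps a\}}$, whose absolute log can exceed $\eps$ by up to $\log\frac{1}{1-\delta}$. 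Concretely, take $\mathcal{Y}=\{1,2\}$, $a=(0.8,0.2)$, $b=(0.3,0.7)$, $\eps=\log 2$, $\delta=0.3$ (this pair is $(\log 2,\,0.3)$-indistinguishable). Then $p_E=0.8$, $p_{E'}=0.7$, the conditional densities are $(3/4,1/4)$ and $(3/7,4/7)$, and at $y=2$ the density ratio is $(4/7)/(1/4)=16/7>2=e^\eps$. So the conditionals are not $\eps$-DP, and Proposition~\ref{prop:PDPtoCDP-intro} cannot be applied without exactly the $O(\delta)$ slack your closing paragraph aims to avoid. The missing ingredient is a \emph{joint} coupling of $M(x)$ and $M(x')$ that aligns the failure events on the two sides so the normalizers coincide by construction; this is what Lemma~\ref{lem:KOV-MV} supplies, and it does not follow from two independent per-side truncations.
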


Approximate \IIP{} retains most of the desirable properties of \IIP{}, but allows us to incorporate stability-based techniques and bypass the above lower bounds. This also presents a unified tool to analyse a composition of \IIP{} with approximate differential privacy; see Section \ref{sec:ApproxCDP}.

\subsection{Related Work}

Our work builds on the aforementioned prior work of Dwork and Rothblum \cite{DworkR16}.\footnote{Although Dwork and Rothblum's work only appeared publicly in March 2016, they shared a preliminary draft of their paper with us before we commenced this work. As such, our ideas are heavily inspired by theirs.} We view our definition of concentrated differential privacy as being ``morally equivalent'' to their definition of concentrated differential privacy, in the sense that both definitions formalize the same concept.\footnote{We refer to our definition as ``zero-concentrated differential privacy'' (\IIP{}) and their definition as ``mean-concentrated differential privacy'' (\mcdp{}). We use ``concentrated differential privacy'' (\cdp{}) to refer to the underlying \emph{concept} formalized by both definitions.} (The formal relationship between the two definitions is discussed in Section \ref{sec:CDP-DR}.) 
However, the definition of \IIP{} generally seems to be easier to work with than that of \mcdp{}. In particular, our formulation in terms of R\'{e}nyi divergence simplifies many analyses.

Dwork and Rothblum prove several results about concentrated differential privacy that are similar to ours. Namely, they prove analogous properties of \mcdp{} as we prove for \IIP{} (cf.~Sections \ref{sec:GDP-DP-relationship}, \ref{sec:Gaussian-intro}, \ref{sec:BasicProperties}, and \ref{sec:IntroGroupPrivacy}). However, as noted, some of their bounds are weaker than ours; also, they do not explore lower bounds.

Several of the ideas underlying concentrated differential privacy are implicit in earlier works. In particular, the proof of the advanced composition theorem of Dwork, Rothblum, and Vadhan \cite{DworkRV10} essentially uses the ideas of concentrated differential privacy. Their proof contains analogs of Propositions \ref{lem:Composition-intro}, \ref{prop:CDPtoDP-intro}, and \ref{prop:PDPtoCDP-intro},. 

We also remark that Tardos \cite{Tardos03} used R\'{e}nyi divergence to prove lower bounds for cryptographic objects called \emph{fingerprinting codes}. Fingerprinting codes turn out to be closely related to differential privacy \cite{Ullman13,BunUV14,SteinkeU14}, and Tardos' lower bound can be (loosely) viewed as a kind of privacy-preserving algorithm.

\subsection{Further Work}

We believe that concentrated differential privacy is a useful tool for analysing private computations, as it provides both simpler and tighter bounds. We hope that \cdp{} will be prove useful in both the theory and  practice of differential privacy.

Furthermore, our lower bounds show that \cdp{} can really be a much more stringent condition than approximate differential privacy. Thus \cdp{} defines a ``subclass'' of all $(\eps, \delta)$-differentially private algorithms. This subclass includes most differentially private algorithms in the literature, but not all --- the most notable exceptions being algorithms that use the propose-test-release approach \cite{DworkL09} to exploit low local sensitivity.

This ``\cdp{} subclass'' warrants further exploration. In particular, is there a ``complete'' mechanism for this class of algorithms, in the same sense that the exponential mechanism \cite{McSherryT07,BlumLR08} is complete for pure differential privacy? Can we obtain a simple characterization of the sample complexity needed to satisfy \cdp{}? The ability to prove stronger and simpler lower bounds for \cdp{} than for approximate DP may be useful for showing the limitations of certain algorithmic paradigms. For example, any differentially private algorithm that only uses the Laplace mechanism, the exponential mechanism, the Gaussian mechanism, and the ``sparse vector'' technique, along with composition and postprocessing will be subject to the lower bounds for \cdp{}.

There is also room to examine how to interpret the \IIP{} privacy guarantee. In particular, we leave it as an open question to understand the extent to which $\lin$-\IIP{} provides a stronger privacy guarantee than the implied $(\varepsilon,\delta)$-DP guarantees (cf.~Proposition \ref{prop:CDPtoDP-intro}). 

In general, much of the literature on differential privacy can be re-examined through the lens of \cdp{}, which may yield new insights and results.

\section{R\'enyi Divergence} \label{sec:Renyi}
Recall the definition of R\'enyi divergence:
\begin{defn}[{R\'enyi Divergence \cite[Equation (3.3)]{Renyi61}}]
Let $P$ and $Q$ be probability distributions on $\Omega$. For $\alpha \in  (1,\infty)$, we define the \emph{R\'enyi divergence of order $\alpha$ between $P$ and $Q$} as 
\begin{align*}
\dr{\alpha}{P}{Q} =& \frac{1}{\alpha-1} \log \left( \int_\Omega P(x)^\alpha Q(x)^{1-\alpha} \mathrm{d} x \right) \\=& \frac{1}{\alpha-1} \log \left( \ex{x \sim Q}{ \left(\frac{P(x)}{Q(x)}\right)^\alpha} \right) \\=& \frac{1}{\alpha-1} \log \left( \ex{x \sim P}{ \left(\frac{P(x)}{Q(x)}\right)^{\alpha-1}} \right),
\end{align*}
where $P(\cdot)$ and $Q(\cdot)$ are the probability mass/density functions of $P$ and $Q$ respectively or, more generally, $P(\cdot)/Q(\cdot)$ is the Radon-Nikodym derivative of $P$ with respect to $Q$.\footnote{If $P$ is not absolutely continuous with respect to $Q$ (i.e. it is not the case that $P \ll Q$), we define $\dr{\alpha}{P}{Q}=\infty$ for all $\alpha \in [1,\infty]$.}
We also define the KL-divergence $$\dr{1}{P}{Q} = \lim_{\alpha \to 1} \dr{\alpha}{P}{Q} = \int_\Omega P(x) \log \left(\frac{P(x)}{Q(x)}\right) \mathrm{d}x$$ and the max-divergence $$\dr{\infty}{P}{Q} = \lim_{\alpha \to \infty} \dr{\alpha}{P}{Q} = \sup_{x \in \Omega} \log \left(\frac{P(x)}{Q(x)}\right) .$$
\end{defn}

Alternatively, R\'enyi divergence can be defined in terms of the privacy loss (Definition \ref{defn:PrivLoss}) between $P$ and $Q$: $$e^{(\alpha-1)\dr{\alpha}{P}{Q}} = \ex{Z\sim\privloss{P}{Q}}{e^{(\alpha-1)Z}}$$ for all $\alpha \in (1,\infty)$. Moreover, $\dr{1}{P}{Q} = \ex{Z\sim\privloss{P}{Q}}{Z}$.

We record several useful and well-known properties of R\'enyi divergence. We refer the reader to \cite{vanErvenH14} for proofs and discussion of these (and many other) properties. Self-contained proofs are given in Appendix \ref{app:Renyi}.
\begin{lem} \label{lem:Renyi}
Let $P$ and $Q$ be probability distributions and $\alpha \in [1,\infty]$.
\begin{itemize}
\item \emph{Non-negativity:} $\dr{\alpha}{P}{Q} \geq 0$ with equality if and only if $P=Q$.

\item \emph{Composition:} Suppose $P$ and $Q$ are distributions on $\Omega \times \Theta$. Let $P'$ and $Q'$ denote the marginal distributions on $\Omega$ induced by $P$ and $Q$ respectively. For $x \in \Omega$, let $P'_x$ and $Q'_x$ denote the conditional distributions on $\Theta$ induced by $P$ and $Q$ respectively, where $x$ specifies the first coordinate. Then $$\dr{\alpha}{P'}{Q'} + \min_{x \in \Omega} \dr{\alpha}{P'_x}{Q'_x} \leq \dr{\alpha}{P}{Q} \leq \dr{\alpha}{P'}{Q'} + \max_{x \in \Omega} \dr{\alpha}{P'_x}{Q'_x}.$$
In particular if $P$ and $Q$ are product distributions, then the R\'enyi divergence between $P$ and $Q$ is just the sum of the R\'enyi divergences of the marginals.

\item \emph{Quasi-Convexity:} Let $P_0, P_1$ and $Q_0, Q_1$ be distributions on $\Omega$, and let $P = tP_0 + (1-t)P_1$ and $Q = tQ_0 + (1-t)Q_1$ for $t \in [0, 1]$. Then $\dr{\alpha}{P}{Q} \le \max\{\dr{\alpha}{P_0}{Q_0}, \dr{\alpha}{P_1}{Q_1}\}$. Moreover, KL divergence is convex: $$\dr{1}{P}{Q} \leq t \dr{1}{P_0}{Q_0}+(1-t)\dr{1}{P_1}{Q_1}.$$

\item \emph{Postprocessing:} Let $P$ and $Q$ be distributions on $\Omega$ and let $f : \Omega \to \Theta$ be a function. Let $f(P)$ and $f(Q)$ denote the distributions on $\Theta$ induced by applying $f$ to $P$ or $Q$ respectively. Then $\dr{\alpha}{f(P)}{f(Q)} \leq \dr{\alpha}{P}{Q}$.

Note that quasi-convexity allows us to extend this guarantee to the case where $f$ is a randomized mapping.

\item \emph{Monotonicity:} For $1 \leq \alpha \leq \alpha' \leq \infty$, $\dr{\alpha}{P}{Q} \leq \dr{\alpha'}{P}{Q}$.

\end{itemize}
\end{lem}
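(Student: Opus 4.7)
The overall plan is to reduce each bullet to a single application of Jensen's inequality against the identity $e^{(\alpha-1)\dr{\alpha}{P}{Q}} = \ex{x\sim Q}{(P(x)/Q(x))^\alpha}$. I would handle $\alpha\in(1,\infty)$ and $P\ll Q$ throughout, and extend to $\alpha\in\{1,\infty\}$ by passing to the limit in the definition; the monotonicity proved below guarantees these limits exist. Non-negativity is immediate: Jensen against the convex $t\mapsto t^\alpha$ gives $\ex{Q}{(P/Q)^\alpha}\ge 1$, with equality only when $P/Q$ is $Q$-a.s.\ constant, i.e., $P=Q$. For monotonicity with $1<\alpha\le\alpha'$, applying Jensen with the convex $\phi(t)=t^{(\alpha'-1)/(\alpha-1)}$ to $(P/Q)^{\alpha-1}$ under $x\sim P$ yields $(\ex{P}{(P/Q)^{\alpha-1}})^{(\alpha'-1)/(\alpha-1)}\le\ex{P}{(P/Q)^{\alpha'-1}}$, which after taking $\log$ and dividing by $\alpha'-1$ is exactly $\dr{\alpha}{P}{Q}\le\dr{\alpha'}{P}{Q}$.

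For composition I would factor $P(x,y)=P'(x)P'_x(y)$ and $Q(x,y)=Q'(x)Q'_x(y)$ and compute
\[
e^{(\alpha-1)\dr{\alpha}{P}{Q}} \;=\; \ex{x\sim Q'}{\left(\tfrac{P'(x)}{Q'(x)}\right)^{\!\alpha}\!\cdot e^{(\alpha-1)\dr{\alpha}{P'_x}{Q'_x}}}.
\]
Bounding the inner exponential by $e^{(\alpha-1)\max_x\dr{\alpha}{P'_x}{Q'_x}}$ (respectively by the min) and pulling it outside the expectation yields the upper (respectively lower) composition inequality after taking logs. The product-distribution case is the special instance where the inner divergence does not depend on $x$, so both inequalities collapse to equality.

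For postprocessing I would treat deterministic $f$ first: write $f(P)(y)/f(Q)(y)$ as the conditional expectation of $P(x)/Q(x)$ under $x\sim Q\mid f(x)=y$, apply Jensen with $t^\alpha$ inside, and integrate against $f(Q)$; the $y$-integration collapses back to $\ex{Q}{(P/Q)^\alpha}$, yielding $\dr{\alpha}{f(P)}{f(Q)}\le\dr{\alpha}{P}{Q}$. A randomized $f$ factors as $(x,r)\mapsto g(x,r)$ with $r\sim\pi$ independent of $x$; composition applied to $P\times\pi$ versus $Q\times\pi$ (the equal $r$-marginals contribute $0$) followed by deterministic postprocessing by $g$ then suffices. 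Quasi-convexity falls out by chaining composition and postprocessing on the joint pair $(B,X)$ with $B\in\{0,1\}$ identically distributed under $P^*,Q^*$ and $X\mid B=b$ distributed as $P_b$ or $Q_b$: composition bounds $\dr{\alpha}{P^*}{Q^*}\le\max_b\dr{\alpha}{P_b}{Q_b}$ (the $B$-marginal contributing $0$), and forgetting $B$ gives $\dr{\alpha}{P}{Q}\le\dr{\alpha}{P^*}{Q^*}$. Convexity of KL sharpens this because the composition formula is an \emph{equality} at $\alpha=1$ (by the log-sum inequality), so the same chain replaces $\max_b$ by the weighted average $t\dr{1}{P_0}{Q_0}+(1-t)\dr{1}{P_1}{Q_1}$.

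The only step I expect to require any care is cleanly handling the boundary cases $\alpha\in\{1,\infty\}$ — in particular, verifying the composition identity for KL and the supremum form for max-divergence, and justifying the interchange of limit and integral when extending the Jensen-based arguments to $\alpha\to 1$ and $\alpha\to\infty$. Everything else reduces to one application of Jensen apiece.
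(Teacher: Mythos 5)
Your non-negativity, composition, monotonicity, and deterministic postprocessing arguments match the paper's proofs in Appendix B.1 essentially line for line: all are one Jensen application with the same choices of convex function ($t^\alpha$ for non-negativity, $t^{(\alpha'-1)/(\alpha-1)}$ for monotonicity, $t^\alpha$ inside the conditional-expectation decomposition for postprocessing) and the same factoring $P(x,y)=P'(x)P'_x(y)$ for composition.

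Your quasi-convexity argument, however, takes a genuinely different and, in my view, cleaner route. The paper's Lemma \ref{lem:ExpConvex} proves convexity of $t\mapsto e^{(\alpha-1)\dr{\alpha}{P_t}{Q_t}}$ directly by differentiating twice under the integral sign and exhibiting the second derivative as $\alpha(\alpha-1)\int(\cdot)^2 \ge 0$; quasi-convexity of $\dr{\alpha}{\cdot}{\cdot}$ and convexity of $\dr{1}{\cdot}{\cdot}$ both follow. You instead synthesize quasi-convexity from the composition and postprocessing bullets already proved: introduce a bit $B\sim\mathrm{Bernoulli}$ with $\pr{}{B=0}=t$, identically distributed under the two augmented measures $P^*,Q^*$, let $X\mid B=b$ be $P_b$ resp.\ $Q_b$, then composition gives $\dr{\alpha}{P^*}{Q^*}\le 0+\max_b\dr{\alpha}{P_b}{Q_b}$ and postprocessing (projecting onto $X$) gives $\dr{\alpha}{P}{Q}\le\dr{\alpha}{P^*}{Q^*}$. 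For KL you replace the max by the weighted average via the chain rule. This trades the paper's explicit analytic computation for a short structural argument that reuses the other bullets; the paper's route buys the strictly stronger conclusion that $e^{(\alpha-1)\dr{\alpha}{P_t}{Q_t}}$ is actually \emph{convex} in $t$, not merely quasi-convex, whereas your chain gives only the stated quasi-convexity (which is all the lemma claims). Your treatment of randomized postprocessing (append an independent coin and invoke the product-distribution case of composition before deterministic postprocessing) likewise differs from the paper's, which defers to quasi-convexity, but both are valid.

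One small attribution slip: the equality case of the composition bound at $\alpha=1$ is the chain rule for KL divergence, which is an algebraic identity obtained by writing $\log\tfrac{P(x,y)}{Q(x,y)}=\log\tfrac{P'(x)}{Q'(x)}+\log\tfrac{P'_x(y)}{Q'_x(y)}$ and taking expectations; it does not require the log-sum inequality. The conclusion you draw is still correct, so this does not affect the proof.
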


\subsection{Composition and Postprocessing}

The following lemma gives the postprocessing and (adaptive) composition bounds (extending Lemmas \ref{lem:Composition-intro} and \ref{lem:Postprocessing-intro}).

\begin{lem}[Composition \& Postprocessing] \label{lem:CompositionPostprocessing}
Let $M : \mathcal{X}^n \to \mathcal{Y}$ and $M' : \mathcal{X}^n \times \mathcal{Y} \to \mathcal{Z}$. Suppose $M$ satisfies $(\con,\lin)$-\IIP{} and $M'$ satisfies $(\con',\lin')$-\IIP{} (as a function of its first argument). Define $M'' : \mathcal{X}^n \to \mathcal{Z}$ by $M''(x) = M'(x,M(x))$. Then $M''$ satisfies $(\con+\con',\lin+\lin')$-\IIP{}.
\end{lem}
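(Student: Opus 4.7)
The plan is to reduce the adaptive composition to two ingredients already in hand: the composition/chain-rule bound for R\'enyi divergence stated in Lemma~\ref{lem:Renyi}, and the postprocessing bound from the same lemma. The key observation is that $M''(x) = M'(x, M(x))$ is a postprocessing of the joint release $\tilde{M}(x) := (M(x), M'(x, M(x)))$, so it suffices to prove the claim for $\tilde{M}$ and then invoke postprocessing to project onto the second coordinate.

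Fix neighbors $x, x' \in \mathcal{X}^n$ and $\alpha \in (1,\infty)$. Let $P$ and $Q$ denote the distributions of $\tilde{M}(x)$ and $\tilde{M}(x')$ respectively on $\mathcal{Y}\times\mathcal{Z}$. First I would identify the marginals and conditionals: the marginal of $P$ on $\mathcal{Y}$ is $M(x)$ and the marginal of $Q$ on $\mathcal{Y}$ is $M(x')$; and for any fixed $y\in\mathcal{Y}$, the conditional distribution of the second coordinate is $M'(x,y)$ under $P$ and $M'(x',y)$ under $Q$. Then I would apply the upper half of the composition bound in Lemma~\ref{lem:Renyi},
\[
\dr{\alpha}{P}{Q} \;\leq\; \dr{\alpha}{M(x)}{M(x')} \;+\; \max_{y\in\mathcal{Y}} \dr{\alpha}{M'(x,y)}{M'(x',y)}.
\]
The first term is at most $\con + \lin\alpha$ by the $(\con,\lin)$-\IIP{} hypothesis on $M$. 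For each fixed $y$, the second argument is merely a parameter, so the $(\con',\lin')$-\IIP{} hypothesis on $M'$ (in its first argument) gives $\dr{\alpha}{M'(x,y)}{M'(x',y)} \leq \con'+\lin'\alpha$ uniformly in $y$. Adding these yields
\[
\dr{\alpha}{\tilde{M}(x)}{\tilde{M}(x')} \;\leq\; (\con+\con') + (\lin+\lin')\alpha,
\]
so $\tilde{M}$ is $(\con+\con',\lin+\lin')$-\IIP{}.

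Finally, since $M''(x)$ is the image of $\tilde{M}(x)$ under the (deterministic) projection $(y,z)\mapsto z$, the postprocessing clause of Lemma~\ref{lem:Renyi} gives $\dr{\alpha}{M''(x)}{M''(x')}\leq \dr{\alpha}{\tilde{M}(x)}{\tilde{M}(x')}$, and the same bound carries over for all $\alpha>1$ and all neighboring $x,x'$. This establishes that $M''$ is $(\con+\con',\lin+\lin')$-\IIP{}.

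There is no real obstacle here: the only subtle point is checking that the chain-rule inequality from Lemma~\ref{lem:Renyi} applies with the \emph{maximum} on the right-hand side (which is exactly what lets us handle the adaptive dependence of $M'$ on the transcript $y$ of $M$). Once that is in place, the proof is just an application of two bulleted properties of R\'enyi divergence, mirroring the way Lemmas~\ref{lem:Composition-intro} and~\ref{lem:Postprocessing-intro} were obtained.
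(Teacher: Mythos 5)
Your proof is correct and matches the paper's intent exactly: the paper simply states that the claim ``is immediate from Lemma~\ref{lem:Renyi},'' and you have correctly filled in the two steps that make it immediate, namely applying the upper (max) branch of the composition chain-rule bullet to the joint release $\tilde M(x) = (M(x), M'(x,M(x)))$ and then using the postprocessing bullet to project onto the $\mathcal{Z}$-coordinate. No issues.
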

The proof is immediate from Lemma \ref{lem:Renyi}. Note that, while Lemma \ref{lem:CompositionPostprocessing} is only stated for the composition of two mechanisms, it can be inductively applied to analyse the composition of arbitrarily many mechanisms.

\subsection{Gaussian Mechanism}

The following lemma gives the R\'enyi divergence between two Gaussian distributions with the same variance.

\begin{lem} \label{lem:NormalDivergence}
Let $\mu, \nu, \sigma \in \mathbb{R}$ and $\alpha \in [1,\infty)$. Then $$\dr{\alpha}{\mathcal{N}(\mu,\sigma^2)}{\mathcal{N}(\nu,\sigma^2)} = \frac{\alpha(\mu - \nu)^2}{2\sigma^2}$$
\end{lem}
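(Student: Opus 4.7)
The plan is to compute the integral in the definition of Rényi divergence directly, exploiting the fact that a product of Gaussian densities raised to powers $\alpha$ and $1-\alpha$ is (up to a constant) again a Gaussian density, so the integral reduces to evaluating a Gaussian integral. Concretely, I would substitute the density $p_{\mu,\sigma}(x) = (2\pi\sigma^2)^{-1/2} \exp(-(x-\mu)^2/2\sigma^2)$ and likewise $p_{\nu,\sigma}(x)$ into
\[
\dr{\alpha}{\mathcal{N}(\mu,\sigma^2)}{\mathcal{N}(\nu,\sigma^2)} = \frac{1}{\alpha-1}\log\int_{\mathbb{R}} p_{\mu,\sigma}(x)^\alpha \, p_{\nu,\sigma}(x)^{1-\alpha} \,\mathrm{d}x,
\]
so that the two normalizing constants combine to give a single factor of $(2\pi\sigma^2)^{-1/2}$ in front of the integral.

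The main calculation is then to simplify the quadratic form in the exponent, namely
\[
\alpha(x-\mu)^2 + (1-\alpha)(x-\nu)^2.
\]
I would complete the square in $x$, writing this as $(x-m)^2 + C$ where $m = \alpha \mu + (1-\alpha)\nu$ and $C$ is the constant remaining after the square is extracted. A short expansion gives $C = \alpha(1-\alpha)(\mu-\nu)^2$, so the exponent becomes $-\bigl((x-m)^2 + \alpha(1-\alpha)(\mu-\nu)^2\bigr)/(2\sigma^2)$. The integral of $(2\pi\sigma^2)^{-1/2}\exp(-(x-m)^2/2\sigma^2)$ over $\mathbb{R}$ is exactly $1$ (it is a Gaussian density with mean $m$ and variance $\sigma^2$), so the whole integral equals $\exp\bigl(-\alpha(1-\alpha)(\mu-\nu)^2/2\sigma^2\bigr) = \exp\bigl(\alpha(\alpha-1)(\mu-\nu)^2/2\sigma^2\bigr)$. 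Taking the logarithm and dividing by $\alpha-1$ yields the claimed $\alpha(\mu-\nu)^2/(2\sigma^2)$.

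For the boundary case $\alpha=1$, I would simply take the limit $\alpha \to 1^+$, which by the definition in Section~\ref{sec:Renyi} agrees with the KL divergence and gives $(\mu-\nu)^2/(2\sigma^2)$, matching the formula. No obstacle is expected here: the only mildly delicate step is the algebraic identity $\alpha\mu^2 + (1-\alpha)\nu^2 - (\alpha\mu+(1-\alpha)\nu)^2 = \alpha(1-\alpha)(\mu-\nu)^2$, but this is a one-line expansion. The result is genuinely a direct calculation, and the elegance of the definition via Rényi divergence is precisely that this bound, and hence the $\lin$-\IIP{} guarantee of the Gaussian mechanism, comes out exactly with no approximation anywhere.
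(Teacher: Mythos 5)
Your proposal is correct and follows essentially the same route as the paper's proof: both substitute the Gaussian densities, complete the square to write the exponent as $-\bigl((x - (\alpha\mu + (1-\alpha)\nu))^2 + \alpha(1-\alpha)(\mu-\nu)^2\bigr)/2\sigma^2$, observe that the remaining integral is that of a normalized Gaussian density, and simplify. The algebraic identity you flag as the only delicate step is exactly the one the paper carries out inline.
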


Consequently, the Gaussian mechanism, which answers a sensitivity-$\Delta$ query by adding noise drawn from $\mathcal{N}(0,\sigma^2)$, satisfies $\left(\frac{\Delta^2}{2\sigma^2}\right)$-\IIP{} (Proposition \ref{prop:gaussian-mech}).

\begin{proof}
We calculate
\begin{align*}
\exp \bigl((\alpha-1)&\dr{\alpha}{\mathcal{N}(\mu,\sigma^2)}{\mathcal{N}(\nu,\sigma^2)}\bigr) = \frac{1}{\sqrt{2\pi\sigma^2}}\int_\mathbb{R} \exp \left(-\alpha \frac{(x-\mu)^2}{2\sigma^2} - (1-\alpha) \frac{(x - \nu)^2}{2\sigma^2} \right) \mathrm{d}x\\
=& \frac{1}{\sqrt{2\pi\sigma^2}}\int_\mathbb{R} \exp \left(-\frac{(x - (\alpha\mu + (1-\alpha)\nu))^2 - (\alpha\mu + (1-\alpha)\nu)^2 + \alpha \mu^2 + (1-\alpha)\nu^2}{2\sigma^2} \right) \mathrm{d}x\\
=& \ex{x \sim \mathcal{N}(\alpha\mu + (1-\alpha)\nu,\sigma^2)}{ \exp \left( -\frac{-(\alpha\mu + (1-\alpha)\nu)^2 + \alpha \mu^2 + (1-\alpha)\nu^2}{2\sigma^2}\right) }\\
=& \exp \left(\frac{\alpha(\alpha-1)(\mu - \nu)^2}{2\sigma^2}\right).
\end{align*}
\end{proof}

For the multivariate Gaussian mechanism, Lemma \ref{lem:NormalDivergence} generalises to the following.

\begin{lem} \label{lem:NormalDivergenceMulti}
Let $\mu, \nu \in \mathbb{R}^d$, $\sigma \in \mathbb{R}$, and $\alpha \in [1,\infty)$. Then $$\dr{\alpha}{\mathcal{N}(\mu,\sigma^2 I_d)}{\mathcal{N}(\nu,\sigma^2 I_d)} = \frac{\alpha\|\mu - \nu\|_2^2}{2\sigma^2}$$
\end{lem}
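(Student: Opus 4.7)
The plan is to reduce the multivariate case to the univariate Lemma \ref{lem:NormalDivergence} using the fact that an isotropic multivariate Gaussian factors as a product of independent univariate Gaussians. Concretely, since the covariance matrix $\sigma^2 I_d$ is diagonal, $\mathcal{N}(\mu,\sigma^2 I_d)$ is exactly the product distribution of $\mathcal{N}(\mu_1,\sigma^2),\ldots,\mathcal{N}(\mu_d,\sigma^2)$, and likewise for $\nu$.

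Next I would invoke the composition property of R\'enyi divergence from Lemma \ref{lem:Renyi}: for product distributions the R\'enyi divergence is additive across coordinates. Applied here this gives
\[
\dr{\alpha}{\mathcal{N}(\mu,\sigma^2 I_d)}{\mathcal{N}(\nu,\sigma^2 I_d)} = \sum_{i=1}^d \dr{\alpha}{\mathcal{N}(\mu_i,\sigma^2)}{\mathcal{N}(\nu_i,\sigma^2)}.
\]
Then I would apply Lemma \ref{lem:NormalDivergence} coordinate-wise to rewrite each summand as $\alpha(\mu_i-\nu_i)^2/(2\sigma^2)$, and recognize the resulting sum as $\alpha\|\mu-\nu\|_2^2/(2\sigma^2)$.

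There is no real obstacle here: the entire argument is a one-line reduction to the scalar case via the product structure of the isotropic Gaussian, and the scalar case has already been established. If one wanted an entirely self-contained derivation, one could instead repeat the completing-the-square calculation from the proof of Lemma \ref{lem:NormalDivergence} with $(x-\mu)^2$ replaced by $\|x-\mu\|_2^2$ inside the $d$-dimensional Gaussian integral; the cross terms again collapse after completing the square around $\alpha \mu + (1-\alpha)\nu$, leaving the exponent $\alpha(\alpha-1)\|\mu-\nu\|_2^2/(2\sigma^2)$. But the product-decomposition proof is cleaner and is the one I would present.
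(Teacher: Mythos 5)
Your proof is correct, and since the paper states Lemma \ref{lem:NormalDivergenceMulti} without an explicit proof (merely saying Lemma \ref{lem:NormalDivergence} ``generalises''), your reduction via the product structure of the isotropic Gaussian together with the additivity of R\'enyi divergence over product distributions (the composition clause of Lemma \ref{lem:Renyi}) is exactly the intended one-line argument. The alternative you sketch, redoing the completing-the-square computation in $d$ dimensions, would also work but is unnecessary given the tools already in hand.
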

Thus, if $M : \mathcal{X}^n \to \mathbb{R}^d$ is the mechanism that, on input $x$, releases a sample from $\mathcal{N}(q(x), \sigma^2 I_d)$ for some function $q : \mathcal{X}^n \to \mathbb{R}^d$, then $M$ satisfies $\lin$-\IIP{} for \begin{equation}\lin = \frac{1}{2\sigma^2} \sup_{x,x' \in \mathcal{X}^n \atop \text{differing in one entry}} \|q(x)-q(x')\|_2^2 .\end{equation}

\section{Relation to Differential Privacy} \label{sec:DP}

We now discuss the relationship between \IIP{} and the traditional definitions of pure and approximate differential privacy. There is a close relationship between the notions, but not an exact characterization.

For completeness, we state the definition of differential privacy:

\begin{defn}[Differential Privacy (DP) \cite{DworkMNS06,DworkKMMN06}]
A randomized mechanism $M : \mathcal{X}^n \to \mathcal{Y}$ satisfies $(\varepsilon,\delta)$-differential privacy if, for all $x,x' \in \mathcal{X}$ differing in a single entry, we have $$\pr{}{M(x) \in S} \leq e^\varepsilon \pr{}{M(x') \in S} + \delta$$ for all (measurable) $S \subset \mathcal{Y}$. Further define $\varepsilon$-differential privacy to be $(\varepsilon,0)$-differential privacy.
\end{defn}

\subsection{Pure DP versus \IIP{}}

Pure differential privacy is exactly characterized by $(\con, 0)$-\IIP{}:
\begin{lem}
A mechanism $M : \mathcal{X}^n \to \mathcal{Y}$ satisfies $\varepsilon$-DP if and only if it satisfies $(\varepsilon,0)$-\IIP{}.
\end{lem}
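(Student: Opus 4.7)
The plan is to reduce the equivalence to two facts about R\'enyi divergence already in hand: (i) $\varepsilon$-DP is exactly $\dr{\infty}{M(x)}{M(x')} \le \varepsilon$ for all neighboring $x,x'$; (ii) the monotonicity and limit properties of $\dr{\alpha}{\cdot}{\cdot}$ recorded in Lemma \ref{lem:Renyi} and in the definition of R\'enyi divergence, which give $\dr{\alpha}{P}{Q} \le \dr{\infty}{P}{Q}$ for all $\alpha \in [1,\infty]$ and $\dr{\infty}{P}{Q} = \lim_{\alpha \to \infty} \dr{\alpha}{P}{Q}$.

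For the forward direction ($\varepsilon$-DP $\Rightarrow (\varepsilon,0)$-\IIP{}), first I would observe that $\varepsilon$-DP is equivalent to $\dr{\infty}{M(x)}{M(x')} \le \varepsilon$ for every neighboring pair $x,x'$, by unwinding the definition of max-divergence as the essential supremum of the log-density ratio. Applying monotonicity of R\'enyi divergence in the order parameter then gives $\dr{\alpha}{M(x)}{M(x')} \le \dr{\infty}{M(x)}{M(x')} \le \varepsilon$ for every $\alpha \in (1,\infty)$, which is exactly $(\varepsilon,0)$-\IIP{}.

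For the reverse direction ($(\varepsilon,0)$-\IIP{} $\Rightarrow \varepsilon$-DP), the hypothesis gives $\dr{\alpha}{M(x)}{M(x')} \le \varepsilon$ for every $\alpha \in (1,\infty)$. Taking $\alpha \to \infty$ and using $\dr{\infty}{M(x)}{M(x')} = \lim_{\alpha \to \infty} \dr{\alpha}{M(x)}{M(x')}$ (as recorded in the definition of R\'enyi divergence) yields $\dr{\infty}{M(x)}{M(x')} \le \varepsilon$, which is $\varepsilon$-DP by the same unwinding as above.

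There is no real obstacle; the only subtle point is justifying the passage to $\alpha = \infty$, which is handled by the limit identity $\dr{\infty}{P}{Q} = \lim_{\alpha \to \infty} \dr{\alpha}{P}{Q}$ together with the monotone convergence of $\dr{\alpha}{\cdot}{\cdot}$ in $\alpha$ (so the supremum over $\alpha \in (1,\infty)$ equals $\dr{\infty}{\cdot}{\cdot}$). Both facts are already part of Lemma \ref{lem:Renyi} and the R\'enyi divergence definition, so the proof is essentially a two-line appeal to monotonicity and a limit.
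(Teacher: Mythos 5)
Your proof is correct and follows the same route the paper takes: both directions rest on identifying $\varepsilon$-DP with $\dr{\infty}{M(x)}{M(x')} \le \varepsilon$, then applying monotonicity in $\alpha$ for the forward implication and the limit $\dr{\infty}{P}{Q} = \lim_{\alpha \to \infty}\dr{\alpha}{P}{Q}$ for the reverse. There is no substantive difference from the paper's argument.
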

\begin{proof}
Let $x,x' \in \mathcal{X}^n$ be neighbouring. Suppose $M$ satisfies $\varepsilon$-DP. Then $\dr{\infty}{M(x)}{M(x')} \leq \varepsilon$. By monotonicity, $$\dr{\alpha}{M(x)}{M(x')} \leq \dr{\infty}{M(x)}{M(x')} \leq \varepsilon = \varepsilon + 0 \cdot \alpha$$ for all $\alpha$. So $M$ satisfies $(\varepsilon,0)$-\IIP{}.
Conversely, suppose $M$ satisfies $(\varepsilon,0)$-\IIP{}. Then $$\dr{\infty}{M(x)}{M(x')} = \lim_{\alpha \to \infty} \dr{\alpha}{M(x)}{M(x')} \leq \lim_{\alpha \to \infty} \varepsilon + 0 \cdot \alpha = \varepsilon.$$ Thus $M$ satisfies $\varepsilon$-DP.
\end{proof}

We now show that $\eps$-differential privacy implies $(\frac12\eps^2)$-\IIP{} (Proposition \ref{prop:PDPtoCDP-intro}).

\begin{prop} \label{prop:EpsSquared}
Let $P$ and $Q$ be probability distributions on $\Omega$ satisfying $\dr{\infty}{P}{Q} \leq \varepsilon$ and $\dr{\infty}{Q}{P} \leq \varepsilon$. Then $\dr{\alpha}{P}{Q} \leq \frac12 \varepsilon^2 \alpha$ for all $\alpha>1$.
\end{prop}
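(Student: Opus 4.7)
The plan is to reduce the claim to a bound on the raw moment $\ex{x \sim Q}{(P(x)/Q(x))^\alpha}$ and then exploit both the pointwise bound on the density ratio and the normalization $\ex{Q}{P/Q} = 1$. By the definition of R\'enyi divergence it suffices to prove that this moment is at most $e^{\alpha(\alpha-1)\varepsilon^2/2}$. The two max-divergence hypotheses translate to the almost sure pointwise bound $r(x) := P(x)/Q(x) \in [e^{-\varepsilon}, e^\varepsilon]$, and since $P$ is a probability distribution we also have $\ex{Q}{r} = 1$.

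First, I would exploit the convexity of $r \mapsto r^\alpha$ on $[e^{-\varepsilon}, e^\varepsilon]$ (valid because $\alpha > 1$) to bound it above by the chord through the two endpoints:
\[
r^\alpha \;\leq\; \frac{e^\varepsilon - r}{e^\varepsilon - e^{-\varepsilon}}\,e^{-\alpha\varepsilon} \;+\; \frac{r - e^{-\varepsilon}}{e^\varepsilon - e^{-\varepsilon}}\,e^{\alpha\varepsilon}.
\]
Taking expectation under $Q$ and substituting $\ex{Q}{r} = 1$ yields a concrete constant depending only on $\alpha$ and $\varepsilon$ which, after multiplying numerator and denominator by $e^{-\varepsilon/2}$, simplifies to $\cosh((\alpha - \tfrac12)\varepsilon)\big/\cosh(\tfrac12\varepsilon)$.

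What remains is the purely analytic inequality
\[
\log\cosh\bigl((\alpha - \tfrac12)\varepsilon\bigr) - \log\cosh(\tfrac12\varepsilon) \;\leq\; \tfrac12 \alpha(\alpha-1)\varepsilon^2.
\]
I would prove it by a second-order Taylor expansion of $g(t) := \log\cosh t$ about $t = \varepsilon/2$, with increment $(\alpha-1)\varepsilon$. Since $g'(t) = \tanh t$ and $\tanh(\varepsilon/2) \leq \varepsilon/2$, the first-order term is bounded by $\tfrac12 (\alpha-1)\varepsilon^2$; and since $g''(t) = \operatorname{sech}^2 t \leq 1$, the integral form of the remainder is bounded by $\tfrac12 ((\alpha-1)\varepsilon)^2$. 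Summing gives exactly $\tfrac12 \alpha(\alpha-1)\varepsilon^2$, which after dividing by $\alpha-1$ yields the desired bound on $\dr{\alpha}{P}{Q}$.

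I expect the main obstacle to be obtaining the factor $\alpha - 1$ rather than merely $\alpha$ in the exponent. A na\"ive application of Hoeffding's lemma to the bounded random variable $L = \log(P/Q)$ under $Q$ yields $\log\ex{Q}{e^{\alpha L}} \leq \alpha \ex{Q}{L} + \tfrac12\alpha^2\varepsilon^2$; combining this with $\ex{Q}{L} \leq 0$ (from Jensen applied to $\ex{Q}{e^L}=1$) only delivers $\tfrac12\alpha^2\varepsilon^2$, which is off by $\tfrac12\alpha\varepsilon^2$. The secant-line argument is precisely what is needed to use the constraint $\ex{Q}{e^L}=1$ in full (beyond its weak Jensen consequence $\ex{Q}{L}\leq 0$) and close this gap.
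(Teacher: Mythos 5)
Your proposal is correct, and the first half coincides exactly with the paper's argument: the paper introduces a random function $A : \Omega \to \{e^{-\varepsilon}, e^{\varepsilon}\}$ with $\ex{A}{A(x)} = P(x)/Q(x)$ and applies Jensen, which is precisely the secant-line (convexity-of-$r\mapsto r^\alpha$) bound you write down, and both arguments then invoke $\ex{Q}{P/Q}=1$ to pin down the resulting two-point distribution. You then observe the resulting quantity equals $\cosh\bigl((\alpha-\tfrac12)\varepsilon\bigr)/\cosh(\tfrac12\varepsilon)$; the paper writes the same quantity as $\bigl(\sinh(\alpha\varepsilon)-\sinh((\alpha-1)\varepsilon)\bigr)/\sinh(\varepsilon)$ and appeals to its Lemma~\ref{lem:HyperTrigIneq}, namely $0\le y<x\le 2 \Rightarrow (\sinh x - \sinh y)/\sinh(x-y)\le e^{xy/2}$, whose proof is a somewhat involved chain of hyperbolic identities and derivative comparisons.

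Where you genuinely diverge is the final analytic step, and your route is cleaner on two counts. First, Lemma~\ref{lem:HyperTrigIneq} requires $x=\alpha\varepsilon\le 2$, which forces the paper to open with a case split: if $\tfrac12\alpha\varepsilon>1$ then $\tfrac12\alpha\varepsilon^2>\varepsilon$ and the result follows from monotonicity $\dr{\alpha}{P}{Q}\le\dr{\infty}{P}{Q}\le\varepsilon$. Your Taylor argument needs only $\tanh t\le t$ and $\operatorname{sech}^2 t\le 1$, which hold for all $t\ge 0$, so no range restriction and no case split are needed. Second, the $\cosh$ form makes the decomposition into a first-order term $\tanh(\varepsilon/2)\cdot(\alpha-1)\varepsilon\le\tfrac12(\alpha-1)\varepsilon^2$ and an integral remainder $\le\tfrac12(\alpha-1)^2\varepsilon^2$ entirely transparent, and the two pieces recombine to $\tfrac12\alpha(\alpha-1)\varepsilon^2$ with no further manipulation, whereas the paper's Lemma~\ref{lem:HyperTrigIneq} is proved by a less self-explanatory sequence of reductions to $\tanh(x/2)\tanh(y/2)\le\tanh(xy/4)$. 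Your closing remark correctly diagnoses why Hoeffding's lemma alone is off by $\tfrac12\alpha\varepsilon^2$: it only uses $\ex{Q}{L}\le 0$, not the exact constraint $\ex{Q}{e^L}=1$, and the chord argument is what extracts the missing factor of $(\alpha-1)/\alpha$.
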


\begin{rem}
In particular, Proposition \ref{prop:EpsSquared} shows that the KL-divergence $\dr{1}{P}{Q} \le \frac{1}{2}\eps^2$. A bound on the KL-divergence between random variables in terms of their max-divergence is an important ingredient in the analysis of the advanced composition theorem \cite{DworkRV10}. Our bound sharpens (up to lower order terms) and, in our opinion, simplifies the previous bound of $\dr{1}{P}{Q} \le \frac{1}{2}\eps(e^{\eps}-1)$ proved by Dwork and Rothblum \cite{DworkR16}.
\end{rem}

\begin{proof}[Proof of Proposition \ref{prop:EpsSquared}.]
We may assume $\frac12 \varepsilon \alpha \leq 1$, as otherwise $\frac12 \varepsilon^2 \alpha > \varepsilon$, whence the result follows from monotonicity. We must show that $$e^{(\alpha-1)\dr{\alpha}{P}{Q}} = \ex{x \sim Q}{\left(\frac{P(x)}{Q(x)}\right)^\alpha} \leq e^{\frac12 \alpha(\alpha-1)\varepsilon^2}.$$
We know that $e^{-\varepsilon} \leq \frac{P(x)}{Q(x)} \leq e^{\varepsilon}$ for all $x$. Define a random function $A : \Omega \to \{e^{-\varepsilon},e^\varepsilon\}$ by $\ex{A}{A(x)}={\frac{P(x)}{Q(x)}}$ for all $x$. By Jensen's inequality, $$\ex{x \sim Q}{\left(\frac{P(x)}{Q(x)}\right)^{\alpha}} = \ex{x \sim Q}{\left(\ex{A}{A(x)}\right)^{\alpha}} \leq \ex{x \sim Q}{\ex{A}{A(x)^\alpha}} = \ex{A}{A^\alpha},$$ where $A$ denotes $A(x)$ for a random $x \sim Q$. We also have $\ex{A}{A} = \ex{x \sim Q}{\frac{P(x)}{Q(x)}}=1$. From this equation, we can conclude that $$\pr{A}{A=e^{-\varepsilon}} = \frac{e^\varepsilon-1}{e^\varepsilon-e^{-\varepsilon}} \qquad \text{and} \qquad \pr{A}{A=e^{\varepsilon}} = \frac{1-e^{-\varepsilon}}{e^\varepsilon-e^{-\varepsilon}}.$$ 
Thus 
\begin{align*}e^{(\alpha-1)\dr{\alpha}{P}{Q}} \leq& \ex{A}{A^\alpha}\\
=& \frac{e^\varepsilon-1}{e^\varepsilon-e^{-\varepsilon}} \cdot e^{-\alpha\varepsilon} + \frac{1-e^{-\varepsilon}}{e^\varepsilon-e^{-\varepsilon}} \cdot e^{\alpha\varepsilon}\\
=&\frac{(e^{\alpha\varepsilon}-e^{-\alpha\varepsilon}) - (e^{(\alpha-1)\varepsilon}-e^{-(\alpha-1)\varepsilon})}{e^\varepsilon-e^{-\varepsilon}}\\
=& \frac{\sinh(\alpha\varepsilon)-\sinh((\alpha-1)\varepsilon)}{\sinh(\varepsilon)}.
\end{align*}
The result now follows from the following inequality, which is proved in Lemma \ref{lem:HyperTrigIneq}. $$0 \leq y < x \leq 2 \implies \frac{\sinh(x)-\sinh(y)}{\sinh(x-y)} \leq e^{\frac12 xy}.$$
\end{proof}

\subsection{Approximate DP versus \IIP{}}

The statements in this section show that, up to some loss in parameters, \IIP{} is equivalent to a family of $(\eps, \delta)$-DP guarantees for all $\delta > 0$.

\begin{lem} \label{lem:CDPtoDP}
Let $M : \mathcal{X}^n \to \mathcal{Y}$ satisfy $(\con,\lin)$-\IIP{}. Then $M$ satisfies $(\varepsilon,\delta)$-DP for all $\delta>0$ and $$\varepsilon = \con + \lin + \sqrt{4\lin \log(1/\delta)}.$$
\end{lem}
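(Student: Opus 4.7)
The plan is to convert the R\'enyi divergence bound into a tail bound on the privacy loss random variable via Markov's inequality, then invoke the standard observation that a tail bound on the privacy loss implies $(\varepsilon,\delta)$-DP.

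Fix neighboring inputs $x,x' \in \mathcal{X}^n$ and let $Z = \privloss{M(x)}{M(x')}$. By the $(\con,\lin)$-\IIP{} hypothesis rewritten in the form \eqref{eqn:GDP-MGF}, we have $\ex{}{e^{(\alpha-1)Z}} \leq e^{(\alpha-1)(\con+\lin\alpha)}$ for every $\alpha>1$. For any threshold $\varepsilon$, Markov's inequality then gives
\[
\pr{}{Z>\varepsilon} \;=\; \pr{}{e^{(\alpha-1)Z} > e^{(\alpha-1)\varepsilon}} \;\leq\; e^{(\alpha-1)(\con+\lin\alpha-\varepsilon)}.
\]
Our goal is to choose $\alpha>1$ so that the right-hand side is $\leq \delta$, i.e.\ so that $(\alpha-1)(\con+\lin\alpha-\varepsilon) \leq -\log(1/\delta)$. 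Substituting $t=\alpha-1>0$, the inequality becomes $\lin t^2 + (\con+\lin-\varepsilon)t + \log(1/\delta) \leq 0$.

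The quadratic in $t$ is minimized at $t^\star = (\varepsilon-\con-\lin)/(2\lin)$, and its minimum value is $\log(1/\delta) - (\varepsilon-\con-\lin)^2/(4\lin)$. For the choice $\varepsilon = \con+\lin+\sqrt{4\lin\log(1/\delta)}$ we have $\varepsilon-\con-\lin = 2\sqrt{\lin\log(1/\delta)}>0$, so $t^\star>0$ is admissible, and the minimum value of the quadratic is exactly zero. Hence $\pr{}{Z>\varepsilon}\leq\delta$ with this choice of $\varepsilon$.

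It remains to promote this tail bound on the privacy loss to an $(\varepsilon,\delta)$-DP guarantee. Let $P$ and $Q$ be the distributions of $M(x)$ and $M(x')$, and let $B = \{y : \log(P(y)/Q(y)) > \varepsilon\}$, so $P(B)=\pr{}{Z>\varepsilon}\leq\delta$. For any measurable $S\subseteq\mathcal{Y}$,
\[
P(S) \;=\; P(S\cap B) + P(S\setminus B) \;\leq\; \delta + e^\varepsilon Q(S\setminus B) \;\leq\; e^\varepsilon Q(S) + \delta,
\]
since $P(y)\leq e^\varepsilon Q(y)$ for $y\notin B$. This is precisely the $(\varepsilon,\delta)$-DP condition. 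The edge case $\lin=0$ follows directly from monotonicity of R\'enyi divergence (it is pure $\con$-DP), so there is no genuine obstacle; the only nontrivial step is the quadratic optimization above, which is routine.
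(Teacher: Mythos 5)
Your proof is correct and follows essentially the same route as the paper's: apply Markov's inequality to the moment generating function of the privacy loss, optimize over $\alpha$ (your quadratic-in-$t$ optimization yields exactly the paper's choice $\alpha = (\varepsilon-\con+\lin)/2\lin$), and then convert the resulting tail bound on $Z$ into $(\varepsilon,\delta)$-DP via the standard decomposition over the ``bad'' event $B=\{Z>\varepsilon\}$. The only cosmetic difference is that you motivate the choice of $\alpha$ explicitly as a calculus minimization and you flag the degenerate case $\lin=0$, which the paper leaves implicit.
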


Thus to achieve a given $(\varepsilon,\delta)$-DP guarantee it suffices to satisfy $(\con,\lin)$-\IIP{} with $$\lin = \left( \sqrt{\varepsilon-\con+\log(1/\delta)}-\sqrt{\log(1/\delta)}\right)^2 \approx \frac{(\varepsilon-\con)^2}{4\log(1/\delta)}.$$

\begin{proof}
Let $x,x' \in \mathcal{X}^n$ be neighbouring. Define $f(y) = \log(\pr{}{M(x)=y}/\pr{}{M(x')=y})$. Let $Y\sim M(x)$ and $Z=f(Y)$. That is, $Z=\privloss{M(x)}{M(x')}$ is the privacy loss random variable.
Fix $\alpha\in(1,\infty)$ to be chosen later.
Then $$\ex{}{e^{(\alpha-1)Z}}=\ex{Y \sim M(x)}{\left(\frac{\pr{}{M(x)=Y}}{\pr{}{M(x')=Y}}\right)^{\alpha-1}}=e^{(\alpha-1)\dr{\alpha}{M(x)}{M(x')}} \leq e^{(\alpha-1)(\con+\lin\alpha)}.$$
By Markov's inequality $$\pr{}{Z>\varepsilon} = \pr{}{e^{(\alpha-1) Z} > e^{(\alpha-1)\varepsilon}} \leq \frac{\ex{}{e^{(\alpha-1) Z}}}{e^{(\alpha-1)\varepsilon}} \leq e^{(\alpha-1)(\con+\lin\alpha-\varepsilon)}.$$ Choosing $\alpha = (\varepsilon - \con + \lin)/2\lin > 1$ gives $$\pr{}{Z>\varepsilon} \leq e^{-(\varepsilon - \con - \lin)^2/4\lin} \leq \delta.$$
Now, for any measurable $S \subset \mathcal{Y}$, 
\begin{align*}
\pr{}{M(x) \in S} =& \pr{}{Y \in S}\\
\leq& \pr{}{Y \in S \wedge Z \leq \varepsilon} +  \pr{}{Z > \varepsilon} \\
\leq& \pr{}{Y \in S \wedge Z \leq \varepsilon} + \delta\\
=& \int_{\mathcal{Y}} \pr{}{M(x)=y} \cdot \mathbb{I}(y \in S) \cdot \mathbb{I}(f(y) \leq \varepsilon) \ \mathrm{d}y + \delta\\
\leq& \int_{\mathcal{Y}} e^\varepsilon \pr{}{M(x')=y} \cdot \mathbb{I}(y \in S) \ \mathrm{d}y + \delta\\
=& e^\varepsilon \pr{}{M(x') \in S} + \delta.
\end{align*}
\end{proof}

Lemma \ref{lem:CDPtoDP} is not tight. In particular, we have the following refinement of Lemma \ref{lem:CDPtoDP}, the proof of which is deferred to the appendix (Lemma \ref{lem:RenyiToED}).

\begin{lem} \label{lem:CDPtoDP2}
Let $M : \mathcal{X}^n \to \mathcal{Y}$ satisfy $(\con,\lin)$-\IIP{}. Then $M$ satisfies $(\varepsilon,\delta)$-DP for all $\delta>0$ and $$\varepsilon = \con+\lin+\sqrt{4\lin\cdot\log(\sqrt{\pi \cdot \lin} / \delta)}.$$
Alternatively $M$ satisfies $(\varepsilon,\delta)$-DP for all $\varepsilon\geq\con+\lin$ and $$\delta = e^{-(\varepsilon-\con-\lin)^2/4\lin} \cdot \min \left\{ \begin{array}{l} \sqrt{\pi \cdot \lin} \\ \frac{1}{1+(\varepsilon-\con-\lin)/2\lin} \\ \frac{2}{1+\frac{\varepsilon-\con-\lin}{2\lin} + \sqrt{\left(1+\frac{\varepsilon-\con-\lin}{2\lin}\right)^2+\frac{4}{\pi \lin}}} \end{array} \right..$$
\end{lem}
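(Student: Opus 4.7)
The plan is to sharpen the proof of Lemma~\ref{lem:CDPtoDP} by avoiding its lossy step $\delta \le \pr{}{Z > \varepsilon}$. Writing $Z = \privloss{M(x)}{M(x')}$, the optimal $\delta$ at a given $\varepsilon$ is governed by the standard identity
\[ \sup_S \bigl( \pr{}{M(x) \in S} - e^\varepsilon \pr{}{M(x') \in S} \bigr) \;=\; \ex{}{(1 - e^{\varepsilon - Z})_+}, \]
obtained by taking the optimal rejection set $S^\star = \{y : \pr{}{M(x)=y} > e^\varepsilon \pr{}{M(x')=y}\}$ and rewriting the resulting integral as an expectation under $M(x)$. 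A short integration-by-parts (or layer-cake) then converts this into the more useful tail-integral form
\[ \ex{}{(1 - e^{\varepsilon - Z})_+} \;=\; \int_\varepsilon^\infty \pr{}{Z > z} \cdot e^{\varepsilon - z}\, \mathrm{d}z. \]

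Next I would import the subgaussian tail bound $\pr{}{Z > z} \le \exp\!\bigl(-(z - \con - \lin)^2 / 4\lin\bigr)$ for $z \ge \con + \lin$, which was already derived inside the proof of Lemma~\ref{lem:CDPtoDP} by optimizing a Markov bound on the MGF of $Z$. Substituting $u = (z - \con - \lin)/\sqrt{2\lin}$, completing the square inside the exponent, and collecting constants reduces the tail integral to
\[ \delta \;\le\; e^{-(\varepsilon - \con - \lin)^2 / 4\lin} \cdot \sqrt{2\pi}\cdot\sqrt{2\lin}\cdot Q\!\bigl(u_0 + \sqrt{2\lin}\bigr), \]
where $Q(\cdot)$ denotes the standard normal upper tail and $u_0 = (\varepsilon - \con - \lin)/\sqrt{2\lin}$.

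The three bounds inside the minimum of Lemma~\ref{lem:CDPtoDP2} then correspond to three well-known upper bounds on $Q$. The subgaussian bound $Q(x) \le \tfrac12 e^{-x^2/2}$ yields the $\sqrt{\pi \lin}$ factor; the classical Mill's ratio $Q(x) \le \phi(x)/x$ yields the $1/(1 + (\varepsilon - \con - \lin)/2\lin)$ factor; and the sharper Komatu-type bound $Q(x) \le 2\phi(x)/(x + \sqrt{x^2 + 8/\pi})$ yields the third factor. In each case one sets $x = u_0 + \sqrt{2\lin} = \sqrt{2\lin}\bigl(1 + (\varepsilon - \con - \lin)/2\lin\bigr)$ and simplifies, and the factors of $\sqrt{2\lin}$ and $\sqrt{2\pi}$ collapse cleanly. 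The first (cleanest) of the three bounds is then inverted to recover the stated expression for $\varepsilon$ in terms of $\delta$.

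The main obstacle is bookkeeping rather than conceptual: one must track $\sqrt{2\lin}$, $\sqrt{2\pi}$, and $e^\lin$ factors through the square-completion and confirm that each of the three Mill's-ratio rescalings produces exactly the expression stated in the lemma, with no extra constants left over. I would also separate out the edge case $\varepsilon < \con + \lin$ (where $u_0 < 0$, the $Q$-bounds break down, and the first statement's logarithm may have a nonpositive argument) by appealing to the trivial bound $\delta \le 1$; the second formulation of the lemma restricts attention to $\varepsilon \ge \con + \lin$, so this is a domain check rather than a substantive case.
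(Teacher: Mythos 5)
Your plan coincides with the paper's own proof, which is carried out in Lemma~\ref{lem:RenyiToED} in the appendix: start from the identity $\delta^\star = \ex{}{(1-e^{\varepsilon-Z})_+}$ for the optimal set $S^\star$, integrate by parts to get $\int_\varepsilon^\infty e^{\varepsilon-z}\pr{}{Z>z}\,\mathrm{d}z$, plug in the Markov/MGF subgaussian tail bound, complete the square to reduce to a Gaussian tail, and then invoke the three Mill's-ratio-type inequalities. The paper packages the Gaussian-tail factor as $h(x)=\pr{}{\mathcal{N}(0,1)>x}\cdot\sqrt{2\pi}\cdot e^{x^2/2}$ and bounds $h$, while you keep the raw $Q$; these are the same thing up to a normalizing factor. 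So this is the same route, not a different one.

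One concrete bookkeeping error in your sketch, which you should fix before trusting the constants: after completing the square the correct prefactor is $e^{\varepsilon-\con}$, not $e^{-(\varepsilon-\con-\lin)^2/4\lin}$. That is, the intermediate bound should read
\[
\delta \le \sqrt{2\pi}\cdot\sqrt{2\lin}\cdot e^{\varepsilon-\con}\cdot Q\!\left(u_0+\sqrt{2\lin}\right)
= \sqrt{2\lin}\cdot e^{-(\varepsilon-\con-\lin)^2/4\lin}\cdot \underbrace{\sqrt{2\pi}\,e^{(u_0+\sqrt{2\lin})^2/2}\,Q\!\left(u_0+\sqrt{2\lin}\right)}_{=\,h(u_0+\sqrt{2\lin})}.
\]
Your displayed line drops the $e^{(u_0+\sqrt{2\lin})^2/2}$ factor, which is precisely what cancels the $e^{-x^2/2}$ in each of the three $Q$-bounds; without it none of the three target expressions comes out. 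The factors you say you need to track ($\sqrt{2\lin}$, $\sqrt{2\pi}$, $e^\lin$) do not include this one, so it is worth flagging explicitly. Once corrected, substituting $Q(x)\le\tfrac12 e^{-x^2/2}$, $Q(x)\le\phi(x)/x$, and $Q(x)\le 2\phi(x)/(x+\sqrt{x^2+8/\pi})$ with $x = \sqrt{2\lin}\bigl(1+(\varepsilon-\con-\lin)/2\lin\bigr)$ gives exactly the three terms in the minimum, and inverting the first yields the stated expression for $\varepsilon(\delta)$. Your handling of the edge case $\varepsilon<\con+\lin$ via the trivial $\delta\le 1$ is fine, and in fact the paper also includes $\pr{}{Z>\varepsilon}\le e^{-(\varepsilon-\con-\lin)^2/4\lin}$ (i.e.~$1$ in the minimum) in the version of the bound used in Lemma~\ref{lem:ApproxGDPtoDP}.
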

Note that the last of three options in the minimum dominates the first two options. We have included the first two options as they are simpler.

Now we show a partial converse to Lemma \ref{lem:CDPtoDP}, which is proved in the appendix (Lemma \ref{lem:DPtoCDP-app}).

\begin{lem} \label{lem:DPtoCDP}
Let $M : \mathcal{X}^n \to \mathcal{Y}$ satisfy $(\varepsilon,\delta)$-DP for all $\delta>0$ and \begin{equation} \varepsilon = \hat \con  +\sqrt{\hat \lin\log(1/\delta)}\label{eqn:DP-first}\end{equation} for some constants $\hat \con, \hat \lin \in [0,1]$. Then $M$ is $\left(\hat \con-\frac{1}{4} \hat \lin+5\sqrt[4]{\hat \lin}, \frac{1}{4} \hat \lin\right)$-\IIP{}.
\end{lem}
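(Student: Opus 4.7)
The plan is to translate the family of $(\varepsilon,\delta)$-DP guarantees into a subgaussian-style tail bound on the privacy loss $Z=\privloss{M(x)}{M(x')}$ for neighbouring $x,x'$, and then bound the moment generating function $\ex{}{e^{(\alpha-1)Z}}$ by integrating that tail. The first ingredient is a Kasiviswanathan--Smith style conversion: given any $(\varepsilon,\delta)$-DP guarantee and any slack $c>0$, the event $S=\{y:\log(\pr{}{M(x)=y}/\pr{}{M(x')=y})>\varepsilon+c\}$ satisfies
\[\pr{}{Z>\varepsilon+c}=\int_S\mathrm{d}M(x)\leq e^{\varepsilon}\int_S\mathrm{d}M(x')+\delta\leq e^{-c}\pr{}{Z>\varepsilon+c}+\delta,\]
so that $\pr{}{Z>\varepsilon+c}\leq\delta/(1-e^{-c})$. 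Invoking the DP hypothesis \eqref{eqn:DP-first} and, for each target threshold $u>0$, picking $\delta=e^{-u^2/\hat\lin}$ (so that $\varepsilon=\hat\con+u$), I obtain the tail bound $\pr{}{Z>\hat\con+c+u}\leq e^{-u^2/\hat\lin}/(1-e^{-c})$ for all $u\geq 0$.

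The second step computes the MGF via the identity $\ex{}{e^{\lambda Z}}=\lambda\int_{-\infty}^{\infty}e^{\lambda s}\pr{}{Z>s}\,\mathrm{d}s$, valid for $\lambda>0$ by Fubini. I split the integral at $s=\hat\con+c$: the lower portion is bounded by the trivial estimate $\pr{}{Z>s}\leq 1$ and contributes at most $e^{\lambda(\hat\con+c)}$. For the upper portion, the substitution $u=s-\hat\con-c$ together with completing the square in $\int_0^\infty e^{\lambda u-u^2/\hat\lin}\,\mathrm{d}u\leq\sqrt{\pi\hat\lin}\,e^{\lambda^2\hat\lin/4}$ gives a contribution of at most $\lambda\sqrt{\pi\hat\lin}\,e^{\lambda(\hat\con+c)+\lambda^2\hat\lin/4}/(1-e^{-c})$. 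Using the elementary inequality $1+xe^y\leq e^y(1+x)$ for $y\geq 0$ to pull out the Gaussian factor, taking logs, and applying $\log(1+t)\leq t$, yields after division by $\alpha-1=\lambda$
\[\dr{\alpha}{M(x)}{M(x')}\leq\hat\con+c+\frac{\alpha-1}{4}\hat\lin+\frac{\sqrt{\pi\hat\lin}}{1-e^{-c}}.\]

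Finally I optimize over $c$. Setting $c=\sqrt[4]{\hat\lin}$ and using $\hat\lin\leq 1$ so that $c\leq 1$, the elementary estimate $1-e^{-c}\geq c/2$ bounds the last term by $2\sqrt{\pi}\sqrt[4]{\hat\lin}$, so the combined overhead $c+\sqrt{\pi\hat\lin}/(1-e^{-c})$ is at most $(1+2\sqrt{\pi})\sqrt[4]{\hat\lin}<5\sqrt[4]{\hat\lin}$. Rewriting $(\alpha-1)\hat\lin/4$ as $\alpha\hat\lin/4-\hat\lin/4$ then gives $\dr{\alpha}{M(x)}{M(x')}\leq(\hat\con-\hat\lin/4+5\sqrt[4]{\hat\lin})+(\hat\lin/4)\alpha$, exactly the desired $(\hat\con-\tfrac14\hat\lin+5\sqrt[4]{\hat\lin},\,\tfrac14\hat\lin)$-\IIP{} conclusion (applied symmetrically to recover the $x'\!\to\!x$ direction from the symmetric DP hypothesis). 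The delicate part of the argument is the joint balancing of the slack $c$ against the Gaussian tail: choosing $c$ too small makes the factor $1/(1-e^{-c})$ blow up, while choosing $c$ too large contributes a large linear loss to $\xi$; the sweet spot $c\asymp\sqrt[4]{\hat\lin}$ equalizes the two losses, and this is exactly the term that gives the fourth-root slack in the statement.
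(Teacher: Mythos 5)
Your argument is correct and follows essentially the same path as the paper's proof of this lemma (Lemma \ref{lem:DPtoCDP-app} in the appendix): both derive the tail bound $\pr{}{Z>\varepsilon+c}\leq\delta/(1-e^{-c})$, convert the quantified DP hypothesis into the subgaussian tail $\pr{}{Z>\hat\con+c+u}\lesssim e^{-u^2/\hat\lin}/(1-e^{-c})$, compute $\ex{}{e^{(\alpha-1)Z}}$ by integrating the tail and completing the square, and balance the slack at $c=\sqrt[4]{\hat\lin}$ to land on the constant $1+2\sqrt\pi<5$. The only cosmetic differences are that you pull out the Gaussian factor via $1+xe^y\leq e^y(1+x)$ followed by $\log(1+t)\leq t$, whereas the paper uses the single inequality $1+xe^\con\leq e^{x+\con}$, and you defer the choice of $c$ to the end rather than fixing $c=\sqrt[4]{\hat\lin}$ at the tail-bound stage; these are equivalent.
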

Thus \IIP{} and DP are equivalent up to a (potentially substantial) loss in parameters and the quantification over all $\delta$.

\section{Zero- versus Mean-Concentrated Differential Privacy} \label{sec:CDP-DR}

We begin by stating the definition of mean-concentrated differential privacy:

\begin{defn}[Mean-Concentrated Differential Privacy (\mcdp{}) \cite{DworkR16}]
A randomized mechanism $M : \mathcal{X}^n \to \mathcal{Y}$ satisfies $(\mu,\tau)$-mean-concentrated differential privacy if, for all $x,x' \in \mathcal{X}^n$ differing in one entry, and letting $Z = \privloss{M(x)}{M(x')}$, we have $$\ex{}{Z} \leq \mu$$ and $$ \ex{}{e^{\lambda\left(Z-\ex{}{Z}\right)}} \leq e^{\lambda^2 \cdot \tau^2 / 2} $$ for all $\lambda \in \mathbb{R}$.
\end{defn}

In contrast $(\con,\lin)$-\IIP{} requires that, for all $\alpha \in (1,\infty)$, $\ex{}{e^{(\alpha-1)Z}} \leq e^{(\alpha-1)(\con+\lin\alpha)}$, where $Z\sim\privloss{M(x)}{M(x')}$ is the privacy loss random variable. 
We now show that these definitions are equivalent up to a (potentially significant) loss in parameters.

\begin{lem} \label{lem:CDPtoIIP}
If $M : \mathcal{X}^n \to \mathcal{Y}$ satisfies $(\mu,\tau)$-\mcdp{}, then $M$ satisfies $(\mu-\tau^2/2,\tau^2/2)$-\IIP{}.
\end{lem}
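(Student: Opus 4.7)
The plan is to reduce the claim to direct manipulation of moment generating functions, exploiting the fact that both definitions ultimately control $\mathbb{E}[e^{\lambda Z}]$ for the privacy loss $Z = \privloss{M(x)}{M(x')}$, just centered differently.

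Fix neighbouring $x, x' \in \mathcal{X}^n$ and let $Z = \privloss{M(x)}{M(x')}$. The zCDP condition to verify is
\[
\ex{}{e^{(\alpha-1)Z}} \leq e^{(\alpha-1)(\con + \lin \alpha)}
\qquad\text{with}\qquad \con = \mu - \tau^2/2, \ \lin = \tau^2/2,
\]
for every $\alpha \in (1,\infty)$. First I would write the uncentered MGF as a product of the ``mean'' part and the ``centered'' part:
\[
\ex{}{e^{(\alpha-1)Z}} = e^{(\alpha-1)\ex{}{Z}} \cdot \ex{}{e^{(\alpha-1)(Z - \ex{}{Z})}}.
\]
The mCDP hypothesis gives $\ex{}{Z} \le \mu$, and since $\alpha - 1 > 0$ this yields $e^{(\alpha-1)\ex{}{Z}} \le e^{(\alpha-1)\mu}$. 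Applying the subgaussian bound with $\lambda = \alpha-1$ gives $\ex{}{e^{(\alpha-1)(Z-\ex{}{Z})}} \le e^{(\alpha-1)^2 \tau^2 / 2}$. Multiplying,
\[
\ex{}{e^{(\alpha-1)Z}} \leq e^{(\alpha-1)\mu + (\alpha-1)^2 \tau^2/2}.
\]

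The final step is an algebraic regrouping of the exponent: factor $(\alpha-1)$ out of the right-hand side to write
\[
(\alpha-1)\mu + (\alpha-1)^2 \tfrac{\tau^2}{2} = (\alpha-1)\Bigl(\mu + (\alpha-1)\tfrac{\tau^2}{2}\Bigr) = (\alpha-1)\Bigl((\mu - \tfrac{\tau^2}{2}) + \tfrac{\tau^2}{2}\alpha\Bigr),
\]
which is exactly $(\alpha-1)(\con+\lin\alpha)$ with the claimed parameters. Taking logarithms and dividing by $\alpha-1 > 0$ gives $\dr{\alpha}{M(x)}{M(x')} \leq \con + \lin\alpha$, establishing $(\con,\lin)$-\IIP{}.

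There is no real obstacle — the only thing to be careful about is the sign of $\alpha - 1$, which is positive for $\alpha \in (1,\infty)$, so that $\ex{}{Z} \le \mu$ pushes the bound in the correct direction after exponentiation. Note also that the mCDP hypothesis supplies the subgaussian MGF bound for all real $\lambda$, which in particular covers the relevant range $\lambda = \alpha - 1 > 0$, so no additional argument is needed to invoke it.
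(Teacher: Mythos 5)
Your proof is correct and is essentially identical to the paper's: both decompose $\ex{}{e^{(\alpha-1)Z}}$ into the centered MGF times $e^{(\alpha-1)\ex{}{Z}}$, bound each factor by the mCDP hypotheses, and regroup the exponent to match $(\alpha-1)(\con + \lin\alpha)$. Your version simply spells out the sign observation $\alpha - 1 > 0$, which the paper leaves implicit.
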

\begin{proof}
For all $\alpha\in(1,\infty)$,
$$\ex{}{e^{(\alpha-1)Z}} = \ex{}{e^{(\alpha-1)(Z-\ex{}{Z})}} \cdot e^{(\alpha-1)\ex{}{Z}} \leq e^{(\alpha-1)^2 \tau^2 /2} \cdot e^{(\alpha-1)\mu} = e^{(\alpha-1)(\mu-\tau^2/2 + \tau^2/2 \cdot \alpha)}.$$
\end{proof}

\begin{lem} \label{lem:IIPtoCDP}
If $M : \mathcal{X}^n \to \mathcal{Y}$ satisfies $(\con,\lin)$-\IIP{}, then $M$ satisfies $(\con+\lin,O(\sqrt{\con+2\lin}))$-\mcdp{}.
\end{lem}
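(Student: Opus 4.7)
The plan is to verify the two conditions of \mcdp{} separately. Fix neighbors $x,x' \in \mathcal{X}^n$ and set $Z = \privloss{M(x)}{M(x')}$, $m = \ex{}{Z}$. The mean bound $m \leq \con + \lin$ follows immediately by taking $\alpha \to 1^+$ in the \IIP{} hypothesis $\dr{\alpha}{M(x)}{M(x')} \leq \con + \lin\alpha$ and using $\dr{1}{P}{Q} = \ex{Z\sim\privloss{P}{Q}}{Z}$ (noted in Section \ref{sec:Renyi}). Thus the first \mcdp{} condition holds with $\mu = \con + \lin$.

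For the subgaussian MGF bound on $W := Z - m$, I would argue by cases on the sign and magnitude of $\lambda$. Written in MGF form, the \IIP{} hypothesis gives $\ex{}{e^{\lambda Z}} \leq e^{\lambda(\con+\lin) + \lambda^2 \lin}$ for $\lambda \geq 0$, hence $\ex{}{e^{\lambda W}} \leq e^{\lambda(\con + \lin - m) + \lambda^2 \lin}$. For $\lambda \geq 1$, using $\lambda \leq \lambda^2$ and $m \geq 0$ absorbs the linear term to give $\ex{}{e^{\lambda W}} \leq e^{\lambda^2(\con + 2\lin)}$. For $\lambda \leq -1$, I would apply \IIP{} to the reversed pair $(x',x)$ and change measure to obtain $\ex{}{e^{-\alpha Z}} \leq e^{(\alpha - 1)(\con + \lin\alpha)}$ for $\alpha > 1$, and then absorb analogously. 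Hence a choice of $\tau^2$ proportional to $\con + 2\lin$ works whenever $|\lambda| \geq 1$.

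The main obstacle is the regime $|\lambda| < 1$, where the linear slack $\lambda(\con + \lin - m)$ in the MGF bound prevents any direct subgaussian estimate. The extra structure to exploit is (i) $\ex{}{W} = 0$ and (ii) $\ex{}{e^{-Z}} = 1$, the latter because $Z$ is a log-likelihood ratio under $M(x)$. Together these force $f(\lambda) := \log\ex{}{e^{\lambda W}}$, which is convex with $f(0) = f'(0) = 0$, to also satisfy $f(-1) = m$ and $f(1) \leq \con + 2\lin - m$, with both boundary values at most $\con + 2\lin$. Combining these boundary values with the centering condition---for instance, by using the pointwise inequality $e^{-w} + w - 1 \geq w^2/2$ on $\{w \leq 0\}$ together with $\ex{}{e^{-W} + W - 1} = e^m - 1$ to bound the ``negative-side'' second moment of $W$ by $2(e^m - 1)$, and pairing this with the subgaussian upper tail of $Z$ obtained from \IIP{} via a Bernstein-style argument---should yield $f(\lambda) \leq \lambda^2 \cdot O(\con + \lin)$ on $[-1,1]$, completing the subgaussian estimate with $\tau = O(\sqrt{\con + 2\lin})$.
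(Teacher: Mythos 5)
Your argument for $|\lambda|\geq 1$ is correct and mirrors the paper's large-$\lambda$ case (the paper uses the threshold $|\lambda|\geq 1/2$ because it first symmetrizes, which doubles $\lambda$; see below). The mean bound $m\leq\con+\lin$ is also correct, and your observations $f(-1)=m$ and $f(1)\leq\con+2\lin-m$ are both right. The gap is in the regime $|\lambda|<1$, where your proposal is a sketch rather than a proof.

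Specifically, controlling $\ex{}{W^2\mathbb{I}[W\leq 0]}\leq 2(e^m-1)$ and the subgaussian upper tail of $Z$ is not enough on its own: the MGF $\ex{}{e^{\lambda W}}=1+\sum_{k\geq 2}\tfrac{\lambda^k}{k!}\ex{}{W^k}$ involves \emph{all} moments of $W$, not just the second, and convexity of $f$ with $f(0)=f'(0)=0$ and $f(\pm 1)\leq\con+2\lin$ by itself only yields a \emph{linear} bound $f(\lambda)\leq|\lambda|(\con+2\lin)$, not a quadratic one. A genuine ``Bernstein-style'' argument would require a uniform higher-moment condition (e.g. $\ex{}{|W|^k}\leq\tfrac{k!}{2}\sigma^2 B^{k-2}$) which you never establish; the phrase ``should yield'' in the last sentence is the tell that the key step is missing. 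The paper sidesteps this entirely via a symmetrization lemma (its Lemma \ref{lem:Symm}): $\ex{}{e^{X-\ex{}{X}}}\leq\tfrac12\ex{}{e^{2X}}+\tfrac12\ex{}{e^{-2X}}$, applied with $X=\lambda Z$. This replaces the centered MGF by the (uncentered) two-sided MGFs of $Z$, and --- crucially --- the symmetric sum $\tfrac12\ex{}{e^{2\lambda Z}}+\tfrac12\ex{}{e^{-2\lambda Z}}=\sum_{k\geq 0}\tfrac{(2\lambda)^{2k}}{(2k)!}\ex{}{Z^{2k}}$ has only \emph{even} powers of $2\lambda$. For $|2\lambda|<1$ every $(2\lambda)^{2k}$ with $k\geq 1$ is at most $(2\lambda)^2$, so the whole tail is bounded by $(2\lambda)^2\bigl(\tfrac12\ex{}{e^Z}+\tfrac12\ex{}{e^{-Z}}-1\bigr)\leq(2\lambda)^2\bigl(e^{\con+2\lin}-1\bigr)$, which gives the subgaussian bound directly. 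If you want to salvage your approach you would need to either prove the moment hierarchy needed for Bernstein, or adopt some analogue of the symmetrization step; without one of these, the $|\lambda|<1$ case is not established.
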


The proof of Lemma \ref{lem:IIPtoCDP} is deferred to the appendix.

Thus we can convert $(\mu,\tau)$-\mcdp{} into $(\mu-\tau^2/2,\tau^2/2)$-\IIP{} and then back to $(\mu, O(\sqrt{\mu+\tau^2/2}))$-\mcdp{}. This may result in a large loss in parameters, which is why, for example, pure DP can be characterised in terms of \IIP{}, but not in terms of \mcdp{}.

We view \IIP{} as a relaxation of \mcdp{}; \mcdp{} requires the privacy loss to be ``tightly'' concentrated about its mean and that the mean is close to the origin. The triangle inequality then implies that the privacy loss is ``weakly'' concentrated about the origin. (The difference between ``tightly'' and ``weakly'' accounts for the use of the triangle inequality.)
On the other hand, \IIP{} direcly requires that the privacy loss is weakly concentrated about the origin. That is to say, \IIP{} gives a subgaussian bound on the privacy loss that is centered at zero, whereas \mcdp{} gives a subgaussian bound that is centered at the mean and separately bounds the mean.

There may be some advantage to the stronger requirement of \mcdp{}, either in terms of what kind of privacy guarantee it affords, or how it can be used as an analytic tool. However, it seems that for most applications, we only need what \IIP{} provides.

\section{Group Privacy} \label{sec:GroupPrivacy}

In this section we show that \IIP{} provides privacy protections to small groups of individuals.

\begin{defn}[\IIP{} for Groups]
We say that a mechanism $M : \mathcal{X}^n \to \mathcal{Y}$ provides $(\con,\lin)$-\IIP{} for groups of size $k$ if, for every $x,x' \in \mathcal{X}^n$ differing in at most $k$ entries, we have $$\forall \alpha \in (1,\infty) \qquad \dr{\alpha}{M(x)}{M(x')} \leq \con + \lin \cdot \alpha.$$
\end{defn}
The usual definition of \IIP{} only applies to groups of size $1$. Here we show that it implies bounds for all group sizes. We begin with a technical lemma.

\begin{lem}[Triangle-like Inequality for R\'enyi Divergence]
Let $P$, $Q$, and $R$ be probability distributions. Then \begin{equation}\dr{\alpha}{P}{Q} \leq \frac{k\alpha}{k\alpha-1}\dr{ \frac{k\alpha-1}{k-1}}{P}{R} + \dr{k\alpha}{R}{Q}\label{eqn:DivTri}\end{equation} for all $k,\alpha \in (1,\infty)$.
\end{lem}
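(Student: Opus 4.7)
The plan is to use Hölder's inequality applied to the integral representation $\int P^\alpha Q^{1-\alpha}\,\mathrm{d}x$, routed through the intermediate distribution $R$. First I would rewrite
\[
\int_\Omega P(x)^\alpha Q(x)^{1-\alpha}\,\mathrm{d}x \;=\; \ex{x\sim R}{\left(\frac{P(x)}{R(x)}\right)^{\!\alpha} \left(\frac{R(x)}{Q(x)}\right)^{\!\alpha-1}},
\]
which is legitimate whenever the relevant Radon-Nikodym derivatives exist; if $P\not\ll R$ or $R\not\ll Q$ the right-hand side of \eqref{eqn:DivTri} is $+\infty$ and the inequality is vacuous.

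Next I would apply Hölder's inequality under $R$ with conjugate exponents $p,q>1$ satisfying $1/p+1/q=1$, choosing them so that the two resulting factors line up exactly with the target R\'enyi divergences. Working backwards from the desired orders $\frac{k\alpha-1}{k-1}$ and $k\alpha$, I am forced to take
\[
p \;=\; \frac{k\alpha-1}{\alpha(k-1)}, \qquad q \;=\; \frac{k\alpha-1}{\alpha-1},
\]
and one checks $p,q>1$ precisely because $\alpha>1$ and $k>1$. With this choice, $\alpha p = \frac{k\alpha-1}{k-1}$ and $(\alpha-1)q+1 = k\alpha$.

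Hölder then yields
\[
\int P^\alpha Q^{1-\alpha}\,\mathrm{d}x \;\le\; \ex{R}{(P/R)^{\alpha p}}^{1/p} \cdot \ex{R}{(R/Q)^{(\alpha-1)q}}^{1/q}.
\]
Rewriting each factor as a R\'enyi divergence,
\[
\ex{R}{(P/R)^{\alpha p}} = \exp\!\left((\alpha p-1)\,\dr{\alpha p}{P}{R}\right), \quad \ex{R}{(R/Q)^{(\alpha-1)q}} = \exp\!\left((\alpha-1)q\,\dr{(\alpha-1)q+1}{R}{Q}\right),
\]
taking logarithms, and dividing through by $\alpha-1$ gives
\[
\dr{\alpha}{P}{Q} \;\le\; \frac{\alpha p-1}{(\alpha-1)p}\,\dr{\tfrac{k\alpha-1}{k-1}}{P}{R} \;+\; \frac{(\alpha-1)q}{(\alpha-1)q}\,\dr{k\alpha}{R}{Q}.
\]
The second coefficient is $1$ by construction, and a short calculation using $\alpha p - 1 = \frac{k(\alpha-1)}{k-1}$ and $(\alpha-1)p = \frac{(\alpha-1)(k\alpha-1)}{\alpha(k-1)}$ shows the first coefficient collapses to $\frac{k\alpha}{k\alpha-1}$, completing the proof.

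The only real obstacle is guessing the right Hölder exponents; once one writes down what one wants on the right-hand side, $p$ and $q$ are uniquely determined by the two order conditions together with $1/p+1/q=1$, and everything else is bookkeeping. This also explains why the inequality looks ``triangle-like'' but with the extra factor $\frac{k\alpha}{k\alpha-1}$: the penalty is exactly the cost of invoking Hölder rather than using a pointwise product, and it tends to $1$ as $k$ or $\alpha$ grow, which is what one needs for the group-privacy application in the next section.
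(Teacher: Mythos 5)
Your proof is correct and is essentially identical to the paper's: the same rewriting of $\int P^\alpha Q^{1-\alpha}$ as an expectation under $R$, H\"older's inequality with the same exponents $p=\frac{k\alpha-1}{\alpha(k-1)}$ and $q=\frac{k\alpha-1}{\alpha-1}$, and the same final simplification of the coefficient to $\frac{k\alpha}{k\alpha-1}$. The added remarks about absolute continuity and about why the exponents are forced are minor expository additions, not a different route.
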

\begin{proof}
Let $p=\frac{k\alpha-1}{\alpha(k-1)}$ and $q=\frac{k\alpha-1}{\alpha-1}$. Then $\frac{1}{p}+\frac{1}{q}=\frac{\alpha(k-1) + (\alpha-1)}{k\alpha-1}=1$.
By H\"older's inequality,
\begin{align*}
e^{(\alpha-1)\dr{\alpha}{P}{Q}}=& \int_\Omega P(x)^\alpha Q(x)^{1-\alpha} \mathrm{d}x\\
=& \int_\Omega P(x)^\alpha R(x)^{-\alpha} \cdot R(x)^{\alpha-1} Q(x)^{1-\alpha} \cdot R(x) \mathrm{d}x\\
=& \ex{x \sim R}{ \left(\frac{P(x)}{R(x)}\right)^\alpha \cdot \left(\frac{R(x)}{Q(x)}\right)^{\alpha-1}}\\
\leq& \ex{x \sim R}{ \left(\frac{P(x)}{R(x)}\right)^{p\alpha}}^{1/p} \cdot \ex{x \sim R}{\left(\frac{R(x)}{Q(x)}\right)^{q(\alpha-1)}}^{1/q}\\
=& e^{(p\alpha-1)\dr{p\alpha}{P}{R} /p} \cdot e^{q(\alpha-1)\dr{q(\alpha-1)+1}{R}{Q} /q}.
\end{align*}
Taking logarithms and rearranging gives $$\dr{\alpha}{P}{Q} \leq \frac{p\alpha-1}{p(\alpha-1)}\dr{p\alpha}{P}{R} + \dr{q(\alpha-1)+1}{R}{Q}.$$
Now $p\alpha=\frac{k\alpha-1}{k-1}$, $q(\alpha-1)+1=k\alpha$, and $$\frac{~~\frac{p\alpha-1}{p(\alpha-1)}~~}{~~\frac{k\alpha}{k\alpha-1}~~} = \frac{p\alpha-1}{p\alpha} \cdot \frac{k\alpha-1}{k(\alpha-1)} = \frac{\frac{k\alpha-1}{k-1}-1}{~\frac{k\alpha-1}{k-1}~} \cdot \frac{k\alpha-1}{k(\alpha-1)} = \frac{k\alpha-1 -k + 1}{k\alpha-1} \cdot \frac{k\alpha-1}{k(\alpha-1)} =1,$$ as required.
\end{proof}

\begin{prop} \label{prop:GroupPrivacy}
If $M : \mathcal{X}^n \to \mathcal{Y}$ satisfies $(\con,\lin)$-\IIP{}, then $M$ gives $(\con \cdot k \sum_{i=1}^k \frac{1}{i}, \lin \cdot k^2)$-\IIP{} for groups of size $k$.
\end{prop}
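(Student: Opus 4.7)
The plan is to prove the statement by induction on the group size $k$, using the triangle-like inequality for R\'enyi divergence established just above. The $k=1$ case is exactly the hypothesis that $M$ satisfies $(\con,\lin)$-\IIP{}, since $H_1 = 1$ and $k^2 = 1$.

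For the inductive step, fix $x, x' \in \mathcal{X}^n$ differing in $k \geq 2$ entries and choose an intermediate $\hat{x}$ differing from $x$ in exactly $k-1$ entries and from $x'$ in a single entry. Set $P = M(x)$, $R = M(\hat{x})$, $Q = M(x')$, and apply the triangle-like inequality with the free parameter (confusingly called $k$ in its statement, but which I'll think of as a tunable $t$) set to $t = k$. This produces
\[
\dr{\alpha}{M(x)}{M(x')} \leq \frac{k\alpha}{k\alpha-1}\dr{\frac{k\alpha-1}{k-1}}{M(x)}{M(\hat x)} + \dr{k\alpha}{M(\hat x)}{M(x')}.
\]
The first term is bounded by the inductive hypothesis applied at order $\alpha' = (k\alpha-1)/(k-1) > 1$, giving at most $\con(k-1)H_{k-1} + \lin(k-1)(k\alpha-1)$; the second term is bounded by the base case at order $k\alpha$, giving at most $\con + \lin k\alpha$.

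The heart of the calculation is then to verify that substituting these two bounds into the triangle inequality produces exactly $\con k H_k + \lin k^2 \alpha$. For the $\lin$-part, the coefficient of $\lin \alpha$ becomes $\frac{k\alpha}{k\alpha-1}\cdot(k-1)(k\alpha-1)/\alpha + k = k(k-1) + k = k^2$, which is the critical reason for choosing $t=k$ (solving $t^2 - 2kt + k^2 = 0$ gives this as the unique value matching the target coefficient). For the $\con$-part, the bound becomes $\frac{k\alpha(k-1)}{k\alpha-1}H_{k-1} + 1$ times $\con$, and I need this to be at most $k H_k = k H_{k-1} + 1$. This reduces to $\frac{k\alpha(k-1)}{k\alpha-1} \leq k$, which rearranges to $\alpha \geq 1$ and holds by assumption.

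The only real subtlety is the choice $t=k$ in the triangle inequality; once one writes down what needs to happen for the $\lin$-coefficient to close up to $k^2$ in the induction, the quadratic $(t-k)^2 = 0$ forces the choice, and the $\con$-terms then fortuitously telescope to the harmonic sum $k H_k$ thanks to $\alpha > 1$. No other step presents a real obstacle: the base case is immediate, the inductive hypothesis applies at a valid order $\alpha' > 1$ (since $k\alpha - 1 > k - 1$), and the triangle inequality's own hypothesis $t > 1$ is satisfied because we are in the case $k \geq 2$.
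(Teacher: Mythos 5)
Your proof is correct and follows exactly the same route as the paper's: induction on $k$ with the same intermediate $\hat x$, the triangle-like inequality instantiated with parameter $t=k$, the inductive hypothesis applied at order $(k\alpha-1)/(k-1)$, and the observation that $\frac{k\alpha(k-1)}{k\alpha-1}\le k$ for $\alpha>1$ to close the $\con$-term. Your extra remark about the quadratic $(t-k)^2=0$ forcing the choice $t=k$ is a nice bit of motivation but does not change the substance.
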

Note that $$\sum_{i=1}^k \frac{1}{i} = 1+\int_1^k \frac{1}{\lceil x \rceil} \mathrm{d}x \leq 1+\int_1^k \frac{1}{ x } \mathrm{d}x = 1 + \log k.$$
Thus $(\con,\lin)$-\IIP{} implies $(\con\cdot O(k \log k), \lin \cdot k^2)$-\IIP{} for groups of size $k$.

The Gaussian mechanism shows that $k^2\lin$ is the optimal dependence on $\lin$. However, $O(k \log k)\con$ is not the optimal dependence on $\con$: $(\con,0)$-\IIP{} implies $(k\con,0)$-\IIP{} for groups of size $k$.
\begin{proof}
We show this by induction on $k$. The statement is clearly true for groups of size $1$. We now assume the statement holds for groups of size $k-1$ and will verify it for groups of size $k$.

Let $x,x' \in \mathcal{X}^n$ differ in $k$ entries. Let $\hat x \in \mathcal{X}^n$ be such that $x$ and $\hat x$ differ in $k-1$ entries and $x'$ and $\hat x$ differ in one entry.

Then, by the induction hypothesis, $$\dr{\alpha}{M(x)}{M(\hat x)} \leq \con \cdot (k-1) \sum_{i=1}^{k-1} \frac{1}{i} + \lin \cdot (k-1)^2 \cdot \alpha$$ and, by \IIP{}, $$\dr{\alpha}{M(\hat x)}{M(x')} \leq \con + \lin \cdot  \alpha$$ for all $\alpha\in(1,\infty)$.

By \eqref{eqn:DivTri}, for any $\alpha\in(1,\infty)$,
\begin{align*}
\dr{\alpha}{M(x)}{M(x')} \leq& \frac{k\alpha}{k\alpha-1}\dr{ \frac{k\alpha-1}{k-1}}{M(x)}{M(\hat x)} + \dr{k\alpha}{M(\hat x)}{M(x')}\\
\leq& \frac{k\alpha}{k\alpha-1}\left( \con \cdot (k-1) \sum_{i=1}^{k-1} \frac{1}{i} + \lin \cdot (k-1)^2 \cdot \frac{k\alpha-1}{k-1} \right) + \con + \lin \cdot k\alpha\\
=& \con \cdot \left( 1 + \frac{k\alpha}{k\alpha-1} (k-1) \sum_{i=1}^{k-1} \frac{1}{i} \right) + \lin \cdot \left( \frac{k\alpha}{k\alpha-1} (k-1)^2 \frac{k\alpha-1}{k-1}  + k\alpha \right)\\
=& \con \cdot \left( 1 + \frac{k\alpha}{k\alpha-1} (k-1) \sum_{i=1}^{k-1} \frac{1}{i} \right) + \lin \cdot k^2 \cdot \alpha\\
\leq& \con \cdot \left( 1 + \frac{k}{k-1} (k-1) \sum_{i=1}^{k-1} \frac{1}{i} \right) + \lin \cdot k^2 \cdot \alpha\\
=& \con \cdot k \sum_{i=1}^{k} \frac{1}{i}  + \lin \cdot k^2 \cdot \alpha,
\end{align*}
where the last inequality follows from the fact that $\frac{k\alpha}{k\alpha-1}$ is a decreasing function of $\alpha$ for $\alpha>1$. 
\end{proof}

\section{Lower Bounds} \label{sec:LowerBounds}

In this section we develop tools to prove lower bounds for \IIP{}. We will use group privacy to bound the mutual information between the input and the output of a mechanism satisfying \IIP{}. Thus, if we are able to construct a distribution on inputs such that any accurate mechanism must reveal a high amount of information about its input, we obtain a lower bound showing that no accurate mechanism satisfying \IIP{} can be accurate for this data distribution.

We begin with the simplest form of our mutual information bound, which is an analogue of the bound of \cite{McGregorMPRTV10} for pure differential privacy:

\begin{prop} \label{prop:MutualInformation}
Let $M : \mathcal{X}^n \to \mathcal{Y}$ satisfy $(\con,\lin)$-\IIP{}. Let $X$ be a random variable in $\mathcal{X}^n$. Then $$I( X ; M(X) ) \leq \con \cdot n (1+\log n) + \lin \cdot n^2,$$ where $I$ denotes mutual information (measured in nats, rather than bits).
\end{prop}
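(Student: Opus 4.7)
The plan is to combine the group privacy guarantee of Proposition \ref{prop:GroupPrivacy} with the variational characterization of mutual information as an expected KL divergence, and then use joint convexity of KL divergence.

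First, I would apply Proposition \ref{prop:GroupPrivacy} with $k = n$. Since any two datasets $x, x' \in \mathcal{X}^n$ differ in at most $n$ entries, this gives
\[
\dr{\alpha}{M(x)}{M(x')} \leq \con \cdot n\sum_{i=1}^n \tfrac{1}{i} + \lin \cdot n^2 \cdot \alpha \leq \con \cdot n(1+\log n) + \lin \cdot n^2 \cdot \alpha
\]
for every $\alpha \in (1,\infty)$ and every $x,x' \in \mathcal{X}^n$. Taking the limit $\alpha \to 1$ converts this into a bound on KL-divergence:
\[
\dr{1}{M(x)}{M(x')} \leq \con \cdot n(1+\log n) + \lin \cdot n^2.
\]

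Next, I would use the standard identity $I(X; M(X)) = \ex{x \sim X}{\dr{1}{M(x)}{M(X)}}$, where the second argument denotes the marginal distribution of $M(X)$, i.e.\ the mixture $\ex{x' \sim X}{M(x')}$. By joint convexity of KL divergence (the last bullet of Lemma \ref{lem:Renyi} applied with $P_0 = P_1 = M(x)$), we have $\dr{1}{M(x)}{\ex{x' \sim X}{M(x')}} \leq \ex{x' \sim X}{\dr{1}{M(x)}{M(x')}}$. Combining,
\[
I(X; M(X)) \leq \ex{x,x' \sim X}{\dr{1}{M(x)}{M(x')}} \leq \con \cdot n(1+\log n) + \lin \cdot n^2,
\]
which is the desired bound.

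The only real content in the argument is the group privacy bound of Proposition \ref{prop:GroupPrivacy}, which has already been established; everything else is standard manipulation. The one point that requires a small amount of care is justifying the $\alpha \to 1$ limit (to go from R\'enyi divergence to KL), but this is immediate from the definition of $\dr{1}{\cdot}{\cdot}$ as the limit of $\dr{\alpha}{\cdot}{\cdot}$ and the fact that the bound from group privacy is continuous in $\alpha$. Thus I do not anticipate any serious obstacle.
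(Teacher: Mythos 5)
Your proposal is correct and is essentially the same argument as the paper's proof: apply Proposition~\ref{prop:GroupPrivacy} with $k = n$ to get a uniform KL bound $\dr{1}{M(x)}{M(x')} \le \con \cdot n(1+\log n) + \lin \cdot n^2$, then use $I(X;M(X)) = \ex{x \leftarrow X}{\dr{1}{M(x)}{M(X)}}$ and convexity of KL divergence in the second argument. The only cosmetic difference is that you make the $\alpha \to 1^+$ limit explicit, which the paper leaves implicit.
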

\begin{proof}
By Proposition \ref{prop:GroupPrivacy}, $M$ provides $(\con \cdot n \sum_{i=1}^n \frac{1}{i}, \lin \cdot n^2)$-\IIP{} for groups of size $n$. Thus $$\dr{1}{M(x)}{M(x')} \leq \con \cdot n \sum_{i=1}^n \frac{1}{i} + \lin \cdot n^2 \leq \con \cdot n (1+\log n) + \lin \cdot n^2$$ for all $x,x' \in \mathcal{X}^n$.
Since KL-divergence is convex,
\begin{align*}
I( X ; M(X) ) =& \ex{x \leftarrow X}{\dr{1}{M(x)}{M(X)}}\\
\leq& \ex{x \leftarrow X}{\ex{x' \leftarrow X}{\dr{1}{M(x)}{M(x')}}}\\
\leq& \ex{x \leftarrow X}{\ex{x' \leftarrow X}{\con \cdot n (1+\log n) + \lin \cdot n^2}}\\
=& \con \cdot n (1+\log n) + \lin \cdot n^2.
\end{align*}
\end{proof}

The reason this lower bound works is the strong group privacy guarantee --- even for groups of size $n$, we obtain nontrivial privacy guarantees.
While this is good for privacy it is bad for usefulness, as it implies that even information that is ``global'' (rather than specific to a individual or a small group) is protected. These lower bounds reinforce the connection between group privacy and lower bounds \cite{HardtT10,De12,SteinkeU15b}.

In contrast, $(\varepsilon,\delta)$-DP is not susceptible to such a lower bound because it gives a vacuous privacy guarantee for groups of size $k=O(\log(1/\delta)/\varepsilon)$. This helps explain the power of the propose-test-release paradigm.

Furthermore, we obtain even stronger mutual information bounds when the entries of the distribution are independent:

\begin{lem} \label{lem:MutualInformation}
Let $M : \mathcal{X}^{m} \to \mathcal{Y}$ satisfy $(\con,\lin)$-\IIP{}. Let $X$ be a random variable in $\mathcal{X}^m$ with independent entries.  Then $$I \left( X ; M(X) \right) \leq (\con+\lin) \cdot m ,$$ where $I$ denotes mutual information (measured in nats, rather than bits).
\end{lem}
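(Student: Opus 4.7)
The plan is to combine the chain rule for mutual information with the single-entry \IIP{} bound, exploiting independence to avoid the $k^2$ blow-up that the group-privacy approach of Proposition~\ref{prop:MutualInformation} incurs for group size $k=m$. Writing $X_{<i} = (X_1, \ldots, X_{i-1})$, $X_{>i} = (X_{i+1}, \ldots, X_m)$, and $X_{-i} = (X_{<i}, X_{>i})$, the chain rule gives
\[ I(X; M(X)) \;=\; \sum_{i=1}^{m} I(X_i; M(X) \mid X_{<i}), \]
and it suffices to show that each summand is at most $\con + \lin$, which then yields the target bound of $m(\con + \lin)$.

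For each summand, I first use independence to upgrade the conditioning from $X_{<i}$ to the full $X_{-i}$. Expanding $I(X_i; (M(X), X_{>i}) \mid X_{<i})$ via the chain rule in two different orders gives
\[ I(X_i; X_{>i} \mid X_{<i}) + I(X_i; M(X) \mid X_{-i}) \;=\; I(X_i; M(X) \mid X_{<i}) + I(X_i; X_{>i} \mid M(X), X_{<i}), \]
and using $I(X_i; X_{>i} \mid X_{<i}) = 0$ by independence, together with non-negativity of the posterior term on the right, yields $I(X_i; M(X) \mid X_{<i}) \leq I(X_i; M(X) \mid X_{-i})$. Then, fixing $x_{-i}$ and using that $X_i$ is independent of $X_{-i}$, I rewrite
\[ I(X_i; M(X) \mid X_{-i} = x_{-i}) \;=\; \ex{x_i \sim X_i}{\dr{1}{M(x_{-i}, x_i)}{M(x_{-i}, X_i)}}, \]
where the second argument denotes the mixture distribution of $M(x_{-i}, X_i)$ over the randomness of both $X_i$ and $M$. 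Exactly as in the proof of Proposition~\ref{prop:MutualInformation}, convexity of KL divergence (Lemma~\ref{lem:Renyi}) in the second argument upper-bounds this by $\ex{x_i, x'_i}{\dr{1}{M(x_{-i}, x_i)}{M(x_{-i}, x'_i)}}$, where $x'_i$ is an independent copy of $X_i$. Since $(x_{-i}, x_i)$ and $(x_{-i}, x'_i)$ differ in at most one entry, $(\con, \lin)$-\IIP{} applied at (the limit) $\alpha \to 1$ gives $\dr{1}{M(x_{-i}, x_i)}{M(x_{-i}, x'_i)} \leq \con + \lin$; averaging over $x_{-i}$ preserves this.

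The main subtlety is the step $I(X_i; M(X) \mid X_{<i}) \leq I(X_i; M(X) \mid X_{-i})$: conditioning on additional variables can in general move mutual information in either direction, so it is crucial that independence forces the correction term $I(X_i; X_{>i} \mid X_{<i})$ to vanish. Once this reduction is in place, the rest is essentially the $k=1$ instance of the group-privacy argument of Proposition~\ref{prop:MutualInformation} carried out coordinate by coordinate, paying $\con + \lin$ per coordinate rather than $\con \cdot m(1 + \log m) + \lin \cdot m^2$ in aggregate.
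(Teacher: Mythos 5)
Your proof is correct. It uses the same chain-rule decomposition $I(X; M(X)) = \sum_i I(X_i; M(X) \mid X_{<i})$ as the paper, and both proofs ultimately bound each summand by $\con + \lin$ using convexity of KL divergence plus the single-entry $(\con,\lin)$-\IIP{} bound applied at $\alpha = 1$.

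The one place you diverge is in how you localize the KL term. The paper, after conditioning only on $X_{<i}$, directly writes $I(X_i; M(x, X_{i\cdots m})) = \ex{y}{\dr{1}{M(x,y,X_{i+1 \cdots m})}{M(x,X_{i \cdots m})}}$, which has the future entries $X_{>i}$ still random on both sides; it then bounds this by a single application of \emph{joint} convexity of KL divergence in $(P,Q)$, pairing each realization $z$ of $X_{>i}$ between the two arguments so that the inner comparison is always between inputs differing in one entry. You instead first upgrade the conditioning from $X_{<i}$ to all of $X_{-i}$ via the inequality $I(X_i; M(X) \mid X_{<i}) \leq I(X_i; M(X) \mid X_{-i})$, which your derivation correctly reduces (using $I(X_i; X_{>i} \mid X_{<i}) = 0$ by independence and non-negativity of the residual term) to the chain rule applied in two orders. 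After that, the future is fixed and you only need convexity of KL in the second argument. Both routes are valid; the paper's is a bit more economical (one convexity step does everything), while yours makes the role of independence more explicit at the information-theoretic level before any divergence computation happens, and isolates the convexity step to a mixture over a single coordinate. You are also right that the conditioning-upgrade step is the only place where care is needed: absent independence, conditioning on $X_{>i}$ could decrease the conditional mutual information, and the inequality you use would not follow.
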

\begin{proof}
First, by the chain rule for mutual information, $$I(X; M(X)) = \sum_{i \in [m]} I(X_i ; M(X) | X_{1 \cdots i-1}),$$
where
\begin{align*}
I(X_i ; M(X) | X_{1 \cdots i-1}) =& \ex{x \leftarrow X_{1 \cdots i-1}}{I(X_i | X_{1 \cdots i-1} = x; M(X) | X_{1 \cdots i-1} = x)}\\
=& \ex{x \leftarrow X_{1 \cdots i-1}}{I(X_i; M(x,X_{i \cdots m}))},
\intertext{by independence of the $X_i$s.}
\intertext{We can define mutual information in terms of KL-divergence:}
I(X_i; M(x,X_{i \cdots m})) 
=& \ex{y \leftarrow X_i}{\dr{1}{M(x,X_{i \cdots m}) | X_i=y}{M(x,X_{i \cdots m})}} \\
=& \ex{y \leftarrow X_i}{\dr{1}{M(x,y,X_{i+1 \cdots m})}{M(x,X_{i \cdots m})}}.
\end{align*}
By \IIP{}, we know that for all $x \in \mathcal{X}^{i-1}$, $y,y' \in \mathcal{X}$, and $z \in \mathcal{X}^{m-i}$, we have $$\dr{1}{M(x,y,z)}{M(x,y',z)} \leq \con + \lin.$$
Thus, by the convexity of KL-divergence, $$\dr{1}{M(x,y,X_{i+1 \cdots m})}{M(x,X_{i \cdots m})} \leq \con+\lin$$ for all $x$ and $y$. The result follows.
\end{proof}

More generally, we can combine dependent and independent entries as follows.

\begin{thm} \label{thm:MixLB}
Let $M : \mathcal{X}^n \to \mathcal{Y}$ satisfy $(\con,\lin)$-\IIP{}. Take $n=m \cdot \ell$. Let $X^1, \cdots, X^m$ be independent random variables on $\mathcal{X}^\ell$. Denote $X=(X^1, \cdots, X^{m}) \in \mathcal{X}^n$. Then $$I\left(X;M(X)\right) \leq m \cdot \left(  \con \cdot \ell (1+\log \ell) + \lin \cdot \ell^2 \right),$$ where $I$ denotes the mutual information (measured in nats, rather than bits). 
\end{thm}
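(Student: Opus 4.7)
The plan is to combine the chain-rule decomposition used in the proof of Lemma~\ref{lem:MutualInformation} with the group-privacy bound from Proposition~\ref{prop:GroupPrivacy}, treating each block $X^i$ as a ``super-entry'' that differs from another block in at most $\ell$ coordinates. Concretely, first I would decompose
\[
I(X;M(X)) \;=\; \sum_{i=1}^{m} I\bigl(X^i;\, M(X) \,\bigm|\, X^{1\cdots i-1}\bigr)
\]
by the chain rule for mutual information. Conditioning on $X^{1\cdots i-1}=x$ and using the fact that the blocks $X^1,\dots,X^m$ are independent (so $X^i,X^{i+1},\dots,X^m$ are still independent after this conditioning), each term becomes $\mathbb{E}_{x\leftarrow X^{1\cdots i-1}}[ I(X^i;\, M(x,X^i,X^{i+1\cdots m}))]$.

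Next, I would rewrite the inner mutual information as an expected KL-divergence,
\[
I\bigl(X^i;\, M(x,X^i,X^{i+1\cdots m})\bigr)
= \mathbb{E}_{y\leftarrow X^i}\!\left[ \dr{1}{M(x,y,X^{i+1\cdots m})}{M(x,X^i,X^{i+1\cdots m})}\right],
\]
and then apply joint convexity of KL-divergence twice: once in the second argument to ``pull out'' the average over $y'\leftarrow X^i$ in the reference distribution, and once (as in the proof of Proposition~\ref{prop:MutualInformation} / Lemma~\ref{lem:MutualInformation}) to pull out the shared randomness over $z\leftarrow X^{i+1\cdots m}$. This reduces the bound to
\[
\mathbb{E}_{y,y'\leftarrow X^i}\mathbb{E}_{z\leftarrow X^{i+1\cdots m}}\!\left[ \dr{1}{M(x,y,z)}{M(x,y',z)}\right].
\]

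Now the two inputs $(x,y,z)$ and $(x,y',z)$ differ in at most $\ell$ entries (only within the $i$th block). Here comes the key step: applying Proposition~\ref{prop:GroupPrivacy} with group size $k=\ell$ at $\alpha\to 1$ yields
\[
\dr{1}{M(x,y,z)}{M(x,y',z)} \;\le\; \con\cdot \ell\sum_{j=1}^{\ell}\tfrac{1}{j} + \lin\cdot \ell^2 \;\le\; \con\cdot \ell(1+\log\ell) + \lin\cdot \ell^2,
\]
using the integral bound on the harmonic sum already noted after Proposition~\ref{prop:GroupPrivacy}. Substituting back gives the per-block bound $\con\cdot\ell(1+\log\ell)+\lin\cdot\ell^2$, and summing over $i=1,\dots,m$ yields the claimed inequality.

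The only mildly tricky step is the double application of convexity: one needs to be careful that the reference distribution $M(x,X^i,X^{i+1\cdots m})$ is genuinely a mixture over both $y'\leftarrow X^i$ and $z\leftarrow X^{i+1\cdots m}$ (which holds because $X^i$ and $X^{i+1\cdots m}$ are independent conditional on $X^{1\cdots i-1}$), so that joint convexity applies cleanly to reduce to pointwise divergences on which group privacy can be invoked. Everything else is routine, and the theorem is essentially the natural ``interpolation'' between Proposition~\ref{prop:MutualInformation} (the case $m=1$, $\ell=n$) and Lemma~\ref{lem:MutualInformation} (the case $\ell=1$, $m=n$).
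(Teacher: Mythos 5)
Your proposal is correct and follows essentially the same route as the paper's proof: first invoke Proposition~\ref{prop:GroupPrivacy} with group size $\ell$ to bound $\dr{1}{M(x,y,z)}{M(x,y',z)}$ for blocks differing in one ``super-entry,'' then use the chain rule for mutual information over the $m$ independent blocks and convexity of KL-divergence to reduce to those pointwise bounds. Your explicit two-stage application of convexity (first over $y'$ in the reference distribution, then over the shared $z$) is simply a more spelled-out version of the paper's one-line ``by \eqref{eqn:KLgroup} and the convexity of KL-divergence,'' and is valid since KL is convex in the second argument alone and jointly convex.
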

\begin{proof}
By Proposition \ref{prop:GroupPrivacy}, $M$ provides $(\con \cdot \ell \sum_{i=1}^\ell \frac{1}{i}, \lin \cdot \ell^2)$-\IIP{} for groups of size $\ell$. 
Thus \begin{equation}\dr{1}{M(x_1, \cdots, x_i, \cdots, x_m)}{M(x_1, \cdots, x_i', \cdots, x_m)} \leq \con \cdot \ell \sum_{i=1}^\ell \frac{1}{i} + \lin \cdot \ell^2 \leq \con \cdot \ell (1+\log \ell) + \lin \cdot \ell^2 \label{eqn:KLgroup}\end{equation} for all $x_1, \cdots, x_m, x_i' \in \mathcal{X}^\ell$.

By the chain rule for mutual information, $$I(X; M(X)) = \sum_{i \in [m]} I(X_i ; M(X) | X_{1 \cdots i-1}),$$
where
\begin{align*}
I(X_i ; M(X) | X_{1 \cdots i-1}) =& \ex{x \leftarrow X_{1 \cdots i-1}}{I(X_i | X_{1 \cdots i-1} = x; M(X) | X_{1 \cdots i-1} = x)}\\
=& \ex{x \leftarrow X_{1 \cdots i-1}}{I(X_i; M(x,X_{i \cdots m}))},
\intertext{by independence of the $X_i$s.}
\intertext{We can define mutual information in terms of KL-divergence:}
I(X_i; M(x,X_{i \cdots m})) 
=& \ex{y \leftarrow X_i}{\dr{1}{M(x,X_{i \cdots m}) | X_i=y}{M(x,X_{i \cdots m})}} \\
=& \ex{y \leftarrow X_i}{\dr{1}{M(x,y,X_{i+1 \cdots m})}{M(x,X_{i \cdots m})}}.
\end{align*}
By \eqref{eqn:KLgroup} and the convexity of KL-divergence, $$\dr{1}{M(x,y,X_{i+1 \cdots m})}{M(x,X_{i \cdots m})} \leq \con \cdot \ell (1+\log \ell) + \lin \cdot \ell^2$$ for all $x$ and $y$. The result follows.
\end{proof}

\subsection{Example Applications of the Lower Bound}

We informally discuss a few applications of our information-based lower bounds to some simple and well-studied problems in differential privacy.

\paragraph{One-Way Marginals}
Consider $M : \mathcal{X}^n \to \mathcal{Y}$ where $\mathcal{X}=\{0,1\}^d$ and $\mathcal{Y}=[0,1]^d$. The goal of $M$ is to estimate the attribute means, or one-way marginals, of its input database $x$: $$M(x) \approx \overline{x} = \frac{1}{n} \sum_{i \in [n]} x_i.$$ 

It is known that this is possible subject to $\varepsilon$-DP if and only if $n=\Theta(d/\varepsilon)$ \cite{HardtT10,SteinkeU15b}. This is possible subject to $(\varepsilon,\delta)$-DP if and only if $n=\tilde{\Theta}(\sqrt{d\log(1/\delta)}/\varepsilon)$, assuming $\delta \ll 1/n$ \cite{BunUV14,SteinkeU15b}.

We now analyze what can be accomplished with \IIP{}. Adding independent noise drawn from $\mathcal{N}(0,d/2n^2\lin)$ to each of the $d$ coordinates of $\overline{x}$ satisfies $\lin$-\IIP{}. This gives accurate answers as long as $n \gg \sqrt{d/\lin}$.

For a lower bound, consider sampling $X_1 \in \{0,1\}^d$ uniformly at random. Set $X_i=X_1$ for all $i \in [n]$. By Proposition \ref{prop:MutualInformation}, $$I(X ; M(X) ) \leq n^2 \lin$$ for any $\lin$-\IIP{} $M : (\{0,1\}^d)^n \to [0,1]^d$. However, if $M$ is accurate, we can recover (most of) $X_1$ from $M(X)$, whence $I(X ; M(X) ) \geq \Omega(d)$.
 This yields a lower bound of $n \geq \Omega(\sqrt{d/\lin})$, which is tight up to constant factors.

\paragraph{Histograms (a.k.a.~Point Queries)}
Consider $M : \mathcal{X}^n \to \mathcal{Y}$, where $\mathcal{X}=[T]$ and $\mathcal{Y}=\mathbb{R}^T$. The goal of $M$ is to estimate the histogram of its input: $$ M(x)_t \approx h_t(x) = |\{i \in [n] : x_i = t\}|$$

For $\varepsilon$-DP it is possible to do this if and only if $n = \Theta(\log(T)/\varepsilon))$; the optimal algorithm is to independently sample $$M(x)_t \sim h_t(x) + \mathsf{Laplace}(2/\varepsilon).$$ However, for $(\varepsilon,\delta)$-DP, it is possible to attain sample complexity $n = O(\log(1/\delta)/\varepsilon)$ \cite[Theorem 3.13]{BeimelNS13, BunNS16}.
Interestingly, for \IIP{} we can show that $n=\Theta(\sqrt{\log(T)/\lin})$ is sufficient and necessary:

Sampling $$M(x)_t \sim h_t(x) + \mathcal{N}(0,1/\lin)$$ independently for $t \in [T]$ satisfies $\lin$-\IIP{}. Moreover, $$\pr{}{\max_{t \in [T]} \left| M(x)_t - h_t(x) \right| \geq \lambda} \leq T \cdot \pr{}{|\mathcal{N}(0,1/\lin)| > \lambda} \leq T \cdot e^{-\lambda^2\lin/2}.$$ In particular $\pr{}{\max_{t \in [T]} \left| M(x)_t - h_t(x) \right| \geq \sqrt{\log(T/\beta)/\lin}} \leq \beta$ for all $\beta>0$. Thus this algorithm is accurate if $n \gg \sqrt{\log(T)/\lin}$.

On the other hand, if we sample $X_1 \in [T]$ uniformly at random and set $X_i=X_1$ for all $i \in [n]$, then $I(X;M(X)) \geq \Omega(\log T)$ for any accurate $M$, as we can recover $X_1$ from $M(X)$ if $M$ is accurate. Proposition \ref{prop:MutualInformation} thus implies that $n \geq \Omega(\sqrt{\log(T)/\lin})$ is necessary to obtain accuracy.

This gives a strong separation between approximate DP and \IIP{}.

\paragraph{Randomized Response and the Exponential Mechanism}

Consider $M : \{\pm 1\}^n \to \{\pm 1\}^n$ where the goal is to maximize $\langle x , M(x) \rangle$ subject to $\lin$-\IIP{}.

One solution is randomized response \cite{Warner65}: Each output bit $i$ of $M$ is chosen independently with $$\pr{}{M(x)_i=x_i} = \frac{e^\varepsilon}{e^\varepsilon+1}.$$ This satisfies $\varepsilon$-DP and, hence, $\frac12 \varepsilon^2$-\IIP{}. And $\ex{}{\langle x, M(x) \rangle} = n (e^\varepsilon-1)/(e^\varepsilon+1) = \Theta(n \varepsilon)$.
Alternatively, we can independently choose the output bits $i$ according to $$M(x)_i = \mathrm{sign}\left( \mathcal{N}(x_i, \sqrt{2}/\lin) \right),$$ which satisfies $\lin$-\IIP{}.

Turning our attention to lower bounds: Let $X \in \{\pm 1\}^n$ be uniformly random. By Lemma \ref{lem:MutualInformation}, since the bits of $X$ are independent, we have $I(X; M(X)) \leq \lin \cdot n$ for any $\lin$-\IIP{} $M$. However, if $M$ is accurate, we can recover part of $X$ from $M(X)$ \cite{BunSU16}, whence $I(X;M(X)) \geq \Omega(n)$.

Randomized response is a special case of the exponential mechanism \cite{McSherryT07,BlumLR08}. Consequently this can be interpreted as a lower bound for search problems.

\paragraph{Lower Bounds with Accuracy}
The above examples can be easily discussed in terms of a more formal and quantitative definition of accuracy. In particular, we consider the histogram example again:
\begin{prop}
If $M : [T]^n \to \mathbb{R}^T$ satisfies $\lin$-\IIP{} and  $$\forall x \in [T]^n \qquad \ex{M}{\max_{t \in [T]} \big| M(x)_t - h_t(x) \big|} \leq \alpha n,$$ then $n \geq \Omega(\sqrt{\log(\alpha^2  T)/\rho \alpha^2})$.
\end{prop}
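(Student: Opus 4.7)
The plan is to prove the stronger inequality $n \geq \Omega(\sqrt{\log T/(\lin\alpha^2)})$, which implies the claim since $\log(\alpha^2 T) \leq \log T$ (and the claim is anyway vacuous once $\alpha^2 T$ is below an absolute constant). The argument parallels the informal histogram lower bound sketched earlier in the paper, except that the packing is $\alpha$-scaled so that the $k^2$ saving in the group-privacy bound (Proposition~\ref{prop:GroupPrivacy}) can be exploited rather than discarded.

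Fix any $\star \in [T]$ and, for each $w \in [T]\setminus\{\star\}$, let $X(w) \in [T]^n$ consist of $k := \lceil 5\alpha n\rceil$ copies of $w$ together with $n-k$ copies of $\star$ (assume $\alpha \leq 1/5$ so that $k \leq n$; the complementary regime is handled by the essentially identical all-rows-identical packing of Proposition~\ref{prop:MutualInformation}). Let $W$ be uniform on $[T]\setminus\{\star\}$. For distinct $w, w' \neq \star$, the datasets $X(w)$ and $X(w')$ differ in exactly $k$ entries, so Proposition~\ref{prop:GroupPrivacy} with $\con=0$ gives $\dr{\alpha}{M(X(w))}{M(X(w'))} \leq \lin k^2\alpha$ for all $\alpha > 1$; monotonicity of R\'enyi divergence in the order then yields $\dr{1}{M(X(w))}{M(X(w'))} \leq \lin k^2$. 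Convexity of KL in its second argument, exactly as in the proof of Proposition~\ref{prop:MutualInformation}, therefore gives
\[I(W;M(X(W))) \ \leq\ \ex{w,w'}{\dr{1}{M(X(w))}{M(X(w'))}} \ \leq\ \lin k^2.\]

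For the matching lower bound on $I(W;M(X(W)))$ I would use accuracy together with Fano's inequality. Markov's inequality applied to the accuracy hypothesis gives $\Pr[\max_t |M(X(W))_t - h_t(X(W))| \leq 2\alpha n] \geq 1/2$; on this event $M(X(W))_W \geq k - 2\alpha n \geq 3\alpha n$ while $M(X(W))_{w'} \leq 2\alpha n$ for every $w' \notin \{W,\star\}$, so the estimator $\hat W := \arg\max_{t\neq\star} M(X(W))_t$ equals $W$ with probability at least $1/2$. Fano's inequality thus bounds $H(W \mid M(X(W))) \leq \log 2 + \tfrac{1}{2}\log(T-2)$, and subtracting from $H(W) = \log(T-1)$ gives $I(W;M(X(W))) \geq \tfrac{1}{2}\log T - O(1)$ for $T$ above an absolute constant. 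Combining with the display above yields $\tfrac{1}{2}\log T - O(1) \leq \lin k^2 = O(\lin\alpha^2 n^2)$, i.e.\ $n \geq \Omega(\sqrt{\log T/(\lin\alpha^2)})$, as required. The only obstacle is the constant-factor bookkeeping that guarantees the $k - 2\alpha n$ gap strictly dominates the $2\alpha n$ noise tolerance, which dictates the choice $k \approx 5\alpha n$; everything else is a direct combination of group privacy, convexity of KL, Markov, and Fano.
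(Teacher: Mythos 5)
Your argument is correct, and it follows a genuinely different route from the paper's. The paper partitions the dataset into $m=1/(10\alpha)$ independent blocks of $\ell=10\alpha n$ identical entries each, applies Theorem~\ref{thm:MixLB} to get $I(X;M(X))\le\lin m\ell^2 = 10\lin\alpha n^2$, and lower-bounds $I$ via the set-valued decoders $f(X)=\{$values present$\}$ and $g(M(X))=\{t:M(X)_t\ge 5\alpha n\}$ together with a subadditivity-of-entropy accounting. You instead use a single marked block of $k=\lceil 5\alpha n\rceil$ copies of a secret $w$ against a fixed background $\star$, so that $X(w)$ and $X(w')$ differ in exactly $k$ entries; Proposition~\ref{prop:GroupPrivacy} plus the convexity-of-KL step from Proposition~\ref{prop:MutualInformation} then give $I(W;M(X(W)))\le\lin k^2 = O(\lin\alpha^2 n^2)$, and Markov plus Fano give $I\ge\tfrac12\log T-O(1)$. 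Both sides of the trade-off are different: your upper bound is smaller by a factor $\alpha$ (one block, not $m$), your lower bound is smaller by a factor $m=\Theta(1/\alpha)$ (one secret, not $m$), and these exactly cancel, so you land on $n\ge\Omega(\sqrt{\log T/(\lin\alpha^2)})$. That is in fact cleaner than the paper's $\Omega(\sqrt{\log(\alpha^2 T)/(\lin\alpha^2)})$: the $m\log m$ loss in the paper's entropy accounting, which becomes $\alpha^{1.25}$ (rounded to $\alpha^2$) inside the logarithm, disappears because Fano on a single secret has no such overhead. Your route also avoids Theorem~\ref{thm:MixLB} entirely, needing only Proposition~\ref{prop:GroupPrivacy} and the basic $I\le\mathbb{E}[D_1]$ step.

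One small caveat worth tightening in a final write-up: the dispatch to ``the complementary regime'' for $\alpha>1/5$ is not quite as automatic as you state. The all-rows-identical packing needs $(1-2\alpha)n>2\alpha n$, i.e.\ $\alpha<1/4$, for the argmax decoder to succeed, so the window $\alpha\in(1/5,1/4)$ does work but $\alpha\ge1/4$ does not, and indeed for $\alpha\ge1$ the accuracy hypothesis is vacuously satisfiable and no lower bound can hold. The paper's own proof has the same implicit restriction (it needs $m=1/(10\alpha)\ge1$). The cleanest fix is simply to state the proposition for $\alpha$ bounded away from $1/4$, or to observe that for $\alpha\ge1/4$ the $\Omega(\cdot)$ constant can absorb the claim. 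This is bookkeeping, not a gap in the idea.
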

\begin{proof}
Let $m=1/10\alpha$ and $\ell=n/m$. For simplicity, assume that both $m$ and $n$ are integral.

Let $X_1, X_2, \cdots, X_m \in [T]^\ell$ be independent, where each $X_i$ is $\ell$ copies of a uniformly random element of $[T]$. By Theorem \ref{thm:MixLB}, \begin{equation}I(X;M(X)) \leq \lin \cdot m \cdot \ell^2 = 10 \lin \alpha n^2,\label{eqn:Iupper}\end{equation} where $X = (X_1, \cdots, X_m) \in \mathcal{X}^n$. 
However, \begin{align*} I(X;M(X)) \geq& I(f(X);g(M(X))) \\=& H(f(X))-H(f(X)|g(M(X))) \\=& H(X) - H(X|f(X))-H(f(X)|g(M(X)))\end{align*} for any functions $f$ and $g$, where $H$ is the entropy (in nats). In particular, we let $$f(x)= \{t \in T : \exists i \in [n] ~~ x_i = t\} \qquad \text{and} \qquad g(y) = \{t \in T : y_t \geq 5 \alpha n\}.$$
Clearly $H(X)=m \log T$. Furthermore, $H(X|f(X)) \leq m \log m$, since $X$ can be specified by naming $m$ elements of $f(X)$, which is a set of at most $m$ elements.

If \begin{equation}\max_{t \in [T]} \big| M(X)_t - h_t(X) \big| < 5\alpha, \label{eqn:AccEvent}\end{equation} then $g(M(X))$ contains exactly all the values in $X$ --- i.e. $f(X)=g(M(X))$. By Markov's inequality, \eqref{eqn:AccEvent} holds with probability at least $4/5$. 

Now we can upper bound $H(f(X)|g(M(X)))$ by giving a scheme for specifying $f(X)$ given $g(M(X))$. If \eqref{eqn:AccEvent} holds, we simply need one bit to say so. If \eqref{eqn:AccEvent} does not hold, we need one bit to say this and $m \log_2 T$ bits to describe $f(X)$. This gives $$H(f(X)|g(M(X))) \leq \log 2 + \pr{}{f(X) \ne g(M(X))} \cdot m \log T.$$

Combining these inequalities gives $$I(X;M(X)) \geq m \log T - m \log m - \log 2 - \frac{1}{5} m \log T \geq \frac{4}{5} m \log (T m^{-5/4}) - 1 \geq \Omega(\log(\alpha^{1.25} T)/\alpha).$$
Combining this with \eqref{eqn:Iupper} completes the proof.
\end{proof}

We remark that our lower bounds for \IIP{} can be converted to lower bounds for \mcdp{} using Lemma \ref{lem:CDPtoIIP}.

\section{Obtaining Pure DP Mechanisms from \IIP{}} \label{sec:CDPvsPDP}

We now establish limits on what more can be achieved with \IIP{} over pure differential privacy. In particular, we will prove that any mechanism satisfying \IIP{} can be converted into a mechanism satisfying pure DP with at most a quadratic blowup in sample complexity. Formally, we show the following theorem.

\begin{thm} \label{thm:CDPtoPDP}
Fix $n \in \mathbb{N}$, $n' \in \mathbb{N}$, $k \in \mathbb{N}$ $\alpha>0$, and $\varepsilon>0$. Let $q : \mathcal{X} \to \mathbb{R}^k$ and let $\|\cdot\|$ be a norm on $\mathbb{R}^k$. Assume $\max_{x \in \mathcal{X}} \|q(x)\| \leq 1$. 

Suppose there exists a $(\con,\lin)$-\IIP{} mechanism $M : \mathcal{X}^n \to \mathbb{R}^k$ such that for all $x \in \mathcal{X}^n$, $$\ex{M}{\|M(x)-q(x)\|} \leq {\alpha}.$$

Assume $\con \leq \alpha^2$, $\lin \leq \alpha^2$, and $$n' \geq \frac{4}{\varepsilon\alpha} \left( \lin \cdot n^2 + \con \cdot n \cdot (1 + \log n)  + 1 \right).$$

Then there exists a $(\varepsilon,0)$-differentially private $M' : \mathcal{X}^{n'} \to \mathbb{R}^k$ satisfying $$\ex{M'}{\|M'(x)-q(x)\|} \leq 10\alpha$$ and $$\pr{M'}{\|M'(x)-q(x)\| > 10\alpha + \frac{4}{\varepsilon n'} \log\left(\frac{1}{\beta}\right)} \leq \beta$$ for all $x \in \mathcal{X}^{n'}$ and $\beta>0$.
\end{thm}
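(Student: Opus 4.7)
The plan is to use the hypothesized zCDP mechanism $M$ as a device to establish a structural property of $q$---namely, that the image $q(\mathcal{X}^n)$ admits a small $O(\alpha)$-cover---and then to run the exponential mechanism over this cover to obtain a pure $\varepsilon$-DP mechanism on $\mathcal{X}^{n'}$. Intuitively, Proposition \ref{prop:MutualInformation} says $M(X)$ carries at most $\lin n^2 + \con n(1{+}\log n)$ nats of information about $X$, so there can be at most exponentially many inputs that $M$ distinguishes; combined with the accuracy hypothesis, this limits the number of distinct query answers $q$ can take.

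First I would bound the size of a maximal $4\alpha$-packing $\mathcal{P} = \{q(x^{(1)}), \ldots, q(x^{(P)})\}$ of $q(\mathcal{X}^n)$ via a Fano-style argument. Let $X$ be uniform on $\{x^{(1)}, \ldots, x^{(P)}\}$, so $H(X) = \log P$. Proposition \ref{prop:MutualInformation} gives $I(X; M(X)) \leq \lin n^2 + \con n(1+\log n)$. Defining the nearest-neighbor decoder $f(y) = \arg\min_i \|y - q(x^{(i)})\|$, the $4\alpha$-packing property together with the triangle inequality forces $f(M(X)) = X$ whenever $\|M(X) - q(X)\| \leq 2\alpha$. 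Markov's inequality applied to the accuracy hypothesis $\ex{M}{\|M(X)-q(X)\|} \leq \alpha$ ensures that this event occurs with probability at least $1/2$. Fano's inequality then yields $H(X \mid f(M(X))) \leq \log 2 + (1/2)\log P$, and combining with the data-processing inequality $I(X; f(M(X))) \leq I(X; M(X))$ gives $\log P \leq 2\bigl(\lin n^2 + \con n(1+\log n)\bigr) + O(1)$.

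Second, by maximality, $\mathcal{Y} := \mathcal{P}$ also serves as a $4\alpha$-cover of $q(\mathcal{X}^n)$. Extending $q$ to tuples by averaging entries, $q(\mathcal{X}^{n'})$ lies in the same convex hull of $q(\mathcal{X})$ as $q(\mathcal{X}^n)$, so $\mathcal{Y}$ remains an $O(\alpha)$-cover of $q(\mathcal{X}^{n'})$. I would then take $M' : \mathcal{X}^{n'} \to \mathbb{R}^k$ to be the exponential mechanism with candidate set $\mathcal{Y}$, utility $u_x(\hat y) = -\|q(x) - \hat y\|$, and privacy parameter $\varepsilon$. Since $\|q(x_i)\| \leq 1$ for each $x_i \in \mathcal{X}$, the sensitivity of $u_x$ with respect to a single entry of $x$ is at most $2/n'$, so $M'$ satisfies pure $\varepsilon$-DP by the standard analysis. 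The exponential-mechanism concentration bound then gives
\[
\Pr\bigl[\|M'(x) - q(x)\| > 4\alpha + (4/(\varepsilon n'))\log(|\mathcal{Y}|/\beta)\bigr] \leq \beta,
\]
and substituting $\log|\mathcal{Y}| \leq 2(\lin n^2 + \con n(1+\log n)) + O(1)$ together with the hypothesis $n' \geq 4(\lin n^2 + \con n(1+\log n) + 1)/(\varepsilon\alpha)$ makes the first logarithmic term $O(\alpha)$, yielding the claimed $10\alpha$ expected error and the stated tail bound with additive $(4/(\varepsilon n'))\log(1/\beta)$.

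The main obstacle is the packing bound in Step 1: executing the Fano argument carefully so that the factor of $2$ in front of $\lin n^2 + \con n(1+\log n)$ (and hence the factor $4$ in the hypothesis on $n'$) comes out with the right constants, and ensuring that the probability-of-success bound $1/2$ gives usable entropy terms in Fano. A secondary subtlety is the cover-compatibility between $\mathcal{X}^n$ and $\mathcal{X}^{n'}$, but since both map into $\mathrm{conv}(q(\mathcal{X}))$ under the averaging extension this is immediate.
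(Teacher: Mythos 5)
Your overall architecture matches the paper's: a packing of $q(\mathcal{X}^n)$ whose size is controlled by the mutual-information bound (Proposition \ref{prop:MutualInformation}), followed by the exponential mechanism over that packing-as-net. Your Fano-style decoding argument is functionally the same as the paper's Lemma \ref{lem:PackLB}, just phrased differently, and this part is sound.

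The gap is in the step you dismiss as ``a secondary subtlety.'' You claim that since $q(\mathcal{X}^n)$ and $q(\mathcal{X}^{n'})$ both lie in $\mathrm{conv}(q(\mathcal{X}))$, a $4\alpha$-net of $q(\mathcal{X}^n)$ is automatically an $O(\alpha)$-net of $q(\mathcal{X}^{n'})$. This does not follow: two subsets of the same convex body need not cover each other. For example, with $\mathcal{X}=\{a,b\}$, $q(a)=0$, $q(b)=1$, and $n=2$ we have $q(\mathcal{X}^2)=\{0,\tfrac12,1\}$, whereas $q(\mathcal{X}^{n'})$ for large $n'$ is a fine grid in $[0,1]$; a $4\alpha$-net of the former is far from a $4\alpha$-net of the latter when $\alpha$ is small. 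What actually makes the step go through is the paper's Lemma \ref{lem:SubSampAcc}: it shows, via a stability/generalization argument (Proposition \ref{prop:Generalize}), that $\max_{x\in\mathcal{X}^{n'}}\min_{\hat x\in\mathcal{X}^n}\|q(\hat x)-q(x)\|\le 2\alpha+\sqrt{2(\con+\lin)}$. This is nontrivial, and it is precisely the place where the hypothesis $\con,\lin\le\alpha^2$ is used --- a hypothesis that your proposal never invokes, which is a clear sign something is missing. Intuitively, the hypothesis rules out the pathology above: a very accurate $(\con,\lin)$-\IIP{} mechanism on $\mathcal{X}^n$ with $\con,\lin\le\alpha^2$ cannot exist unless $n$ is already large enough that $q(\mathcal{X}^n)$ fills out $\mathrm{conv}(q(\mathcal{X}))$ to within $O(\alpha)$. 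Proving that requires the argument in Lemma \ref{lem:SubSampAcc}, not just a convex-hull observation. You should replace the convex-hull sentence with an appeal to (or a proof of) that lemma, and explicitly thread the $\con,\lin\le\alpha^2$ assumption through it.
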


Before discussing the proof of Theorem \ref{thm:CDPtoPDP}, we make some remarks about its statement:
\begin{itemize}
\item Unfortunately, the theorem only works for families of statistical queries $q : \mathcal{X} \to \mathbb{R}^k$. However, it works equally well for $\|\cdot\|_\infty$ and $\|\cdot\|_1$ error bounds.
\item If $\con=0$, we have $n' = O(n^2 \lin / \varepsilon \alpha)$. So, if $\lin$, $\varepsilon$, and $\alpha$ are all constants, we have $n'=O(n^2)$. This justifies our informal statement that we can convert any mechanism satisfying \IIP{} into one satisfying pure DP with a quadratic blowup in sample complexity.
\item Suppose $M : \mathcal{X}^n \to \mathbb{R}^k$ is the Gaussian mechanism scaled to satisfy $\lin$-\IIP{} and $\|\cdot\| = \|\cdot\|_1/{k}$. Then $$\alpha = \ex{}{\|M(x)-q(x)\|} = \Theta\left(\sqrt{\frac{k}{\lin n^2}}\right).$$ In particular, $n=\Theta(\sqrt{k/\lin\alpha^2})$.
The theorem then gives us a $\varepsilon$-DP $M' : \mathcal{X}^{n'} \to \mathbb{R}^k$ with $\ex{}{\|M'(x)-q(x)\|} \leq O(\alpha) $ for $$n' = \Theta\left(\frac{n^2 \lin }{ \varepsilon \alpha}\right) =  \Theta\left(\frac{k}{\alpha^3 \varepsilon}\right).$$
However, the Laplace mechanism achieves $\varepsilon$-DP and $\ex{}{\|M'(x)-q(x)\|} \leq \alpha $ with $n=\Theta(k/\alpha\varepsilon)$.

This example illustrates that the theorem is not tight in terms of $\alpha$; it loses a $1/\alpha^2$ factor here. However, the other parameters are tight. 

\item The requirement that $\con, \lin \leq \alpha^2$ is only used to show that \begin{equation}\max_{x \in \mathcal{X}^{n'}} \min_{\hat x \in \mathcal{X}^n} \|q(x)-q(\hat x)\| \leq 2\alpha \label{eqn:MaxMinSubSamp}\end{equation} using Lemma \ref{lem:SubSampAcc}. However, in many situations \eqref{eqn:MaxMinSubSamp} holds even when $\con,\lin \gg \alpha^2$.
For example, if $n \geq O(\log(k)/\alpha^2)$ or even $n \geq O(VC(q)/\alpha^2)$ then \eqref{eqn:MaxMinSubSamp} is automatically satisfied.

The technical condition \eqref{eqn:MaxMinSubSamp} is needed to relate the part of the proof with inputs of size $n$ to the part with inputs of size $n'$. 

Thus we can restate Theorem \ref{thm:CDPtoPDP} with the condition $\con, \lin \leq \alpha^2$ replaced by \eqref{eqn:MaxMinSubSamp}. This would be more general, but also more mysterious.
\end{itemize}
Alas, the proof of Theorem \ref{thm:CDPtoPDP} is not constructive. Rather than directly constructing a mechanism satisfying pure DP from any mechanism satisfying \IIP{}, we show the contrapositive statement: any lower bound for pure DP can be converted into a lower bound for \IIP{}. Pure DP is characterized by so-called packing lower bounds and the exponential mechanism.

We begin by giving a technical lemma showing that for any output space and any desired accuracy we have a ``packing'' and a ``net:''

\begin{lem} \label{lem:PackNet}
Let $(\mathcal{Y},d)$  be a metric space. Fix $\alpha>0$. Then there exists a countable $T \subset \mathcal{Y}$ such that both of the following hold.
\begin{itemize}
\item (Net:) Either $T$ is infinite or for all $y' \in \mathcal{Y}$ there exists $y \in T$ with $d(y,y') \leq \alpha$.
\item (Packing:) For all $y,y' \in T$, if $y \ne y'$, then $d(y,y')>\alpha$.
\end{itemize}
\end{lem}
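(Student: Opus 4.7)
The plan is to construct $T$ greedily by a simple induction on $\mathbb{N}$, using the fact that the lemma's first condition is phrased disjunctively: if $T$ turns out to be infinite, no net property is required, so the only obstacle to being a net is that the greedy process halts, and halting is exactly what guarantees the net property.

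Formally, I would define a sequence $y_0, y_1, y_2, \ldots$ of elements of $\mathcal{Y}$ by dependent choice. Fix any $y_0 \in \mathcal{Y}$ (if $\mathcal{Y}$ is empty the lemma is trivial with $T=\emptyset$). Having chosen $y_0, \ldots, y_n$, set $T_n = \{y_0, \ldots, y_n\}$. If there exists some $y \in \mathcal{Y}$ with $d(y, y_i) > \alpha$ for every $i \le n$, pick one such $y$ as $y_{n+1}$; otherwise terminate and set $T = T_n$. If the process never terminates, set $T = \{y_n : n \in \mathbb{N}\}$. In either case $T$ is countable: it is finite in the first case and countably infinite in the second.

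The packing property is immediate from the construction: at each step, the newly added point is at distance strictly greater than $\alpha$ from every previously chosen point, so any two distinct points of $T$ satisfy $d(y, y') > \alpha$. For the net-or-infinite dichotomy, I would argue by cases. If the process terminates at some stage $n$, then by the termination criterion, there is no $y \in \mathcal{Y}$ with $d(y, y_i) > \alpha$ for all $i \le n$; equivalently, every $y \in \mathcal{Y}$ has some $y_i \in T$ with $d(y_i, y) \le \alpha$, which is exactly the net condition. If the process never terminates, then $T$ is infinite, satisfying the other disjunct of the claim.

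There is no real obstacle here beyond being careful about the disjunctive phrasing: a general metric space need not admit a countable maximal $\alpha$-separated set that is also an $\alpha$-net (this would require separability), and the lemma sidesteps this precisely by allowing an infinite $T$ to forgo the net guarantee. Using dependent choice (equivalently, iterating the selection countably many times) rather than Zorn's lemma on all maximal packings is what keeps $T$ countable in every case.
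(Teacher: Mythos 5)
Your proof is correct and takes essentially the same greedy approach as the paper: iteratively pick points at distance more than $\alpha$ from all previously chosen ones, terminate when impossible, and observe that the dichotomy between halting (giving the net property) and running forever (giving a countably infinite $T$) matches the lemma's disjunction. The paper phrases the iteration in terms of shrinking a set $A$ of candidate points rather than checking a termination predicate directly, but the argument is the same.
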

\begin{proof}
Consider the following procedure for producing $T$.
\begin{itemize}
\item Initialize $A\leftarrow\mathcal{Y}$ and $T\leftarrow\emptyset$.
\item Repeat:
\begin{itemize}
\item If $A=\emptyset$, terminate.
\item Pick some $y \in A$.
\item Update $T \leftarrow T \cup \{y\}$.
\item Update $A\leftarrow\{y' \in A : d(y',y) > \alpha\}$.
\end{itemize}
\end{itemize}

This procedure either terminates giving a finite $T$ or runs forever enumerating a countably infinite $T$.

(Net:) If $T$ is infinite, we immediately can dispense the first condition, so suppose the procedure terminates and $T$ is finite.  Fix $y' \in \mathcal{Y}$.  Since the procedure terminates, $A=\emptyset$ at the end, which means $y'$ was removed from $A$ at some point. This means some $y \in T$ was added such that $d(y',y) \leq \alpha$, as required.

(Packing:) Fix $y \ne y' \in T$. We assume, without loss of generality, that $y$ was added to $T$ before $y'$. This means $y'$ was not removed from $A$ when $y$ was added to $T$. In particular, this means $d(y',y)>\alpha$.
\end{proof}

It is well-known that a net yields a pure DP algorithm:

\begin{lem}[Exponential Mechanism \cite{McSherryT07,BlumLR08}] \label{lem:ExpMech}
Let $\ell : \mathcal{X}^n \times T \to \mathbb{R}$ satisfy $|\ell(x,y) - \ell(x',y)| \leq \Delta$ for all 
$x,x' \in \mathcal{X}^n$ differing in one entry and all $y \in T$. Then, for all $\varepsilon>0$, there exists an $\varepsilon$-differentially private $M : \mathcal{X}^n \to T$ such that $$\pr{M}{\ell(x,M(x)) \leq \min_{y \in T} \ell(x,y) + \frac{2\Delta}{\varepsilon}\log\left(\frac{|T|}{\beta}\right)} \geq 1-\beta $$ and $$\ex{M}{\ell(x,M(x))} \leq \min_{y \in T} \ell(x,y) + \frac{2\Delta}{\varepsilon} \log |T|$$ for all $x \in \mathcal{X}^n$ and $\beta>0$.
\end{lem}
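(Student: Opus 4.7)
The plan is the standard exponential mechanism construction: on input $x$, let $M(x)$ output each $y \in T$ with probability proportional to $\exp(-\varepsilon \ell(x,y)/(2\Delta))$, and write $Z(x) = \sum_{y \in T} \exp(-\varepsilon \ell(x,y)/(2\Delta))$ for the normalizing constant.

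For $\varepsilon$-differential privacy, I would directly compute, for neighboring $x,x'$ and any $y \in T$,
$$\frac{\pr{}{M(x)=y}}{\pr{}{M(x')=y}} = \exp\!\left(\frac{\varepsilon(\ell(x',y)-\ell(x,y))}{2\Delta}\right) \cdot \frac{Z(x')}{Z(x)}.$$
The first factor is at most $e^{\varepsilon/2}$ by the sensitivity hypothesis; the same termwise bound applied inside the sum defining $Z$ shows the ratio $Z(x')/Z(x)$ is also at most $e^{\varepsilon/2}$, multiplying to $e^\varepsilon$.

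For the high-probability utility bound, fix $x$ and set $m = \min_{y \in T} \ell(x,y)$. Since $Z(x) \geq e^{-\varepsilon m/(2\Delta)}$ from the single minimizing term, for any $y$ with $\ell(x,y) - m \geq t$ we get $\pr{}{M(x)=y} \leq e^{-\varepsilon t/(2\Delta)}$. A union bound over the at most $|T|$ such outputs yields $\pr{}{\ell(x,M(x)) - m \geq t} \leq |T| e^{-\varepsilon t/(2\Delta)}$, and setting $t = (2\Delta/\varepsilon)\log(|T|/\beta)$ delivers the first claim.

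For the tight expectation bound, I would use a Gibbs / entropy identity rather than integrating the tail. Letting $u_y = \varepsilon(\ell(x,y)-m)/(2\Delta) \geq 0$ and $p(y) = e^{-u_y}/Z'$ with $Z' = \sum_y e^{-u_y}$, the definition gives $u_y = -\log p(y) - \log Z'$, so
$$\ex{y \sim p}{u_y} = H(p) - \log Z' \leq \log|T| - 0 = \log|T|,$$
using that the uniform distribution maximizes entropy on $|T|$ elements together with $Z' \geq e^{-u_{y^*}} = 1$. Scaling by $2\Delta/\varepsilon$ produces the claimed expectation bound. The main (minor) obstacle is resisting the temptation to simply integrate the tail inequality, which only yields $(2\Delta/\varepsilon)(\log|T|+1)$; the entropy identity is the key trick needed to obtain the exact constant stated in the lemma.
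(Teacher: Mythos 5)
Your proposal is correct. The paper itself does not give a self-contained proof of this lemma; it simply defines the mechanism $\pr{M}{M(x)=y} \propto e^{-\varepsilon\ell(x,y)/2\Delta}$ and cites Dwork--Roth (Theorems 3.10, 3.11) and Bassily et al.\ (Lemma 7.1) for the analysis, so what you have written is a filled-in version of what those references contain rather than a genuinely different route. Your privacy argument (termwise bound on the numerator ratio plus the same termwise bound summed over the normalizer) and tail bound (union bound against the lower bound $Z(x)\geq e^{-\varepsilon m/2\Delta}$) are the standard ones. The one place where care is needed is the expectation bound: you are right that naively integrating the tail only gives $\frac{2\Delta}{\varepsilon}(\log|T|+1)$, and your Gibbs/entropy identity $\ex{y\sim p}{u_y}=H(p)-\log Z'\leq\log|T|$ (using $Z'\geq e^{-u_{y^*}}=1$ and the maximum-entropy bound) is exactly the clean way to recover the stated constant. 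All steps check out; no gaps.
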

\begin{proof} The mechanism is defined by
$$\pr{M}{M(x)=y} = \frac{e^{-\ell(x,y) \varepsilon/2\Delta}}{\sum_{y' \in T} e^{-\ell(x,y) \varepsilon/2\Delta}}. $$
The analysis can be found in \cite[Theorems 3.10 and 3.11]{DworkR14} and \cite[Lemma 7.1]{BassilyNSSSU16}.
\end{proof}

We also show that a packing yields a lower bound for \IIP{}:

\begin{lem} \label{lem:PackLB}
Let $(\mathcal{Y},d)$ be a metric space and $q : \mathcal{X}^n \to \mathcal{Y}$ a function. Let $M : \mathcal{X}^n \to \mathcal{Y}$ be a $(\con,\lin)$-\IIP{} mechanism satisfying $$\pr{M}{d(M(x),q(x)) > \alpha/2} \leq \beta$$ for all $x \in \mathcal{X}^n$. 
Let $T \subset \mathcal{Y}$ be such that $d(y,y')>\alpha$, for all $y, y' \in T$ with $y \ne y'$. Assume that for all $y \in T$ there exists $x \in \mathcal{X}^n$ with $q(x)=y$.
Then $$(1-\beta)\log|T| - \log 2 \leq \con \cdot n (1+\log n) + \lin \cdot n^2.$$
\end{lem}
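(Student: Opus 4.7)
The plan is to reduce the packing bound to the mutual information bound of Proposition \ref{prop:MutualInformation} via a Fano-type argument. I would first build a hard input distribution from the packing: for each $y \in T$, pick some $x_y \in \mathcal{X}^n$ with $q(x_y) = y$ (which exists by hypothesis), then let $Y$ be uniform on $T$ and set $X = x_Y \in \mathcal{X}^n$. Note $X$ is a deterministic function of $Y$, and $q(X) = Y$ by construction.

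Next I would exploit the packing/accuracy combination to recover $Y$ from $M(X)$. Define $\hat Y = \hat Y(M(X))$ to be the unique element of $T$ within distance $\alpha/2$ of $M(X)$, if one exists (and some fixed default otherwise). Uniqueness follows from the $\alpha$-separation of $T$ and the triangle inequality. Whenever $d(M(X), q(X)) \leq \alpha/2$ --- which happens with probability at least $1-\beta$ by the accuracy hypothesis --- we have $\hat Y = q(X) = Y$. Hence $\pr{}{\hat Y \neq Y} \leq \beta$.

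Now I would apply Fano's inequality to the predictor $\hat Y$ of $Y$:
\[H(Y \mid \hat Y) \leq h(\beta) + \beta \log(|T|-1) \leq \log 2 + \beta \log |T|,\]
where $h$ is the binary entropy. Since $\hat Y$ is a (possibly randomized) function of $M(X)$, the data processing inequality for entropy gives $H(Y \mid M(X)) \leq H(Y \mid \hat Y)$. Combined with $H(Y) = \log |T|$ (since $Y$ is uniform), this yields
\[I(Y; M(X)) = H(Y) - H(Y \mid M(X)) \geq (1-\beta)\log|T| - \log 2.\]

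Finally, since $Y \to X \to M(X)$ is a Markov chain ($X = x_Y$ is a deterministic function of $Y$), the data processing inequality for mutual information gives $I(Y; M(X)) \leq I(X; M(X))$. Applying Proposition \ref{prop:MutualInformation} to the $(\con,\lin)$-\IIP{} mechanism $M$ on the distribution $X$ bounds this by $\con \cdot n(1+\log n) + \lin \cdot n^2$, chaining the two inequalities to conclude. The main potential pitfall is the Fano step --- one must verify that $\hat Y$ genuinely has error probability at most $\beta$ (using both the packing property and the accuracy guarantee together) --- but this follows cleanly from the triangle inequality given the $\alpha$-separation, so no calculation is really hard here.
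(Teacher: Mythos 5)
Your proof is correct and follows essentially the same route as the paper's: build a hard uniform distribution over preimages of the packing $T$, use the packing separation plus accuracy to construct a decoder that recovers $Y$ from $M(X)$ with error at most $\beta$, apply a Fano-type bound to get $I(Y;M(X)) \geq (1-\beta)\log|T| - \log 2$, and finish with data processing and Proposition~\ref{prop:MutualInformation}. The only cosmetic differences are that the paper's decoder is the nearest neighbor in $T$ rather than the unique-ball decoder, and the paper derives the Fano-type bound by hand (introducing the indicator $E$ of a correct decoding and expanding $H(Y,E\mid Z)$) rather than citing Fano's inequality; both yield exactly the same bound $(1-\beta)\log|T| - \log 2$.
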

In particular, if $\con=0$, we have $$n \geq \sqrt{\frac{(1-\beta)\log|T|-\log 2}{\lin}}= \Omega(\sqrt{\log|T|/\lin}).$$
\begin{proof}
Let $q^{-1} : T \to \mathcal{X}^n$ be a function such that $q(q^{-1}(y))=y$ for all $y \in T$. Define $f : \mathcal{Y} \to T$ by $$f(y) = \underset{y' \in T}{\mathrm{argmin}}~ d(y,y')$$ (breaking ties arbitrarily). Then $$\pr{M}{f(M(q^{-1}(y)))=y} \geq 1-\beta$$ for all $y\in T$, as $\pr{M}{d(M(q^{-1}(y)),y)>\alpha/2} \leq \beta$ and $d(y',y)>\alpha$ for all $y' \in T \setminus \{y\}$. 

Let $Y$ be a uniformly random element of $T$ and let $X=q^{-1}(Y)$. By  the data processing inequality and Proposition \ref{prop:MutualInformation}, $$I(Y; f(M(q^{-1}(Y))))=I(q(X);f(M(X))) \leq I(X;M(X)) \leq \con \cdot n (1+\log n) + \lin \cdot n^2.$$
However, $\pr{}{f(M(q^{-1}(Y)))=Y}\geq 1-\beta$. Denote $Z=f(M(q^{-1}(Y)))$ and let $E$ be the indicator of the event that $Z=Y$. We have
$$I(Y;Z) = H(Y) - H(Y|Z)= H(Y) - H(Y,E|Z)= H(Y) - H(Y|E,Z) - H(E|Z).$$
Clearly $H(Y)=\log|T|$ and $H(E|Z) \leq H(E) \leq \log 2$. Moreover,
\begin{align*}
H(Y|E,Z) =& \ex{e \leftarrow E}{H(Y|Z,E=e)}\\
=& \pr{}{Y=Z} \cdot 0 + \pr{}{Y \ne Z} \cdot H(Y|Z,Y \ne Z)\\
\leq& \beta \cdot H(Y).
\end{align*}
Thus $$I(Y;Z) \geq \log|T| - \log 2 - \beta \log|T|.$$
The result now follows by combining inequalities.
\end{proof}

We need one final technical lemma:

\begin{lem} \label{lem:SubSampAcc}
Let $q : \mathcal{X} \to \mathbb{R}^k$ satisfy $\max_{x \in \mathcal{X}} \|q(x)\| \leq 1$, where $\|\cdot\|$ is some norm. Let $M : \mathcal{X}^n \to \mathbb{R}^k$ satisfy $(\con,\lin)$-\IIP{} and $$\ex{M}{\|M(x)-q(x)\|} \leq \alpha $$ for all $x \in \mathcal{X}^n$.
For all $n'$, $$\max_{x \in \mathcal{X}^{n'}} \min_{\hat x \in \mathcal{X}^n} \|q(\hat x) - q(x)\| \leq 2\alpha + \sqrt{2(\con+\lin)}.$$
\end{lem}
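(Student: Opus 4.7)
The plan is the probabilistic method via subsampling. Given $x \in \mathcal{X}^{n'}$, let $\hat X \in \mathcal{X}^n$ be formed by drawing $\hat X_1, \ldots, \hat X_n$ i.i.d.\ from the uniform empirical distribution on the entries of $x$. Since the statistical query $q$ extends to datasets by averaging, linearity gives $\ex{\hat X}{q(\hat X)} = q(x)$. To establish the lemma it therefore suffices to show
\[
\ex{\hat X,M}{\|q(\hat X)-q(x)\|} \;\le\; 2\alpha + \sqrt{2(\con+\lin)},
\]
because then at least one realization $\hat X = \hat x$ achieves a value at most the expectation, furnishing the required $\hat x \in \mathcal{X}^n$.

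The plan is to insert $M(\hat X)$ via the triangle inequality and absorb the ``spread'' of $M(\hat X)$ using $(\con,\lin)$-\IIP{}. Concretely, triangle inequality and the assumed accuracy of $M$ give
\[
\ex{}{\|q(\hat X) - q(x)\|} \;\le\; \ex{}{\|q(\hat X) - M(\hat X)\|} + \ex{}{\|M(\hat X) - q(x)\|} \;\le\; \alpha + \ex{}{\|M(\hat X) - q(x)\|},
\]
and then decomposing the second term as
\[
\ex{}{\|M(\hat X)-q(x)\|} \;\le\; \|\ex{}{M(\hat X)}-q(x)\| + \ex{}{\|M(\hat X) - \ex{}{M(\hat X)}\|}.
\]
The first summand is at most $\alpha$: pushing the norm inside the expectation and using $\ex{\hat X}{q(\hat X)} = q(x)$ together with accuracy gives $\|\ex{}{M(\hat X)}-q(x)\| = \|\ex{\hat X}{\ex{M}{M(\hat X)} - q(\hat X)}\| \le \alpha$. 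So the work reduces to bounding the expected spread $\ex{}{\|M(\hat X) - \ex{}{M(\hat X)}\|}$ by $\sqrt{2(\con+\lin)}$.

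This is the key step, and the only place \IIP{} is used. The idea is to couple $\hat X$ with an independent copy $\hat X'$ via a single-coordinate resampling (so $\hat X$ and $\hat X'$ differ in at most one entry, while the marginal of $\hat X'$ remains the subsample distribution). Applying $(\con,\lin)$-\IIP{} to this neighbouring pair yields $\dr{1}{M(\hat X)\mid \hat X}{M(\hat X')\mid \hat X'} \le \con + \lin$, and Pinsker's inequality then bounds the total variation by $\sqrt{(\con+\lin)/2}$. Combining the maximal coupling realizing this TV bound with the accuracy-based envelope $\ex{M}{\|M(\hat x)\|} \le 1+\alpha$ (from $\|q\|\le 1$ and accuracy) translates the TV estimate into a bound on $\ex{}{\|M(\hat X) - M(\hat X')\|}$; finally the Jensen-type inequality $\ex{}{\|Y-\ex{}{Y}\|} \le \ex{}{\|Y-Y'\|}$ (valid when $Y'$ is an independent copy of $Y$) gives the desired $\sqrt{2(\con+\lin)}$ bound on the expected spread.

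The main obstacle is this last conversion from a total-variation bound to an expected-$\|\cdot\|$-distance bound: because $M$'s output is not a priori bounded, Pinsker alone does not suffice. One must weave the accuracy estimate into the maximal-coupling bookkeeping to control the contribution from the TV-``bad'' event. A second source of care is avoiding a group-privacy blow-up: one must apply \IIP{} only to pairs differing in a single entry, which is why the coupling is chosen as a one-coordinate resample rather than, say, two independent subsamples (which would differ in up to $n$ entries and pay a $\lin \cdot n^2$ group-privacy penalty).
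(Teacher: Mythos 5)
Your top-level framing matches the paper: fix $x \in \mathcal{X}^{n'}$, subsample $\hat X \in \mathcal{X}^n$ i.i.d.\ from the empirical distribution of $x$ (so $\ex{\hat X}{q(\hat X)} = q(x)$), and argue by the probabilistic method. The subsequent decomposition, however, leads to an intermediate claim that is actually false, and the proposed Pinsker/TV machinery cannot close the gap.

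The false claim is $\ex{}{\|M(\hat X) - \ex{}{M(\hat X)}\|} \le \sqrt{2(\con+\lin)}$. Concentrated DP controls how the \emph{distribution} of $M$ changes when one input coordinate changes; it places no upper bound on the intrinsic spread of $M$'s output distribution on a fixed input. As a sanity check, consider the Gaussian mechanism $M(\hat x) = q(\hat x) + \mathcal{N}(0,\sigma^2)$ on a bounded one-dimensional statistical query of sensitivity $2/n$. Then $\alpha = \ex{}{|\mathcal{N}(0,\sigma^2)|} = \sqrt{2/\pi}\,\sigma$, while $\lin = 2/(n^2\sigma^2)$ and $\sqrt{2\lin} = 2/(n\sigma)$; for $\sigma$ a constant and $n$ large, the spread is $\Theta(\sigma) \gg 2/(n\sigma)$. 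So your decomposition
$$\ex{}{\|M(\hat X)-q(x)\|} \;\le\; \|\ex{}{M(\hat X)}-q(x)\| + \ex{}{\|M(\hat X)-\ex{}{M(\hat X)}\|}$$
cannot be driven below $\alpha + \sqrt{2(\con+\lin)}$ by the route you indicate: even the additive-noise part of the spread is already $\Theta(\alpha)$, not $O(\sqrt{\lin})$. Your own flag about the unboundedness of $M$'s output is a symptom of this deeper problem; no amount of ``weaving the accuracy estimate into the maximal-coupling bookkeeping'' can save it, since you have only a first-moment bound $\ex{}{\|M(\hat x)\|}\le 1+\alpha$ and the TV-bad event can concentrate entirely in the heavy tail. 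The Jensen step $\ex{}{\|Y-\ex{}{Y}\|}\le\ex{}{\|Y-Y'\|}$ is a separate, compounding problem: it needs $Y'$ to be \emph{independent} of $Y$, which is incompatible with the one-coordinate coupling that you (correctly) introduced to avoid the $\lin n^2$ group-privacy penalty.

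The paper's proof (Proposition \ref{prop:Generalize}) sidesteps both issues by never trying to control $M$'s spread. Instead it introduces the ``monitor'' $W(\hat x) = (M(\hat x), s)$ where $s$ is the unit dual vector with $\langle M(\hat x) - q(\mathcal{D}), s\rangle = \|M(\hat x)-q(\mathcal{D})\|$; by postprocessing, $W$ inherits $(\con,\lin)$-\IIP{}. The population error is then $\ex{}{\langle a - q(\mathcal{D}),s\rangle} = \ex{}{\langle a - q(\hat X),s\rangle} + \ex{}{\langle q(\hat X) - q(\mathcal{D}), s\rangle}$: the first summand is $\le\alpha$ by empirical accuracy (this is where accuracy is used, and it is applied to the empirical error $\langle a - q(\hat X),s\rangle$, not to any re-centering of $M$), and the second is the generalization gap of the bounded, one-dimensional statistic $\frac{1}{n}\sum_i \langle q(\hat X_i), s\rangle$. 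Because $\|q(z)\|\le 1$ and $\|s\|_* = 1$, the per-coordinate functional $\langle q(\hat X_i), s\rangle$ lives in $[-1,1]$, so Pinsker applies cleanly: resampling one coordinate changes the (postprocessed) distribution by KL at most $\con+\lin$, hence shifts the expectation by at most $\sqrt{2(\con+\lin)}$, and the exchangeability of the resampled coordinate with a fresh draw finishes the argument. Averaging over $i$ means the penalty is paid once, not $n$ times. In short, the missing idea is the dual-vector reduction to a bounded scalar functional; your decomposition routes through the (unbounded, $\Theta(\alpha)$-sized) spread of the mechanism instead, which is a dead end.
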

The proof of Lemma \ref{lem:SubSampAcc} is deferred to the appendix.

Now we can combine Lemmas \ref{lem:PackNet}, \ref{lem:ExpMech}, \ref{lem:PackLB}, and \ref{lem:SubSampAcc} to prove Theorem \ref{thm:CDPtoPDP}:
\begin{proof}[Proof of Theorem \ref{thm:CDPtoPDP}]
Apply Lemma \ref{lem:PackNet} with $\mathcal{Y} = \left\{q(x) = \frac{1}{n} \sum_{i \in [n]} q(x_i) : x \in \mathcal{X}^n \right\} \subset \mathbb{R}^k$ and $d$ being the metric induced by the norm to obtain $T \subset \mathcal{Y}$:
\begin{itemize}
\item (Net:) Either $T$ is infinite or for all $y' \in \{q(x) : x \in \mathcal{X}^n \} \subset \mathbb{R}^k$ there exists $y \in T$ with $\|y-y'\| \leq 4\alpha$.
\item (Packing:) For all $y,y' \in T$, if $y \ne y'$, then $\|y-y'\|>4\alpha$.
\end{itemize}

By Markov's inequality $$\pr{M}{\|M(x)-q(x)\| > 2\alpha} \leq \frac12.$$
Thus, by Lemma \ref{lem:PackLB}, $$\frac12 \log|T| - \log 2 \leq \con \cdot n (1+\log n) + \lin \cdot n^2.$$
This gives an upper bound on $|T|$. In particular, $T$ must be finite.

Let $M' : \mathcal{X}^{n'} \to \mathbb{R}^k$ be the exponential mechanism (Lemma \ref{lem:ExpMech}) instantiated with $T$ and $\ell(x,y) = \|y-q(x)\|$. We have $$\pr{M}{\|M(x)-q(x)\| \leq \min_{y \in T} \| y-q(x)\| +  \frac{4}{\varepsilon n'}\log\left(\frac{|T|}{\beta}\right)} \geq 1-\beta $$ and $$\ex{M}{\|M(x)-q(x)\|} \leq \min_{y \in T} \| y-q(x)\| +   \frac{4}{\varepsilon n'} \log |T|$$ for all $x \in \mathcal{X}^{n'}$.
For $x \in \mathcal{X}^{n'}$, by the Net property and Lemma \ref{lem:SubSampAcc},
\begin{align*}
\min_{y \in T} \| y-q(x)\| \leq& \min_{y \in T} \min_{y' \in \mathcal{Y}} \| y - y' \| + \|y' - q(x)\|\\
=&  \min_{y' \in \mathcal{Y}} \left(\left(\min_{y \in T}\| y - y' \|\right) + \|y' - q(x)\|\right)\\
\leq&  \min_{y' \in \mathcal{Y}} \left(4\alpha + \|y' - q(x)\|\right)\\
=& \min_{\hat x \in \mathcal{X}^n} \left(4\alpha + \|q(\hat x) - q(x)\|\right)\\
\leq& 4\alpha + 2\alpha + \sqrt{2(\con+\lin)}.
\end{align*}
Furthermore,
$$\frac{4}{\varepsilon n'} \log |T| \leq \frac{8}{\varepsilon n'} \left( \con \cdot n (1+\log n) + \lin \cdot n^2 + \log 2\right)\leq 2 \alpha.$$
The theorem now follows by combining inequalities.
\end{proof}

\section{Approximate \IIP{}} \label{sec:ApproxCDP}

In the spirit of approximate DP, we propose a relaxation of \IIP{}:

\begin{defn}[Approximate \IIP{}] \label{defn:ApproxRDP}
A randomised mechanism $M : \mathcal{X}^n \to \mathcal{Y}$ is $\delta$-approximately $(\con,\lin)$-\IIP{} if, for all $x,x' \in \mathcal{X}^n$ differing on a single entry, there exist events $E=E(M(x))$ and $E'=E'(M(x'))$ such that, for all $\alpha \in (1,\infty)$, $$ \dr{\alpha}{M(x)|_E}{M(x')|_{E'}} \leq \con + \lin \cdot \alpha ~~~~~ \text{and} ~~~~~ \dr{\alpha}{M(x')|_{E'}}{M(x)|_{E}} \leq \con + \lin \cdot \alpha$$ and $\pr{M(x)}{E} \geq 1- \delta$ and $\pr{M(x')}{E'} \geq 1-\delta$.
\end{defn}

Clearly $0$-approximate \IIP{} is simply \IIP{}. Hence we have a generalization of \IIP{}. As we will show later in this section, $\delta$-approximate $(\varepsilon,0)$-\IIP{} is equivalent to $(\varepsilon,\delta)$-DP. Thus we have also generalized approximate DP. Hence, this definition unifies both relaxations of pure DP. 

Approximate \IIP{} is a three-parameter definition which allows us to capture many different aspects of differential privacy. However, three parameters is quite overwhelming. We believe that use of the one-parameter $\lin$-\IIP{} (or the two-parameter $\delta$-approximate $\lin$-\IIP{} if necessary) is sufficient for most purposes.

It is easy to verify that the definition of approximate \IIP{} satisfies the following basic properties.

\begin{lem}[Composition \& Postprocessing] \label{lem:CompPost-ApproxGDP}
Let $M : \mathcal{X}^n \to \mathcal{Y}$ and $M' : \mathcal{X}^n \times \mathcal{Y} \to \mathcal{Z}$ be randomized algorithms. Suppose $M$ satisfies $\delta$-approximate $(\con,\lin)$-\IIP{} and, for all $y \in \mathcal{Y}$, $M'(\cdot,y) : \mathcal{X}^n \to \mathcal{Z}$ satisfies $\delta'$-approximate $(\con',\lin')$-\IIP{}. Define $M'' : \mathcal{X}^n \to \mathcal{Z}$ by $M''(x)=M'(x,M(x))$. Then $M''$ satisfies $(\delta+\delta'-\delta \cdot \delta')$-approximate $(\con+\con',\lin+\lin')$-\IIP{}.  
\end{lem}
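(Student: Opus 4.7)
Fix neighboring $x, x' \in \mathcal{X}^n$. The $\delta$-approximate $(\con,\lin)$-\IIP{} hypothesis on $M$ produces events $E_x$ and $E_{x'}$, each of probability at least $1-\delta$, such that $\dr{\alpha}{M(x)|_{E_x}}{M(x')|_{E_{x'}}} \leq \con + \lin \alpha$ for all $\alpha \in (1,\infty)$, along with the symmetric inequality. Similarly, for each $y \in \mathcal{Y}$, the $\delta'$-approximate $(\con',\lin')$-\IIP{} hypothesis on $M'(\cdot, y)$ produces events $F_{y,x}$ and $F_{y,x'}$, each of probability at least $1-\delta'$, such that $\dr{\alpha}{M'(x,y)|_{F_{y,x}}}{M'(x',y)|_{F_{y,x'}}} \leq \con' + \lin' \alpha$ and symmetrically.

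My strategy is to first prove the desired approximate-\IIP{} guarantee for the \emph{joint} mechanism $J(x) := (M(x), M'(x, M(x)))$ and then extract the claim for $M''(x)$ by the R\'enyi postprocessing inequality in Lemma~\ref{lem:Renyi} (projecting onto the $\mathcal{Z}$-coordinate). For $J$ the natural good event in the product probability space of the independent coins of $M$ and $M'$ is
\[
A_x := \{M(x) \in E_x\} \cap \{M'(x, M(x)) \in F_{M(x), x}\},
\]
and $A_{x'}$ analogously. Conditioning on the value of $M(x)$ and using independence of the two sources of randomness, one computes $\pr{}{A_x} \geq (1-\delta)(1-\delta') = 1 - (\delta + \delta' - \delta\delta')$, matching the target failure probability.

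The heart of the argument is to invoke the chain-rule (composition) bound for R\'enyi divergence from Lemma~\ref{lem:Renyi}, decomposing $J$ by its $\mathcal{Y}$-coordinate. Provided one checks that, conditional on $A_x$, the marginal of $M(x)$ is $M(x)|_{E_x}$ and the conditional of $M''(x)$ given $M(x)=y$ is $M'(x,y)|_{F_{y,x}}$ (and similarly for $x'$), Lemma~\ref{lem:Renyi} delivers
\[
\dr{\alpha}{J(x)|_{A_x}}{J(x')|_{A_{x'}}} \leq \dr{\alpha}{M(x)|_{E_x}}{M(x')|_{E_{x'}}} + \sup_{y \in \mathcal{Y}} \dr{\alpha}{M'(x,y)|_{F_{y,x}}}{M'(x',y)|_{F_{y,x'}}} \leq (\con+\con') + (\lin+\lin')\alpha.
\]
The symmetric direction is identical, and postprocessing transfers the bound from $J$ to $M''$.

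The main technical obstacle is in verifying the conditional factorization of $J(x)|_{A_x}$: a direct computation reveals that the marginal of $M(x)$ under $A_x$ is $M(x)|_{E_x}$ \emph{reweighted} by the factor $w_x(y) := \pr{}{M'(x, y) \in F_{y, x}} \in [1-\delta', 1]$, which need not equal $M(x)|_{E_x}$. I would resolve this by enlarging the probability space with an independent $V \sim \mathrm{Unif}[0,1]$ and strengthening the event to
\[
A_x := \{M(x) \in E_x\} \cap \{M'(x, M(x)) \in F_{M(x), x}\} \cap \{V \cdot w_x(M(x)) \leq 1-\delta'\}.
\]
Since $w_x(y) \in [1-\delta', 1]$, the third clause is well-defined and equalizes the conditional acceptance probability to exactly $1-\delta'$ for every value of $M(x) \in E_x$. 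A short calculation then gives the desired clean factorization while preserving $\pr{}{A_x} = (1-\delta')\pr{}{M(x) \in E_x} \geq (1-\delta)(1-\delta')$; the analogous construction for $A_{x'}$ completes the proof.
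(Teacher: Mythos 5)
The paper gives no proof of Lemma~\ref{lem:CompPost-ApproxGDP}; it is presented after the sentence ``It is easy to verify that the definition of approximate \IIP{} satisfies the following basic properties,'' so there is nothing in the source to compare against directly.

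Your argument is correct, and it is notably more careful than the paper's dismissal suggests is necessary. You correctly identify the real subtlety: in the joint space, the naive good event $A_x = \{M(x)\in E_x\}\cap\{M'(x,M(x))\in F_{M(x),x}\}$ does \emph{not} leave the marginal of $M(x)$ equal to $M(x)|_{E_x}$, because the second clause reweights each $y\in E_x$ by $w_x(y)=\pr{}{M'(x,y)\in F_{y,x}}$, and the reweighting factors for $x$ and $x'$ need not agree. Had you pushed the naive event through the composition inequality of Lemma~\ref{lem:Renyi}, you would pick up a spurious $\log\frac{1}{1-\delta'}$-type overhead rather than the clean $(\con+\con',\lin+\lin')$ bound. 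Your fix --- augmenting with an independent $V\sim\mathrm{Unif}[0,1]$ and adding the clause $\{V\cdot w_x(M(x))\le 1-\delta'\}$ to flatten the conditional acceptance probability to exactly $1-\delta'$ for every $y\in E_x$ --- is a legitimate rejection-sampling device. One readily checks: given $M(x)=y\in E_x$, the third clause is independent of $M'(x,y)$ and has probability $(1-\delta')/w_x(y)\le 1$, so the joint conditional acceptance is exactly $1-\delta'$, the marginal under $A_x$ is exactly $M(x)|_{E_x}$, the conditional law of the second coordinate is exactly $M'(x,y)|_{F_{y,x}}$, and $\pr{}{A_x}=(1-\delta')\pr{}{M(x)\in E_x}\ge(1-\delta)(1-\delta')$. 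The composition inequality of Lemma~\ref{lem:Renyi} then gives $\dr{\alpha}{J(x)|_{A_x}}{J(x')|_{A_{x'}}}\le(\con+\con')+(\lin+\lin')\alpha$, the symmetric direction is identical, and projecting onto $\mathcal{Z}$ via postprocessing finishes the claim. Since the definition of approximate \IIP{} allows the events $E,E'$ to depend on the internal randomness of the mechanism (as the paper's own Corollary~\ref{cor:DPtoApproxGDP} exploits), enlarging the probability space with an independent $V$ is permissible. In short: your proof is sound, and it fills a genuine gap --- the lemma is not quite as trivial as the paper's framing implies, precisely because of the reweighting you caught.
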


\begin{lem}[Tradeoff]
Suppose $M : \mathcal{X}^n \to \mathcal{Y}$ satisfies $\delta$-approximate $(\con,0)$-\IIP{}. Then $M$ satisfies $\delta$-approximate $\con$-\IIP{} and $\delta$-approximate $\frac12 \con^2$-\IIP{}.
\end{lem}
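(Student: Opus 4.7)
The plan is to reduce both conclusions to statements about the conditioned distributions guaranteed by the approximate \IIP{} definition, and then invoke results already established earlier in the paper (monotonicity of R\'enyi divergence and Proposition \ref{prop:EpsSquared}). Fix neighbouring $x,x' \in \mathcal{X}^n$, and let $E,E'$ be the events promised by the $\delta$-approximate $(\con,0)$-\IIP{} hypothesis, with $\pr{}{E},\pr{}{E'} \geq 1-\delta$. The hypothesis gives
\[
\dr{\alpha}{M(x)|_{E}}{M(x')|_{E'}} \;\leq\; \con \qquad\text{and}\qquad \dr{\alpha}{M(x')|_{E'}}{M(x)|_{E}} \;\leq\; \con
\]
for every $\alpha \in (1,\infty)$. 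Since the same events $E,E'$ will be reused, in both parts it suffices to upgrade these R\'enyi-divergence bounds between the conditioned distributions.

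For the first conclusion ($\delta$-approximate $\con$-\IIP{}, i.e.\ $(0,\con)$-\IIP{}), the bound is essentially trivial: for every $\alpha > 1$ we have $\con \leq \con \cdot \alpha$ because $\con \geq 0$, so the constant bound $\con$ immediately yields $\dr{\alpha}{\cdot}{\cdot} \leq 0 + \con \cdot \alpha$ in both directions. No additional work is needed.

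For the second conclusion ($\delta$-approximate $\tfrac12\con^2$-\IIP{}), I would first let $\alpha \to \infty$ in the hypothesis and use the monotonicity property from Lemma \ref{lem:Renyi} to obtain the max-divergence bound
\[
\dr{\infty}{M(x)|_{E}}{M(x')|_{E'}} \;\leq\; \con \qquad\text{and}\qquad \dr{\infty}{M(x')|_{E'}}{M(x)|_{E}} \;\leq\; \con.
\]
This places us exactly in the hypothesis of Proposition \ref{prop:EpsSquared}, with $P = M(x)|_E$ and $Q = M(x')|_{E'}$ (and also with the roles swapped). Applying that proposition twice yields $\dr{\alpha}{P}{Q} \leq \tfrac12\con^2 \alpha$ and $\dr{\alpha}{Q}{P} \leq \tfrac12\con^2 \alpha$ for every $\alpha > 1$, which is precisely the requirement of $\delta$-approximate $(0,\tfrac12\con^2)$-\IIP{} with the same events $E,E'$.

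There is no real obstacle here: both parts are short consequences of results already proved in the excerpt (monotonicity plus Proposition \ref{prop:EpsSquared}), and the only subtlety worth stating explicitly is that the approximate \IIP{} framework requires matching events on both sides, which is automatic because we reuse the same $E,E'$ throughout.
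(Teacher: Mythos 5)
Your proof is correct and is exactly what the paper has in mind (the paper states the lemma without proof, noting only that it is ``easy to verify,'' but its own proof of Corollary \ref{cor:DPtoApproxGDP} invokes Proposition \ref{prop:EpsSquared} in precisely the same way to pass from $\delta$-approximate $(\varepsilon,0)$-\IIP{} to $\delta$-approximate $(0,\tfrac12\varepsilon^2)$-\IIP{}). The only tacit point is that $\con \geq 0$, which is forced by non-negativity of R\'enyi divergence whenever the hypothesis is non-vacuous, so the inequality $\con \leq \con\alpha$ for $\alpha>1$ is justified.
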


However, the strong group privacy guarantees of Section \ref{sec:GroupPrivacy} no longer apply to approximate \IIP{} and, hence, the strong lower bounds of Section \ref{sec:LowerBounds} also no longer hold. Circumventing these lower bounds is part of the motivation for considering approximate \IIP{}. 
However, approximate \IIP{} is not necessarily the only way to relax \IIP{} that circumvents our lower bounds:

Proving the group privacy bound requires ``inflating'' the parameter $\alpha$: Suppose $M : \mathcal{X}^n \to \mathcal{Y}$ satisfies $\rho$-\IIP{} and $x,x' \in \mathcal{X}^n$ differ on $k$ entries. To prove $\dr{\alpha}{M(x)}{M(x')} \leq k^2\rho \alpha$, the proof of Proposition \ref{prop:GroupPrivacy} requires a bound on $\dr{k\alpha}{M(x'')}{M(x''')}$ for $x'',x''' \in \mathcal{X}^n$ differing on a single entry.

Consider relaxing the definition of \IIP{} to only require the bound \eqref{eqn:IIP-Renyi} or \eqref{eqn:GDP-MGF} to hold when $\alpha \leq m$:

\begin{defn}[Bounded \IIP{}] \label{def:BoundedCDP}
We say that $M : \mathcal{X}^n \to \mathcal{Y}$ satisfies $m$-bounded $(\con,\lin)$-\IIP{} if, for all $x,x' \in \mathcal{X}^n$ differing in only one entry and all $\alpha \in (1,m)$, $\dr{\alpha}{M(x)}{M(x')} \leq \con + \lin \cdot \alpha$.
\end{defn}

This relaxed definition may also be able to circumvent the group privacy-based lower bounds, as our group privacy proof would no longer work for groups of size larger than $m$. We do not know what group privacy guarantees Definition \ref{def:BoundedCDP} provides for groups of size $k \gg m$. This relaxed definition may be worth exploring, but is beyond the scope of our work.

\subsection{Approximate DP Implies Approximate \IIP{}}

We can convert approximate DP to approximate \IIP{} using the following lemma.

First we define a approximate DP version of the randomized response mechanism:

\begin{defn}
For $\varepsilon \geq 0$ and $\delta \in [0,1]$, define $\tilde{M}_{\varepsilon,\delta}: \{0,1\} \to \{0,1\} \times  \{\bot,\top\}$ by 
\begin{align*}
\pr{}{\tilde{M}_{\varepsilon,\delta}(b)=(b,\top)} =& \delta, &
\pr{}{\tilde{M}_{\varepsilon,\delta}(b)=(1-b,\top)} =& 0,\\
\pr{}{\tilde{M}_{\varepsilon,\delta}(b)=(b,\bot)} =& (1-\delta) \frac{e^\varepsilon}{1+e^\varepsilon}, &
\pr{}{\tilde{M}_{\varepsilon,\delta}(b)=(1-b,\bot)} =& (1-\delta) \frac{1}{1+e^\varepsilon}
\end{align*}
for both $b \in \{0,1\}$.
\end{defn}

The above mechanism is ``complete'' for approximate DP:

\begin{lem}[{\cite{KairouzOV15}, \cite[Lemma 3.2]{MurtaghV16}}] \label{lem:KOV-MV}
For every $(\varepsilon,\delta)$-DP $M : \mathcal{X}^n \to \mathcal{Y}$ and all $x_0,x_1 \in \mathcal{X}^n$ differing in one entry, there exists a randomized $T : \{0,1\} \times \{\bot,\top\} \to \mathcal{Y}$ such that $T(\tilde{M}_{\varepsilon,\delta}(b))$ has the same distribution as $M(x_b)$ for both $b \in \{0,1\}$.
\end{lem}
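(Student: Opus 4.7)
The plan is to construct the post-processing $T : \{0,1\} \times \{\bot, \top\} \to \mathcal{Y}$ explicitly, by specifying four distributions on $\mathcal{Y}$: $C_b := T(b, \top)$ and $B_b := T(b, \bot)$ for $b \in \{0,1\}$. Writing $P_b := M(x_b)$ and unpacking the requirement $T(\tilde{M}_{\varepsilon,\delta}(b)) \overset{d}{=} P_b$ using the definition of $\tilde{M}_{\varepsilon,\delta}$ yields the two simulation equations
\[ P_b \;=\; \delta\, C_b \;+\; \tfrac{(1-\delta)e^\varepsilon}{1+e^\varepsilon}\, B_b \;+\; \tfrac{1-\delta}{1+e^\varepsilon}\, B_{1-b},\qquad b \in \{0,1\}. \]
The strategy is to reduce these to finding an auxiliary pair of probability distributions $(R_0, R_1)$ on $\mathcal{Y}$ satisfying \textbf{(i)} $(1-\delta)R_b(y) \le P_b(y)$ pointwise and \textbf{(ii)} $e^{-\varepsilon}R_1(y) \le R_0(y) \le e^\varepsilon R_1(y)$ pointwise --- i.e., $(R_0, R_1)$ is the pure-$\varepsilon$-DP ``noisy'' component of $(P_0, P_1)$.

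Given such $(R_0, R_1)$, I would set $C_b := (P_b - (1-\delta)R_b)/\delta$, which is a probability distribution by (i) together with the obvious mass computation, and invert the $2 \times 2$ system $R_b = \tfrac{e^\varepsilon}{1+e^\varepsilon}B_b + \tfrac{1}{1+e^\varepsilon}B_{1-b}$ to obtain
\[ B_0 \;:=\; \frac{e^\varepsilon R_0 - R_1}{e^\varepsilon - 1}, \qquad B_1 \;:=\; \frac{e^\varepsilon R_1 - R_0}{e^\varepsilon - 1}, \]
which are non-negative by (ii) and integrate to $1$. Substituting these choices back into the simulation equations and using the definition of $R_b$ recovers both sides identically, so the construction of $T$ is complete modulo the existence of $(R_0,R_1)$.

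The technical heart --- and the step I expect to be the main obstacle --- is showing the structural decomposition: any $(\varepsilon,\delta)$-DP pair $(P_0, P_1)$ admits $(R_0, R_1)$ satisfying (i) and (ii). A natural starting point is the pointwise truncation $\tilde R_b(y) := \min(P_b(y),\, e^\varepsilon P_{1-b}(y))$; these sub-probability measures satisfy the ratio bound in (ii) (by casework on the three regions $\{P_0 > e^\varepsilon P_1\}$, $\{P_1 > e^\varepsilon P_0\}$, and the ``middle'' region $\{e^{-\varepsilon} \le P_0/P_1 \le e^\varepsilon\}$), and each has total mass at least $1-\delta$ (precisely by applying $(\varepsilon,\delta)$-DP to the set $\{P_b > e^\varepsilon P_{1-b}\}$). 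The challenge is that $\int \tilde R_0$ and $\int \tilde R_1$ are generically unequal and both strictly exceed $1-\delta$, so naive rescaling either breaks the pointwise ratio or violates (i). My plan is to formulate the existence of $(R_0, R_1)$ as a feasibility LP over constraints (i), (ii), and $\int R_0 = \int R_1 = 1$, and to verify feasibility by LP duality: any infeasibility certificate would identify a test set $S \subseteq \mathcal{Y}$ on which $P_0(S) - e^\varepsilon P_1(S) > \delta$ (or the symmetric violation), contradicting the $(\varepsilon,\delta)$-DP hypothesis. Once this structural lemma is in hand, the rest of the proof is a mechanical substitution.
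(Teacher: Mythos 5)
The paper does not prove this lemma; it is stated with a citation to \cite{KairouzOV15} and \cite[Lemma 3.2]{MurtaghV16}, so there is no in-paper proof to compare against. Evaluating your proposal on its own terms: the reduction is correct and clean. You've accurately unpacked the simulation equations, and your inverse-map formulas for $C_b$ and $B_b$ do produce probability distributions given (i) and (ii), with the simulation identity checking out (modulo $\varepsilon > 0$ so the $B_b$ formulas are well-defined; $\varepsilon = 0$ forces $R_0 = R_1$ and needs a trivial separate remark). You've also correctly verified that the pointwise truncations $\tilde R_b = \min(P_b, e^\varepsilon P_{1-b})$ satisfy (ii) and have mass $\ge 1-\delta$, and correctly diagnosed that mass-mismatch is the obstruction.

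The genuine gap is the feasibility lemma itself, which you flag as the ``technical heart'' and leave at the level of a plan. The assertion that an LP infeasibility certificate ``would identify a test set $S$'' is not immediate: the Farkas certificate for this program is a collection of nonnegative multipliers on the pointwise constraints (i), (ii), and the nonnegativity constraints, plus two free scalars for the mass constraints, and extracting a single DP-violating test set from such a certificate requires genuine work. For infinite $\mathcal{Y}$ the duality itself needs a separation/minimax argument. So the proposal as written does not close the lemma.

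You can avoid the duality machinery entirely with a short direct construction that stays within the framework you set up. Let $\delta_b := P_b(S_b) - e^\varepsilon P_{1-b}(S_b)$ with $S_b = \{P_b > e^\varepsilon P_{1-b}\}$, so $\int \tilde R_b = 1 - \delta_b$ and $\delta_b \le \delta$ by the DP hypothesis. WLOG $\delta_0 \ge \delta_1$. Set $R_0 := \tilde R_0/(1-\delta_0)$, which integrates to $1$ and satisfies $(1-\delta)R_0 \le \tilde R_0 \le P_0$. For $R_1$, observe the two pointwise inequalities $e^{-\varepsilon}\tilde R_0 \le \tilde R_1 \le e^\varepsilon \tilde R_0$ (a direct casework check on $S_0, S_1, S_{\mathrm{mid}}$, in the same spirit as your verification of (ii) for the $\tilde R_b$). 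The lower envelope $e^{-\varepsilon}\tilde R_0/(1-\delta_0)$ has integral $e^{-\varepsilon} \le 1$, and the upper envelope $\tilde R_1/(1-\delta_0)$ has integral $(1-\delta_1)/(1-\delta_0) \ge 1$; take $R_1$ to be the convex combination of the two envelopes whose integral equals $1$. Then (i) holds for $R_1$ since $(1-\delta)R_1 \le \tilde R_1 \le P_1$, and (ii) holds since $R_1$ lies between $e^{-\varepsilon}R_0$ and $e^\varepsilon R_0$ by construction. This completes the structural lemma and hence the proof.
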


\begin{cor} \label{cor:DPtoApproxGDP}
If $M : \mathcal{X}^n \to \mathcal{Y}$ satisfies $(\varepsilon,\delta)$-DP, then $M$ satisfies $\delta$-approximate $(\varepsilon,0)$-\IIP{}, which, in turn, implies $\delta$-approximate $(0,\frac12 \varepsilon^2)$-\IIP{}.
\end{cor}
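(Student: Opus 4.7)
The plan is to invoke Lemma \ref{lem:KOV-MV} to reduce the problem to analyzing the canonical randomized-response mechanism $\tilde{M}_{\varepsilon,\delta}$ followed by postprocessing, and then to isolate the natural ``good event'' on which the mechanism collapses to pure $\varepsilon$-DP.

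Concretely, I would fix a pair of neighboring inputs $x_0, x_1 \in \mathcal{X}^n$ and apply Lemma \ref{lem:KOV-MV} to obtain a randomized map $T : \{0,1\} \times \{\bot,\top\} \to \mathcal{Y}$ such that $T(\tilde{M}_{\varepsilon,\delta}(b))$ is distributed identically to $M(x_b)$ for both $b \in \{0,1\}$. Let $E_b$ be the event that the second coordinate of $\tilde{M}_{\varepsilon,\delta}(b)$ equals $\bot$; from the definition of $\tilde{M}_{\varepsilon,\delta}$, $\pr{}{E_b} = 1-\delta$. Conditioned on $E_b$, the variable $\tilde{M}_{\varepsilon,\delta}(b)|_{E_b}$ is exactly a pure randomized response, outputting $(b,\bot)$ with probability $e^\varepsilon/(1+e^\varepsilon)$ and $(1-b,\bot)$ with probability $1/(1+e^\varepsilon)$. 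A pointwise comparison shows that the max-divergence between the $b{=}0$ and $b{=}1$ conditional distributions is at most $\varepsilon$ in both directions, so by monotonicity of R\'enyi divergence (Lemma \ref{lem:Renyi}) we get $\dr{\alpha}{\tilde{M}_{\varepsilon,\delta}(0)|_{E_0}}{\tilde{M}_{\varepsilon,\delta}(1)|_{E_1}} \leq \varepsilon = \varepsilon + 0\cdot\alpha$ for every $\alpha \in (1,\infty)$, and symmetrically.

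Next, I apply the postprocessing property of R\'enyi divergence (Lemma \ref{lem:Renyi}) to the map $T$, transferring the bound to the conditional distributions of $M(x_b)$ on the pulled-back events; carrying the events across the coupling $M(x_b) \sim T(\tilde{M}_{\varepsilon,\delta}(b))$ yields events of probability $1-\delta$ on which the R\'enyi divergences are bounded by $\varepsilon$ in both directions. This verifies Definition \ref{defn:ApproxRDP} with parameters $(\con,\lin) = (\varepsilon, 0)$. For the second assertion, I apply Proposition \ref{prop:EpsSquared} directly to the two conditional distributions $M(x_0)|_{E_0}$ and $M(x_1)|_{E_1}$, which satisfy pure $\varepsilon$-DP, to obtain $\dr{\alpha}{M(x_0)|_{E_0}}{M(x_1)|_{E_1}} \leq \tfrac12 \varepsilon^2 \alpha$, giving $\delta$-approximate $(0, \tfrac12\varepsilon^2)$-\IIP{}.

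The main thing to be careful about is the interpretation of the events in Definition \ref{defn:ApproxRDP}: the construction above naturally produces events in the joint probability space (including $T$'s internal randomness), rather than as subsets of the output space $\mathcal{Y}$. The phrasing ``events $E = E(M(x))$'' appears to allow this, and all subsequent properties of approximate \IIP{} (composition, postprocessing, conversion to $(\varepsilon,\delta)$-DP) are insensitive to this distinction. If one insisted on events inside $\mathcal{Y}$, an extra argument using the fact that $T$ can be absorbed into a Markov kernel and the good event pushed forward via the maximum-likelihood partition of $\mathcal{Y}$ would be needed; this is the only place where a routine but slightly delicate measure-theoretic step might intervene.
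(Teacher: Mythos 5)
Your proof is correct and follows essentially the same route as the paper's: invoke Lemma~\ref{lem:KOV-MV} to reduce to $\tilde M_{\varepsilon,\delta}$ followed by postprocessing $T$, condition on the $\bot$-event to get pure randomized response, bound the max-divergence by $\varepsilon$, use monotonicity and postprocessing to transfer the bound to $M(x_b)|_{E_b}$, and finish with Proposition~\ref{prop:EpsSquared}. Your closing remark about the events living in an enlarged probability space rather than as measurable subsets of $\mathcal{Y}$ is a fair reading of the notation $E=E(M(x))$, but the paper's proof constructs the events in exactly the same way (on the randomness of $\tilde M_{\varepsilon,\delta}$ under the coupling $M(x_b)\sim T(\tilde M_{\varepsilon,\delta}(b))$), so this is an issue with the definition's phrasing rather than a gap in your argument.
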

\begin{proof}
Fix neighbouring $x_0,x_1 \in \mathcal{X}^n$. Let $T : \{0,1\} \times \{\bot,\top\} \to \mathcal{Y}$ be as in Lemma \ref{lem:KOV-MV}.

Now we can write $M(x_b) = T(\tilde{M}_{\varepsilon,\delta}(b))$ for $b \in \{0,1\}$. Define events $E_0$ and $E_1$ by $$ E_b \equiv \left[\tilde{M}_{\varepsilon,\delta}(b) \in \{0,1\} \times \{\bot\} \right]. $$
By definition, for both $b \in \{0,1\}$, $\pr{\tilde{M}_{\varepsilon,\delta}(b)}{E_b} = 1-\delta$ and $$M(x_b)|_{E_b} = T\left(\tilde{M}_{\varepsilon,\delta}(b)|_{\tilde{M}_{\varepsilon,\delta}(b) \in \{0,1\} \times \{\bot\}}\right) = T(\tilde{M}_{\varepsilon,0}(b)).$$

We have $\dr{\infty}{\tilde{M}_{\varepsilon,0}(b)}{\tilde{M}_{\varepsilon,0}(1-b)} \leq \varepsilon$ for both $b \in \{0,1\}$. By postprocessing and monotonicity, this implies $$\dr{\alpha}{M(x_b)|_{E_b} }{M(x_{1-b})|_{E_{1-b}} } \leq \varepsilon$$ for both $b \in \{0,1\}$ and all $\alpha \in (1,\infty)$. Thus we have satisfied the definition of $\delta$-approximate $(\varepsilon,0)$-\IIP{}. 

Applying Proposition \ref{prop:EpsSquared} shows that this also implies $\delta$-approximate $(0,\frac12 \varepsilon^2)$-\IIP{}.
\end{proof}

\subsection{Approximate \IIP{} Implies Approximate DP}

\begin{lem} \label{lem:ApproxGDPtoDP}
Suppose $M : \mathcal{X}^n \to \mathcal{Y}$ satisfies $\delta$-approximate $(\con,\lin)$-\IIP{}. If $\lin=0$, then $M$ satisfies $(\con,\delta)$-DP. In general, $M$ satisfies $(\varepsilon,\delta+(1-\delta)\delta')$-DP for all $\varepsilon\geq\con+\lin$, where $$\delta' = e^{-(\varepsilon-\con-\lin)^2/4\lin} \cdot \min \left\{ \begin{array}{l} 1 \\ \sqrt{\pi \cdot \lin} \\ \frac{1}{1+(\varepsilon-\con-\lin)/2\lin} \\ \frac{2}{1+\frac{\varepsilon-\con-\lin}{2\lin} + \sqrt{\left(1+\frac{\varepsilon-\con-\lin}{2\lin}\right)^2+\frac{4}{\pi \lin}}} \end{array} \right..$$
\end{lem}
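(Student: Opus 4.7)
The plan is to apply Lemma \ref{lem:CDPtoDP2} to the conditional distributions $M(x)|_E$ and $M(x')|_{E'}$ supplied by the approximate-\IIP{} definition (yielding $(\varepsilon,\delta')$-DP between the conditionals), and then to ``pay'' an additional $\delta$ for the complement events $\neg E, \neg E'$, so that the two kinds of failure probability combine multiplicatively to give the claimed $\delta+(1-\delta)\delta'$.

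Concretely, fix neighbouring $x,x' \in \mathcal{X}^n$ and let $E, E'$ be the events supplied by the definition, with $\pr{}{M(x)\in E}, \pr{}{M(x')\in E'} \geq 1-\delta$. By hypothesis $M(x)|_E$ and $M(x')|_{E'}$ satisfy $(\con,\lin)$-\IIP{} in both directions. For the $\lin=0$ case, monotonicity of R\'enyi divergence already gives $\con$-DP between the conditionals (so $\delta' = 0$). For the general case, Lemma \ref{lem:CDPtoDP2} gives that the conditionals satisfy $(\varepsilon,\delta')$-DP for the stated $\varepsilon \ge \con+\lin$ and $\delta'$.

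Next, lift this conditional DP guarantee to an unconditional one. Writing $p = \pr{}{M(x)\in E}$ and $q = \pr{}{M(x')\in E'}$, for any measurable $S\subseteq\mathcal{Y}$ the law of total probability gives
\[ \pr{}{M(x)\in S} \leq p\cdot\pr{}{M(x)|_E\in S} + (1-p) \leq p\bigl(e^\varepsilon\pr{}{M(x')|_{E'}\in S}+\delta'\bigr) + (1-p). \]
Since $q\cdot\pr{}{M(x')|_{E'}\in S} = \pr{}{M(x')\in S\cap E'} \leq \pr{}{M(x')\in S}$, this becomes
\[ \pr{}{M(x)\in S} \leq (p/q)\cdot e^\varepsilon\pr{}{M(x')\in S} + p\delta' + (1-p). \]
Taking $p=q=1-\delta$ kills the $p/q$ factor, and the residual error is $p\delta'+(1-p) = 1 - p(1-\delta')$, which is a decreasing function of $p$ and hence at most $1-(1-\delta)(1-\delta') = \delta+(1-\delta)\delta'$. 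This is exactly the $(\varepsilon,\delta+(1-\delta)\delta')$-DP bound claimed, and the $\lin=0$ case is the specialization $\delta' = 0$.

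The main obstacle is justifying the reduction to $p=q=1-\delta$: naively shrinking the larger of $E, E'$ to equal the smaller would change the conditional distributions and could violate the R\'enyi divergence bound on the conditionals. The cleanest resolution is to note that the worst case of the bound is indeed $p = 1-\delta$ (by monotonicity of $p\mapsto 1-p(1-\delta')$), and to exploit the symmetry of the definition under swapping $(E,x) \leftrightarrow (E',x')$ to arrange the favourable ordering $p \leq q$ in the calculation above. Once this bookkeeping is settled, the lemma follows directly from the displayed inequality.
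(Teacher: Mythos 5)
Your overall plan matches the paper's: condition on the good events $E,E'$, apply Lemma~\ref{lem:CDPtoDP2} (equivalently Lemma~\ref{lem:RenyiToED}) to get $(\varepsilon,\delta')$-DP between the conditional distributions, and then lift back to an unconditional DP statement. The $\lin=0$ specialization is fine. However, the lifting step as you've written it has a genuine gap that your proposed patch does not close.

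Your chain of inequalities produces a multiplicative factor $p/q$ in front of $e^\varepsilon$. When $p>q$ this factor exceeds~$1$, giving only $\bigl(\varepsilon+\log(p/q),\cdot\bigr)$-DP. Your suggested fix --- exploit the symmetry $(E,x)\leftrightarrow(E',x')$ to arrange $p\le q$ --- does not work: performing that swap proves $\pr{}{M(x')\in S}\le e^\varepsilon\pr{}{M(x)\in S}+\cdots$, which is the \emph{other} ordered direction of the DP condition, not the one you started from. Since $(\varepsilon,\tilde\delta)$-DP must hold for both orderings of each neighbouring pair, one of the two directions will always carry the unfavourable factor $\max(p/q,q/p)>1$ under your accounting. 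The monotonicity-of-residual observation is orthogonal and does not help with this.

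The right move is to avoid the $p/q$ factor entirely rather than try to force it to equal~$1$. With $a:=\pr{}{M(x)\in S\mid E}\in[0,1]$ and $p=\pr{}{E}\ge 1-\delta$, observe that
$$ ap+(1-p)-\bigl(a(1-\delta)+\delta\bigr) = (a-1)\bigl(p-(1-\delta)\bigr) \le 0, $$
so $\pr{}{M(x)\in S}\le ap+(1-p)\le a(1-\delta)+\delta$. On the other side, $\pr{}{M(x')\in S}\ge q\cdot\pr{}{M(x')\in S\mid E'}\ge(1-\delta)\cdot\pr{}{M(x')\in S\mid E'}$. Substituting the conditional bound $a\le e^\varepsilon\pr{}{M(x')\in S\mid E'}+\delta'$, the two $(1-\delta)$ factors match exactly and you obtain $\pr{}{M(x)\in S}\le e^\varepsilon\pr{}{M(x')\in S}+\delta+(1-\delta)\delta'$ for arbitrary $p,q\ge 1-\delta$, with no WLOG and no $p/q$. (The paper's ``assume WLOG $\pr{}{E}=\pr{}{E'}=1-\delta$'' can also be justified outright by thinning $E$ to a randomized sub-event of probability exactly $1-\delta$, which leaves the conditional law of $M(x)$ unchanged, but the computation above shows that the assumption is unnecessary.)
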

\begin{proof}
Fix neighbouring $x,x' \in \mathcal{X}^n$ and let $E$ and $E'$ be the events promised by definition \ref{defn:ApproxRDP}. We can assume, without loss of generality that $\pr{}{E}=\pr{}{E'}=1-\delta$. 

Fix $S \subset \mathcal{Y}$. Then 
\begin{align*}
\pr{}{M(x) \in S} 
=& \pr{}{M(x) \in S \mid E} \cdot \pr{}{E} + \pr{}{M(x) \in S \mid \neg E} \cdot \pr{}{\neg E}\\
\leq& \pr{}{M(x) \in S \mid E} \cdot (1-\delta) + \delta,\\
\pr{}{M(x') \in S} =& \pr{}{M(x') \in S \mid E'} \cdot \pr{}{E'} + \pr{}{M(x') \in S \mid \neg E'} \cdot \pr{}{\neg E'}\\
\geq& \pr{}{M(x') \in S \mid E'} \cdot (1-\delta).
\end{align*}
Firstly, if $\lin=0$, then $$\pr{}{M(x) \in S \mid E} \leq e^\con \pr{}{M(x') \in S \mid E'}$$
and $$\pr{}{M(x) \in S} \leq \pr{}{M(x) \in S \mid E} \cdot (1-\delta) + \delta \leq e^\con \pr{}{M(x') \in S \mid E'} \cdot (1-\delta) + \delta \leq e^\con \pr{}{M(x') \in S} + \delta,$$ which proves the first half of the lemma.

Secondly, by Lemma \ref{lem:RenyiToED} (cf.~Lemma \ref{lem:CDPtoDP}), for all $\varepsilon \geq \con+\lin$, $$\pr{}{M(x) \in S \mid E} \leq e^\varepsilon \pr{}{M(x') \in S \mid E'} + e^{-(\varepsilon-\con-\lin)^2/4\lin} \cdot \min \left\{ \begin{array}{l} 1 \\ \sqrt{\pi \cdot \lin} \\ \frac{1}{1+(\varepsilon-\con-\lin)/2\lin} \\ \frac{2}{1+\frac{\varepsilon-\con-\lin}{2\lin} + \sqrt{\left(1+\frac{\varepsilon-\con-\lin}{2\lin}\right)^2+\frac{4}{\pi \lin}}} \end{array} \right..$$
Thus $$\pr{}{M(x) \in S} \leq e^\varepsilon \pr{}{M(x') \in S} + \delta + (1-\delta) \cdot e^{-(\varepsilon-\con-\lin)^2/4\lin} \cdot \min \left\{ \begin{array}{l} 1 \\ \sqrt{\pi \cdot \lin} \\ \frac{1}{1+(\varepsilon-\con-\lin)/2\lin} \\ \frac{2}{1+\frac{\varepsilon-\con-\lin}{2\lin} + \sqrt{\left(1+\frac{\varepsilon-\con-\lin}{2\lin}\right)^2+\frac{4}{\pi \lin}}} \end{array} \right..$$
\end{proof}

\subsection{Application of Approximate \IIP{}}

Approximate \IIP{} subsumes approximate DP. A result of this is that we can apply our tightened lemmas to give a tighter version of the so-called advanced composition theorem \cite{DworkRV10}.

Note that the following results are subsumed by the bounds of Kairouz, Oh, and Viswanath \cite{KairouzOV15} and Murtagh and Vadhan \cite{MurtaghV16}. However, these bounds may be extended to analyse the composition of mechanisms satisfying \cdp{} with mechanisms satisfying approximate DP. We believe that such a ``unified'' analysis of composition will be useful.

Applying Corollary \ref{cor:DPtoApproxGDP}, Lemma \ref{lem:CompPost-ApproxGDP}, and Lemma \ref{lem:ApproxGDPtoDP} yields the following result.

\begin{cor}
Let $M_1, \cdots, M_k : \mathcal{X}^n \to \mathcal{Y}$ and let $M : \mathcal{X}^n \to \mathcal{Y}^k$ be their composition. Suppose each $M_i$ satisfies $(\varepsilon_i,\delta_i)$-DP. Set $\lin = \frac12 \sum_i^k \varepsilon_i^2$. Then $M$ satisfies $$\left(\varepsilon, 1-(1-\delta')\prod_i^k (1-\delta_i)\right)\text{-DP}$$ for all $\varepsilon \geq \lin$ and $$\delta' = e^{-(\varepsilon-\lin)^2/4\lin} \cdot \min \left\{ \begin{array}{l} 1 \\ \sqrt{\pi \cdot \lin} \\ \frac{1}{1+(\varepsilon-\lin)/2\lin} \\ \frac{2}{1+\frac{\varepsilon-\lin}{2\lin} + \sqrt{\left(1+\frac{\varepsilon-\lin}{2\lin}\right)^2+\frac{4}{\pi \lin}}} \end{array} \right..$$
\end{cor}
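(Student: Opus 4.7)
The plan is to chain together the three lemmas identified in the paragraph immediately preceding the corollary: Corollary \ref{cor:DPtoApproxGDP} (converting each $(\varepsilon_i,\delta_i)$-DP mechanism into an approximate \IIP{} mechanism), Lemma \ref{lem:CompPost-ApproxGDP} (composing the resulting approximate \IIP{} guarantees), and Lemma \ref{lem:ApproxGDPtoDP} (converting the composed approximate \IIP{} bound back to approximate DP). The key observation is that approximate \IIP{} is a unifying abstraction in which both the conversion step and the composition step behave cleanly, so one only has to track parameters carefully.

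First, I would apply Corollary \ref{cor:DPtoApproxGDP} to each $M_i$: since $M_i$ is $(\varepsilon_i,\delta_i)$-DP, it satisfies $\delta_i$-approximate $(0,\tfrac12\varepsilon_i^2)$-\IIP{}. Next, I would iteratively apply Lemma \ref{lem:CompPost-ApproxGDP}: composing $\delta_i$-approximate $(0,\tfrac12\varepsilon_i^2)$-\IIP{} mechanisms yields $\delta^\star$-approximate $(0,\lin)$-\IIP{} for $M$, where $\lin=\tfrac12\sum_{i=1}^k \varepsilon_i^2$ and the failure parameter composes as $\delta^\star = 1-\prod_{i=1}^k(1-\delta_i)$ (using that the two-fold rule $\delta+\delta'-\delta\delta' = 1-(1-\delta)(1-\delta')$ iterates to this product form).

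Finally, I would apply Lemma \ref{lem:ApproxGDPtoDP} to the composed mechanism with $\con=0$ and $\lin=\tfrac12\sum_i \varepsilon_i^2$. For any $\varepsilon\ge \lin$, that lemma produces the auxiliary parameter
\[
\delta' = e^{-(\varepsilon-\lin)^2/4\lin} \cdot \min\!\left\{ 1,\ \sqrt{\pi\lin},\ \tfrac{1}{1+(\varepsilon-\lin)/2\lin},\ \tfrac{2}{1+\frac{\varepsilon-\lin}{2\lin}+\sqrt{(1+\frac{\varepsilon-\lin}{2\lin})^2+\frac{4}{\pi\lin}}}\right\}
\]
and concludes that $M$ is $(\varepsilon,\,\delta^\star + (1-\delta^\star)\delta')$-DP. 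The final arithmetic simplification is
\[
\delta^\star + (1-\delta^\star)\delta' \;=\; 1-(1-\delta^\star)(1-\delta') \;=\; 1-(1-\delta')\prod_{i=1}^k(1-\delta_i),
\]
which is exactly the $\delta$-parameter stated in the corollary.

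There is no substantive obstacle here; the content of the corollary lies in the three lemmas being invoked, and the proof is essentially a parameter-bookkeeping exercise. The only mildly delicate point is verifying that the product form $1-\prod_i(1-\delta_i)$ is produced correctly by iterating the two-fold composition rule, and then recognizing that the outer combination with $\delta'$ from Lemma \ref{lem:ApproxGDPtoDP} factors as a further $(1-\delta')$ term in the same product.
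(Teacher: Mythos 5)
Your proof is correct and follows exactly the route the paper indicates: apply Corollary \ref{cor:DPtoApproxGDP} to each $M_i$, iterate Lemma \ref{lem:CompPost-ApproxGDP} to compose, and finish with Lemma \ref{lem:ApproxGDPtoDP}, after which the algebra $\delta^\star + (1-\delta^\star)\delta' = 1-(1-\delta')\prod_i(1-\delta_i)$ gives the stated $\delta$-parameter.
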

A slight restatement is the following
\begin{cor}
Let $M_1, \cdots, M_k : \mathcal{X}^n \to \mathcal{Y}$ and let $M : \mathcal{X}^n \to \mathcal{Y}^k$ be their composition. Suppose each $M_i$ satisfies $(\varepsilon_i,\delta_i)$-DP. Set $\varepsilon^2 = \frac12 \sum_i^k \varepsilon_i^2$. Then $M$ satisfies $$\left(\varepsilon^2+2\lambda\varepsilon, 1-(1-\delta')\prod_i^k (1-\delta_i)\right)\text{-DP}$$ for all $\lambda \geq 0$ and $$\delta' = e^{-\lambda^2} \cdot \min \left\{ \begin{array}{l} 1 \\ \sqrt{\pi} \cdot \varepsilon \\ \frac{1}{1+\lambda/\varepsilon} \\ \frac{2}{1+\frac{\lambda}{\varepsilon} + \sqrt{\left(1+\frac{\lambda}{\varepsilon}\right)^2+\frac{4}{\pi \varepsilon^2}}} \end{array} \right..$$
\end{cor}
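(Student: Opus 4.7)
The statement is a cosmetic reparameterization of the immediately preceding corollary, so the plan is essentially to verify that the substitution works out. In the previous corollary, $\lin = \tfrac{1}{2}\sum_i \varepsilon_i^2$ was the composed zCDP parameter, the DP parameter was a free variable (call it $\varepsilon^\star$) ranging over $[\lin, \infty)$, and $\delta'$ was given as $e^{-(\varepsilon^\star-\lin)^2/4\lin}$ times a certain minimum. I would reparameterize by writing $\lin = \varepsilon^2$ (so $\varepsilon^2 = \tfrac{1}{2}\sum_i \varepsilon_i^2$ as in the statement) and $\varepsilon^\star = \varepsilon^2 + 2\lambda\varepsilon$, and simply substitute.

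The first check is that the range of the new parameter $\lambda$ matches the old range of $\varepsilon^\star$: the condition $\varepsilon^\star \geq \lin$ becomes $\varepsilon^2 + 2\lambda\varepsilon \geq \varepsilon^2$, i.e.\ $\lambda \geq 0$, which is exactly what the restatement assumes. The second check is that the exponential factor transforms correctly:
\[
\frac{(\varepsilon^\star - \lin)^2}{4\lin} \;=\; \frac{(2\lambda\varepsilon)^2}{4\varepsilon^2} \;=\; \lambda^2,
\]
so $e^{-(\varepsilon^\star-\lin)^2/4\lin}$ becomes $e^{-\lambda^2}$, matching the restated bound.

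Finally I would verify that each of the four entries in the minimum transforms as claimed. For the term $\sqrt{\pi \lin}$ this becomes $\sqrt{\pi}\cdot\varepsilon$. For the term $1/(1+(\varepsilon^\star-\lin)/2\lin)$, the ratio inside simplifies as $(2\lambda\varepsilon)/(2\varepsilon^2) = \lambda/\varepsilon$, yielding $1/(1+\lambda/\varepsilon)$. The fourth term is the same computation with the same substitution $(\varepsilon^\star-\lin)/2\lin = \lambda/\varepsilon$ and $4/(\pi\lin) = 4/(\pi\varepsilon^2)$, giving the stated expression. The trivial entry $1$ carries over unchanged. Since both corollaries involve the same $\delta$-bookkeeping $1 - (1-\delta')\prod_i(1-\delta_i)$, no further work is needed.

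There is no real obstacle here: the only mildly awkward point is the notational collision, since the symbol $\varepsilon$ is used in the restatement to denote $\sqrt{\lin}$ rather than the DP parameter. I would therefore write the proof explicitly as a change of variables --- ``apply the preceding corollary with $\lin := \varepsilon^2$ and the DP parameter $\varepsilon^\star := \varepsilon^2 + 2\lambda\varepsilon$, and observe that $(\varepsilon^\star-\lin)/(2\lin) = \lambda/\varepsilon$ and $(\varepsilon^\star-\lin)^2/(4\lin) = \lambda^2$'' --- to avoid any confusion between the two uses of $\varepsilon$.
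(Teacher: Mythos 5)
Your proposal is correct and is exactly what the paper intends: the paper introduces this corollary with the phrase ``a slight restatement is the following,'' and your change of variables $\lin := \varepsilon^2$, $\varepsilon^\star := \varepsilon^2 + 2\lambda\varepsilon$ (giving $(\varepsilon^\star-\lin)^2/4\lin = \lambda^2$ and $(\varepsilon^\star-\lin)/2\lin = \lambda/\varepsilon$) is the intended verification. Your note about the notational collision of $\varepsilon$ is a sensible clarification to include.
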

Finally, by picking the second term in the minimum and using $1-\prod_i(1-\delta_i) \leq \sum_i \delta_i$, we have the following simpler form of the lemma.
\begin{cor}
Let $M_1, \cdots, M_k : \mathcal{X}^n \to \mathcal{Y}$ and let $M : \mathcal{X}^n \to \mathcal{Y}^k$ be their composition. Suppose each $M_i$ satisfies $(\varepsilon_i,\delta_i)$-DP. Then $M$ satisfies $$\left(\frac12\|\varepsilon\|_2^2+\sqrt{2}\lambda\|\varepsilon\|_2, \sqrt{\frac{\pi}{2}} \cdot \|\varepsilon\|_2 \cdot e^{-\lambda^2} + \| \delta \|_1 \right)\text{-DP}$$ for all $\lambda \geq 0$. Alternatively $M$ satisfies $$\left(\frac12\|\varepsilon\|_2^2+\sqrt{2\log(\sqrt{\pi/2} \cdot \|\varepsilon\|_2/\delta')}\cdot\|\varepsilon\|_2,\delta' + \| \delta \|_1 \right)\text{-DP}$$ for all $\delta' \geq 0$.
\end{cor}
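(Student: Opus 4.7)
The plan is to obtain this corollary as a direct specialisation of the immediately preceding corollary, so no new probabilistic machinery is needed --- the work is purely algebraic rearrangement. Recall that the preceding corollary establishes that the composition $M$ satisfies $(\bar\varepsilon^2+2\lambda\bar\varepsilon,\, 1-(1-\delta')\prod_i(1-\delta_i))$-DP, where I write $\bar\varepsilon^2 := \tfrac12 \sum_i \varepsilon_i^2 = \tfrac12\|\varepsilon\|_2^2$ (so $\bar\varepsilon = \|\varepsilon\|_2/\sqrt{2}$) to avoid clashing with the vector $\varepsilon=(\varepsilon_1,\ldots,\varepsilon_k)$, and where $\delta'$ may be taken to equal $e^{-\lambda^2}$ times any of the four listed expressions.

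First I would substitute $\bar\varepsilon = \|\varepsilon\|_2/\sqrt{2}$ into the $\varepsilon$-parameter: $\bar\varepsilon^2 + 2\lambda\bar\varepsilon = \tfrac12\|\varepsilon\|_2^2 + \sqrt{2}\,\lambda\|\varepsilon\|_2$, which already matches the target. Next I would pick the second entry of the minimum, namely $\sqrt{\pi}\cdot\bar\varepsilon = \sqrt{\pi/2}\cdot\|\varepsilon\|_2$, yielding $\delta' = \sqrt{\pi/2}\cdot\|\varepsilon\|_2 \cdot e^{-\lambda^2}$. Finally, I would apply the elementary inequality $1-(1-\delta')\prod_i(1-\delta_i) \leq \delta' + \sum_i \delta_i = \delta' + \|\delta\|_1$, which is the standard union bound proved by induction on $k$. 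Combining these steps produces exactly the first displayed bound.

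For the alternative form, I would reparametrise by solving the equation $\sqrt{\pi/2}\cdot\|\varepsilon\|_2 \cdot e^{-\lambda^2} = \delta'$ for $\lambda$, giving $\lambda = \sqrt{\log(\sqrt{\pi/2}\cdot\|\varepsilon\|_2/\delta')}$, and substituting back into $\sqrt{2}\,\lambda\|\varepsilon\|_2$ to obtain the stated $\varepsilon$-parameter $\tfrac12\|\varepsilon\|_2^2 + \sqrt{2\log(\sqrt{\pi/2}\cdot\|\varepsilon\|_2/\delta')}\cdot\|\varepsilon\|_2$. The $\delta$-parameter then becomes $\delta' + \|\delta\|_1$ as required. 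Since the previous corollary only applies when the chosen term of the minimum is meaningful (in particular, when $\log(\sqrt{\pi/2}\|\varepsilon\|_2/\delta') \geq 0$, i.e. $\delta' \leq \sqrt{\pi/2}\|\varepsilon\|_2$), I would briefly note that for larger $\delta'$ the bound is trivially implied by taking $\lambda=0$ (or any valid smaller $\delta'$), so the statement "for all $\delta'\geq 0$" remains correct.

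There is no real obstacle here: the entire argument is a change of variables plus a union bound. The only thing to be careful about is keeping the two meanings of the symbol $\varepsilon$ (the scalar from the previous corollary versus the vector $(\varepsilon_1,\ldots,\varepsilon_k)$ in the statement) straight, and confirming that the factor of $\sqrt{2}$ coming from $\bar\varepsilon = \|\varepsilon\|_2/\sqrt{2}$ propagates consistently through both the main term $2\lambda\bar\varepsilon$ and the $\sqrt{\pi}\cdot\bar\varepsilon$ choice in the minimum.
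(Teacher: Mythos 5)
Your proof is correct and follows exactly the route the paper intends: the paper derives this corollary from the immediately preceding one simply by choosing the second entry of the minimum, rewriting the scalar $\varepsilon = \|\varepsilon\|_2/\sqrt 2$, applying $1-\prod_i(1-\delta_i)\le\sum_i\delta_i$, and reparametrising $\lambda\leftrightarrow\delta'$, which is precisely what you did. Your side remark about the vacuity of the second form when $\delta' > \sqrt{\pi/2}\,\|\varepsilon\|_2$ is a sensible clarification that the paper leaves implicit.
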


In comparison to the composition theorem of \cite{DworkRV10}, we save modestly by a constant factor in the first term and, in most cases $\sqrt{\pi/2} \|\varepsilon\|_2 < 1$, whence the logarithmic term is an improvement over the usual advanced composition theorem.

\addcontentsline{toc}{section}{Acknowledgements}
\paragraph{Acknowledgements}
We thank Cynthia Dwork and Guy Rothblum for sharing a preliminary draft of their work with us. We also thank Ilya Mironov, Kobbi Nissim, Adam Smith, Salil Vadhan, and the Harvard Differential Privacy Research Group for helpful discussions and suggestions.

\addcontentsline{toc}{section}{References}
\bibliographystyle{alpha}
\bibliography{refs.bib}

\newcommand{\etalchar}[1]{$^{#1}$}
\begin{thebibliography}{MMP{\etalchar{+}}10}

\bibitem[AS64]{AbramowitzS64}
Milton Abramowitz and Irene~A Stegun.
\newblock {\em Handbook of mathematical functions: with formulas, graphs, and
  mathematical tables}, volume~55.
\newblock Courier Corporation, 1964.
\newblock
  \href{http://people.math.sfu.ca/~cbm/aands/abramowitz_and_stegun.pdf}{\texttt{http://people.math.sfu.ca/~cbm/aands/abramowitz\_and\_stegun.pdf}}.

\bibitem[BLR13]{BlumLR08}
Avrim Blum, Katrina Ligett, and Aaron Roth.
\newblock A learning theory approach to noninteractive database privacy.
\newblock {\em J. {ACM}}, 60(2):12, 2013.

\bibitem[BNS13]{BeimelNS13}
Amos Beimel, Kobbi Nissim, and Uri Stemmer.
\newblock Private learning and sanitization: Pure vs. approximate differential
  privacy.
\newblock In {\em Approximation, Randomization, and Combinatorial Optimization.
  Algorithms and Techniques - 16th International Workshop, {APPROX} 2013, and
  17th International Workshop, {RANDOM} 2013, Berkeley, CA, USA, August 21-23,
  2013. Proceedings}, pages 363--378, 2013.

\bibitem[BNS{\etalchar{+}}16a]{BassilyNSSSU16}
Raef Bassily, Kobbi Nissim, Adam Smith, Thomas Steinke, Uri Stemmer, and
  Jonathan Ullman.
\newblock Algorithmic stability for adaptive data analysis.
\newblock In {\em STOC}, 2016.

\bibitem[BNS16b]{BunNS16}
Mark Bun, Kobbi Nissim, and Uri Stemmer.
\newblock Simultaneous private learning of multiple concepts.
\newblock In {\em Proceedings of the 2016 ACM Conference on Innovations in
  Theoretical Computer Science}, ITCS '16, pages 369--380, New York, NY, USA,
  2016. ACM.

\bibitem[BNSV15]{BunNSV15}
Mark Bun, Kobbi Nissim, Uri Stemmer, and Salil Vadhan.
\newblock Differentially private release and learning of threshold functions.
\newblock In {\em Foundations of Computer Science (FOCS), 2015 IEEE 56th Annual
  Symposium on}, pages 634--649. IEEE, 2015.

\bibitem[BSU16]{BunSU16}
Mark Bun, Thomas Steinke, and Jonathan Ullman.
\newblock Make up your mind: The price of online queries in differential
  privacy.
\newblock {\em CoRR}, abs/1604.04618, 2016.

\bibitem[BUV14]{BunUV14}
Mark Bun, Jonathan Ullman, and Salil~P. Vadhan.
\newblock Fingerprinting codes and the price of approximate differential
  privacy.
\newblock In {\em Symposium on Theory of Computing, {STOC} 2014, New York, NY,
  USA, May 31 - June 03, 2014}, pages 1--10, 2014.

\bibitem[Coo09]{Cook09}
John~D Cook.
\newblock Upper and lower bounds for the normal distribution function, 2009.

\bibitem[De12]{De12}
Anindya De.
\newblock Lower bounds in differential privacy.
\newblock In {\em Proceedings of the 9th International Conference on Theory of
  Cryptography}, TCC'12, pages 321--338, Berlin, Heidelberg, 2012.
  Springer-Verlag.

\bibitem[DKM{\etalchar{+}}06]{DworkKMMN06}
Cynthia Dwork, Krishnaram Kenthapadi, Frank McSherry, Ilya Mironov, and Moni
  Naor.
\newblock Our data, ourselves: Privacy via distributed noise generation.
\newblock In {\em Advances in Cryptology - {EUROCRYPT} 2006, 25th Annual
  International Conference on the Theory and Applications of Cryptographic
  Techniques, St. Petersburg, Russia, May 28 - June 1, 2006, Proceedings},
  pages 486--503, 2006.

\bibitem[DL09]{DworkL09}
Cynthia Dwork and Jing Lei.
\newblock Differential privacy and robust statistics.
\newblock In {\em Proceedings of the 41st Annual {ACM} Symposium on Theory of
  Computing, {STOC} 2009, Bethesda, MD, USA, May 31 - June 2, 2009}, pages
  371--380, 2009.

\bibitem[DMNS06]{DworkMNS06}
Cynthia Dwork, Frank McSherry, Kobbi Nissim, and Adam Smith.
\newblock Calibrating noise to sensitivity in private data analysis.
\newblock In {\em Theory of Cryptography, Third Theory of Cryptography
  Conference, {TCC} 2006, New York, NY, USA, March 4-7, 2006, Proceedings},
  pages 265--284, 2006.

\bibitem[DR14]{DworkR14}
Cynthia Dwork and Aaron Roth.
\newblock {\em The Algorithmic Foundations of Differential Privacy}, volume~9.
\newblock Foundations and Trends{\textregistered} in Theoretical Computer Sc,
  2014.

\bibitem[DR16]{DworkR16}
Cynthia Dwork and Guy Rothblum.
\newblock Concentrated differential privacy.
\newblock {\em CoRR}, abs/1603.01887, 2016.

\bibitem[DRV10]{DworkRV10}
Cynthia Dwork, Guy~N. Rothblum, and Salil~P. Vadhan.
\newblock Boosting and differential privacy.
\newblock In {\em IEEE Symposium on Foundations of Computer Science (FOCS
  '10)}, pages 51--60. {IEEE}, 23--26 October 2010.

\bibitem[DSS{\etalchar{+}}15]{DworkSSUV15}
Cynthia Dwork, Adam Smith, Thomas Steinke, Jonathan Ullman, and Salil Vadhan.
\newblock Robust traceability from trace amounts.
\newblock In {\em FOCS}, 2015.

\bibitem[{Due}10]{Duembgen10}
L.~{Duembgen}.
\newblock {Bounding Standard Gaussian Tail Probabilities}.
\newblock {\em ArXiv e-prints}, December 2010.

\bibitem[hh]{TrigMathSE}
hbp (http://math.stackexchange.com/users/131476/hbp).
\newblock Hyperbolic trig inequality.
\newblock Mathematics Stack Exchange.
\newblock URL:http://math.stackexchange.com/q/1461426 (version: 2015-10-02).

\bibitem[HT10]{HardtT10}
Moritz Hardt and Kunal Talwar.
\newblock On the geometry of differential privacy.
\newblock In {\em Proceedings of the Forty-second ACM Symposium on Theory of
  Computing}, STOC '10, pages 705--714, New York, NY, USA, 2010. ACM.

\bibitem[KOV15]{KairouzOV15}
Peter Kairouz, Sewoong Oh, and Pramod Viswanath.
\newblock The composition theorem for differential privacy.
\newblock In {\em Proceedings of the 32nd International Conference on Machine
  Learning, {ICML} 2015, Lille, France, 6-11 July 2015}, pages 1376--1385,
  2015.

\bibitem[MMP{\etalchar{+}}10]{McGregorMPRTV10}
Andrew McGregor, Ilya Mironov, Toniann Pitassi, Omer Reingold, Kunal Talwar,
  and Salil~P. Vadhan.
\newblock The limits of two-party differential privacy.
\newblock In {\em 51th Annual {IEEE} Symposium on Foundations of Computer
  Science, {FOCS} 2010, October 23-26, 2010, Las Vegas, Nevada, {USA}}, pages
  81--90, 2010.

\bibitem[MT07]{McSherryT07}
F.~McSherry and K.~Talwar.
\newblock Mechanism design via differential privacy.
\newblock In {\em Foundations of Computer Science, 2007. FOCS '07. 48th Annual
  IEEE Symposium on}, pages 94--103, Oct 2007.

\bibitem[MV16]{MurtaghV16}
Jack Murtagh and Salil~P. Vadhan.
\newblock The complexity of computing the optimal composition of differential
  privacy.
\newblock In {\em Theory of Cryptography - 13th International Conference, {TCC}
  2016-A, Tel Aviv, Israel, January 10-13, 2016, Proceedings, Part {I}}, pages
  157--175, 2016.

\bibitem[R{\'e}n61]{Renyi61}
Alfr{\'e}d R{\'e}nyi.
\newblock On measures of entropy and information.
\newblock In {\em Proceedings of the Fourth Berkeley Symposium on Mathematical
  Statistics and Probability, Volume 1: Contributions to the Theory of
  Statistics}, pages 547--561, Berkeley, Calif., 1961. University of California
  Press.

\bibitem[Riv12]{Rivasplata12}
Omar Rivasplata.
\newblock Subgaussian random variables: An expository note, 2012.
\newblock
  \href{http://www.stat.cmu.edu/~arinaldo/36788/subgaussians.pdf}{http://www.stat.cmu.edu/~arinaldo/36788/subgaussians.pdf}.

\bibitem[SU15a]{SteinkeU15b}
Thomas Steinke and Jonathan Ullman.
\newblock Between pure and approximate differential privacy.
\newblock {\em CoRR}, abs/1501.06095, 2015.

\bibitem[SU15b]{SteinkeU14}
Thomas Steinke and Jonathan Ullman.
\newblock Interactive fingerprinting codes and the hardness of preventing false
  discovery.
\newblock In {\em COLT}, 2015.
\newblock
  \href{http://arxiv.org/abs/1410.1228}{\texttt{http://arxiv.org/abs/1410.1228}}.

\bibitem[Tar08]{Tardos03}
G{\'{a}}bor Tardos.
\newblock Optimal probabilistic fingerprint codes.
\newblock {\em J. {ACM}}, 55(2), 2008.

\bibitem[Ull13]{Ullman13}
Jonathan Ullman.
\newblock Answering n $\{$2+ o (1)$\}$ counting queries with differential
  privacy is hard.
\newblock In {\em Proceedings of the forty-fifth annual ACM symposium on Theory
  of computing}, pages 361--370. ACM, 2013.

\bibitem[vEH14]{vanErvenH14}
T.~van Erven and P.~Harremos.
\newblock R\'{e}nyi divergence and {K}ullback-{L}eibler divergence.
\newblock {\em IEEE Transactions on Information Theory}, 60(7):3797--3820, July
  2014.

\bibitem[War65]{Warner65}
Stanley~L. Warner.
\newblock Randomized response: A survey technique for eliminating evasive
  answer bias.
\newblock {\em Journal of the American Statistical Association},
  60(309):63--69, 1965.
\newblock PMID: 12261830.

\end{thebibliography}

\appendix

\section{Postprocessing and \mcdp{}} \label{app:pp-mcdp}

In this appendix we give a family of counterexamples showing that \mcdp{} is \emph{not} closed under postprocessing (unlike \IIP{}).

Fix a parameter $\sigma > 0$, and consider the Gaussian mechanism for a single bit $M : \{-1, 1\} \to \mathbb{R}$, where $M(x)$ samples from $\mathcal{N}(x, \sigma^2)$. The mechanism $M$ satisfies $(2/\sigma^2, 2/\sigma)$-\mcdp{} (and also $(2/\sigma^2)$-\IIP{}).

Now consider the postprocessing function $T: \mathbb{R} \to \{-1, 0, 1\}$ defined as follows:
\[T(y) = \begin{cases}
1 & \text{ if } y > t \\
-1 & \text{ if } y < -t \\
0 & \text{ if } -t \le y \le t.
\end{cases}\]

We examine the \mcdp{} guarantees of the postprocessed mechanism $M' : \{-1, 1\} \to \{-1,0,1\}$ defined by $M'(x) = T(M(x))$:

\begin{prop}
Let $\sigma \ge 1$ and let $t \ge 6\sigma^3 + 1$. Then while the mechanism $M$ is $(2/\sigma^2, 2/\sigma)$-\mcdp{}, the postprocessed mechanism $M'$ is not $(2/\sigma^2, 2/\sigma)$-\mcdp{}.
\end{prop}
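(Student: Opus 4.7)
The plan is to write down the postprocessed distribution explicitly and reduce the question of whether $M'$ satisfies $(2/\sigma^2, 2/\sigma)$-\mcdp{} to a concrete analytic inequality about the standard normal tail. Set $a = \pr{}{\mathcal{N}(1,\sigma^2) > t} = 1 - \Phi((t-1)/\sigma)$ and $b = \pr{}{\mathcal{N}(1,\sigma^2) < -t} = 1 - \Phi((t+1)/\sigma)$. By symmetry, $M'(1)$ equals $1, -1, 0$ with probabilities $a, b, 1 - a - b$, while $M'(-1)$ equals $1, -1, 0$ with probabilities $b, a, 1 - a - b$. Consequently the privacy loss $Z' = \privloss{M'(1)}{M'(-1)}$ takes just three values: $L := \log(a/b)$ with probability $a$, $-L$ with probability $b$, and $0$ with probability $1 - a - b$. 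Its mean $\mu' = (a - b)L$ is super-exponentially small in the prescribed range (since $a, b \leq e^{-\Omega(\sigma^4)}$), so the mean condition of \mcdp{} is trivially satisfied and only the subgaussian condition can be violated.

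To show that the subgaussian condition fails at parameter $\tau = 2/\sigma$, I would exhibit some $\lambda > 0$ with $\ex{}{e^{\lambda(Z' - \mu')}} > e^{2\lambda^2/\sigma^2}$. Using the trivial one-term lower bound $\ex{}{e^{\lambda(Z' - \mu')}} \geq a \cdot e^{\lambda(L - \mu')}$ and optimizing by choosing $\lambda^{*} = (L - \mu')\sigma^2/4$, the condition to be violated collapses to the clean analytic estimate
\[
-\log a \;<\; (L - \mu')^2 \sigma^2 / 8.
\]
Intuitively this says that the rare but highly-identifying output $M'(1) = 1$ occurs with too much probability (given how far apart the two induced distributions are) for a $(2/\sigma)$-subgaussian random variable centered at $\mu'$ to explain.

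To verify this inequality I would invoke the standard Mills-ratio bounds $\frac{u}{(u^2+1)\sqrt{2\pi}} e^{-u^2/2} \leq 1 - \Phi(u) \leq \frac{1}{u\sqrt{2\pi}} e^{-u^2/2}$ valid for $u > 0$. The lower estimate at $u = (t-1)/\sigma$ gives $-\log a \leq (t-1)^2/(2\sigma^2) + \log((t-1)/\sigma) + C_0$ for a small absolute constant $C_0$, while combining both bounds at $u = (t \pm 1)/\sigma$ yields $L \geq 2t/\sigma^2 - \eta$ for an error $\eta = O(1/\sigma^4)$. Since $\mu'$ is negligible, $(L - \mu')^2 \sigma^2/8 \geq t^2/(2\sigma^2) - o(1)$, and the desired gap becomes at least $\frac{2t - 1}{2\sigma^2} - \log\frac{t-1}{\sigma} - C_0 - o(1)$.

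The main obstacle is closing this inequality numerically in the exact regime given, and this is where the curious hypothesis $t \geq 6\sigma^3 + 1$ enters. The dominant linear-in-$t$ first term must beat an additive $\log(t/\sigma)$ correction that grows with $\sigma$, which forces $t/\sigma^2$ to grow at least linearly in $\sigma$, i.e.\ $t$ cubic in $\sigma$. Under the hypothesis one has $(2t-1)/(2\sigma^2) \geq 6\sigma$ while $\log((t-1)/\sigma) + C_0 = O(\log\sigma)$, and elementary calculus confirms $6\sigma > \log((t-1)/\sigma) + C_0 + o(1)$ for every $\sigma \geq 1$, giving a strictly positive gap and therefore a violation of the MGF bound at $\lambda^{*}$. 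A minor sanity check is that dropping the contributions from $Z' = -L$ and $Z' = 0$ in the MGF is safe, since both are nonnegative, so the one-term lower bound cannot be cancelled by what was discarded.
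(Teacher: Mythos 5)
Your decomposition is essentially the paper's: you take the three-point privacy loss, drop all but the heavy positive atom $\{Z'=L\}$ in the MGF, bound $a$ from below and $L$ from below by the same Gaussian tail estimates, and then show the resulting lower bound on $\ex{}{e^{\lambda(Z'-\mu')}}$ exceeds $e^{2\lambda^2/\sigma^2}$ for a specific $\lambda$. Your optimized $\lambda^* = (L-\mu')\sigma^2/4 \approx t/2$ is the same $\lambda$ the paper picks directly, so the core computation is identical; the repackaging as ``$-\log a < (L-\mu')^2\sigma^2/8$'' is a clean way to see what is being proved.

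There is, however, a genuine gap in the last step. You claim that under $t\geq 6\sigma^3+1$ one has $\log((t-1)/\sigma) + C_0 = O(\log\sigma)$, and then compare this to $(2t-1)/(2\sigma^2) \geq 6\sigma$. But the hypothesis is only a \emph{lower} bound on $t$; nothing prevents $t$ from being, say, $e^{\sigma^{100}}$, in which case $\log((t-1)/\sigma)$ is certainly not $O(\log\sigma)$. Your final conclusion (that the gap is positive) is still true, but the stated justification fails. The missing ingredient is a monotonicity observation: the quantity $(2t-1)/(2\sigma^2) - \log((t-1)/\sigma)$ is increasing in $t$ once $t-1 > \sigma^2$ (derivative $1/\sigma^2 - 1/(t-1) > 0$), so it suffices to verify positivity at the boundary $t = 6\sigma^3+1$, where your $O(\log\sigma)$ estimate does hold. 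The paper's proof makes exactly this move explicitly — it notes that $\exp((t-1)/2\sigma^2)/2t$ is monotone increasing for $t\geq 2\sigma^2$ and then checks the threshold $t = 6\sigma^3+1$. Without that step, your argument as written does not handle arbitrarily large $t$.

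A small cosmetic note: the change-of-variables trick $p = \int_{t-1}^\infty \cdots = \int_{t+1}^\infty e^{(4u-4)/2\sigma^2}\cdots$ gives $L = \log(a/b) \geq 2t/\sigma^2$ exactly, with no $O(1/\sigma^4)$ error term, which slightly simplifies your bookkeeping.
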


\begin{proof}
For each $x \in \{-1, 1\}$, let
\begin{align*}
p &= \pr{}{M'(x) = x} = \pr{}{\mathcal{N}(0, \sigma^2) > t - 1}, \\
q &= \pr{}{M'(x) = -x} = \pr{}{\mathcal{N}(0, \sigma^2) > t + 1}.
\end{align*}
Note that $p>q$. 
Hence, for each $x \in \{-1, 1\}$,
\[\pr{}{M'(x) = 0} = \pr{}{\mathcal{N}(0, \sigma^2) \in [-t-1, t-1]} = 1 - p - q.\]
Let $f(y) = \log(\pr{}{M(1) = y} / \pr{}{M(-1) = y})$, and observe that
\[f(1) = \log \frac{p}{q}, \qquad f(-1) = \log \frac{q}{p}, \qquad f(0) = 0.\]
Now consider the privacy loss random variable $Z = \privloss{M(1)}{M(-1)} = f(M(1))$. Then $Z$ is distributed according to
\[Z = \begin{cases}
\log \frac{p}{q} & \text{ w.p. } p \\
\log \frac{q}{p} & \text{ w.p. } q \\
0 & \text{ w.p. } 1-p-q
\end{cases}.\]
This gives $\ex{}{Z} = (p-q)\log(p/q) \geq 0$. For $\lambda \in \mathbb{R}$, we have
$$\ex{}{e^{\lambda(Z - \ex{}{Z})}} = \left( p \left(\frac{p}{q} \right)^\lambda + q \left(\frac{q}{p} \right)^\lambda + 1 - p - q \right) \cdot \left( \frac{p}{q} \right)^{-\lambda(p-q)}.$$

If $M'$ were to satisfy $(2/\sigma^2,2/\sigma)$-\mcdp{}, we would have $\ex{}{e^{\lambda(Z - \ex{}{Z})}} \leq e^{2\lambda^2/\sigma^2}$ for all $\lambda > 0$. We will show that this does not hold for any setting of parameters $\sigma \ge 1$ and $t \ge 6\sigma^3 + 1$, which shows that \mcdp{} is not closed under postprocessing.\footnote{For specific settings of parameters, this can be verified numerically. (For example, with the values $\sigma=1$, $t=3$, and $\lambda=2$.)}

\begin{lem} \label{lem:p-q-ineq}
The values $p, q$ satisfy the following inequalities:
\begin{enumerate}
\item $\sqrt{\frac{1}{2\pi}}\cdot \frac{\sigma}{t + \sigma -1} \cdot e^{-(t-1)^2/2\sigma^2} \le p \le \sqrt{\frac{1}{2\pi}} \cdot \frac{\sigma}{t-1} \cdot e^{-(t-1)^2/2\sigma^2}$
\item $\frac{p}{q} \ge e^{2t/\sigma^2}$
\end{enumerate}
\end{lem}

\begin{proof} 
We have \cite[Equation (5)]{Cook09}
$$\frac{\sqrt{\frac{2}{\pi}}e^{-x^2/2} \cdot x}{x^2+1} \leq \pr{}{\mathcal{N}(0,1)>x} \leq \frac{\sqrt{\frac{2}{\pi}}e^{-x^2/2}}{x} $$ for all $x \geq 0$. Thus $$\frac{\sqrt{\frac{2}{\pi}}e^{-(t-1)^2/2\sigma^2} \cdot (t-1)}{\sigma \cdot ((t-1)^2/\sigma^2 + 1)} \leq p=\pr{}{\mathcal{N}(0,1)>\frac{t-1}{\sigma}} \leq \frac{\sqrt{\frac{2}{\pi}}e^{-(t-1)^2/2\sigma^2} \cdot \sigma}{t-1} .$$
The first part now follows from the fact that $\sigma \le t-1$.

To establish the second inequality, we write 
\begin{align*}
p &= \frac{1}{\sqrt{2\pi\sigma^2}} \int_{t-1}^\infty e^{-x^2 / 2\sigma^2} \ dx \\
&= \frac{1}{\sqrt{2\pi\sigma^2}} \int_{t+1}^\infty e^{-(u-2)^2 / 2\sigma^2} \ du \\
&= \frac{1}{\sqrt{2\pi\sigma^2}} \int_{t+1}^\infty e^{(4u-4)/2\sigma^2} \cdot e^{-u^2 / 2\sigma^2} \ du \\
&\ge e^{2t/\sigma^2} \frac{1}{\sqrt{2\pi\sigma^2}} \int_{t+1}^\infty e^{-u^2 / 2\sigma^2} \ du \\
&= e^{2t/\sigma^2} \cdot q.
\end{align*}
\end{proof}

By Lemma \ref{lem:p-q-ineq},
\begin{align*}
\ex{}{e^{\lambda(Z - \ex{}{Z})}} &= \left( p \left(\frac{p}{q} \right)^\lambda + q \left(\frac{q}{p} \right)^\lambda + 1 - p - q \right) \cdot \left( \frac{p}{q} \right)^{-\lambda(p-q)}\\
&\ge p\left(\frac{p}{q} \right)^{\lambda(1 -(p - q))} \\
&\ge \sqrt{\frac{1}{2\pi}} \cdot \frac{\sigma}{t + \sigma - 1} \cdot e^{-(t-1)^2 / 2\sigma^2} \cdot (e^{2t/\sigma^2})^{\lambda(1 - p)} .
\end{align*}
Now set $\lambda = \frac{t}{2}$. Then this quantity becomes
\begin{equation} \label{eqn:fts}
\sqrt{\frac{1}{2\pi}} \cdot \frac{\sigma}{t + \sigma - 1} \cdot e^{-(t-1)^2 / 2\sigma^2} \cdot e^{t^2(1-p)/\sigma^2} =   e^{t^2/2\sigma^2} \cdot \sqrt{\frac{1}{2\pi}} \cdot \frac{\sigma}{t + \sigma - 1} \cdot \exp\left(\frac{t}{\sigma^2} - \frac{pt^2}{\sigma^2} - \frac{1}{2\sigma^2}\right).
\end{equation}
We now examine the term
\begin{align*}
pt^2 &\le \sqrt{\frac{1}{2\pi}}\frac{\sigma t^2e^{-(t-1)^2/2\sigma^2}}{t-1} & \text{by Lemma \ref{lem:p-q-ineq}} \\
&\le \sqrt{\frac{1}{4\pi}}\frac{t^{5/2} e^{-(t-1)}}{t-1} & \text{for } t \ge 2\sigma^2 + 1 \\
&\le \frac12 & \text{for } t \ge 3.
\end{align*}
We may thus bound (\ref{eqn:fts}) from below by
\begin{align*}
e^{t^2/2\sigma^2} \cdot \sqrt{\frac{1}{2\pi}} \cdot \frac{\sigma}{t + \sigma - 1} \cdot \exp \left(\frac{t - 1}{2\sigma^2} \right) &\ge e^{t^2/2\sigma^2} \cdot \sqrt{\frac{1}{2\pi}} \cdot \frac{1}{2t} \cdot \exp \left(\frac{t - 1}{2\sigma^2} \right) \\
\end{align*}
The expression $\exp((t-1)/2\sigma^2)/2t$ is monotone increasing for $t \ge 2\sigma^2$. Thus, it is strictly larger than $\sqrt{2\pi}$ as long as $t \ge 6\sigma^3 + 1$.
\end{proof}

\section{Miscellaneous Proofs and Lemmata}

\begin{lem} \label{lem:HyperTrigIneq}
$$0 \leq y < x \leq 2 \implies \frac{\sinh(x)-\sinh(y)}{\sinh(x-y)} \leq e^{\frac12 xy}.$$
\end{lem}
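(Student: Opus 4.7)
The plan is to reduce the claim to the monotonicity of a single auxiliary function. First I would apply the sum-to-product identity
$$\sinh(x)-\sinh(y)=2\cosh\!\left(\tfrac{x+y}{2}\right)\sinh\!\left(\tfrac{x-y}{2}\right)$$
together with the doubling identity $\sinh(x-y)=2\sinh\!\left(\tfrac{x-y}{2}\right)\cosh\!\left(\tfrac{x-y}{2}\right)$. The factor $\sinh\!\left(\tfrac{x-y}{2}\right)$ cancels (it is nonzero since $x>y$), so the left-hand side simplifies to
$$\frac{\sinh(x)-\sinh(y)}{\sinh(x-y)}=\frac{\cosh\!\left(\tfrac{x+y}{2}\right)}{\cosh\!\left(\tfrac{x-y}{2}\right)}.$$

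Next I would substitute $u=\tfrac{x+y}{2}$ and $v=\tfrac{x-y}{2}$, so that $u\ge v\ge 0$ (with $u>v$ since $x>y$) and $\tfrac{1}{2}xy=\tfrac{1}{2}(u^2-v^2)$. The inequality then becomes
$$\cosh(u)\,e^{-u^2/2}\;\le\;\cosh(v)\,e^{-v^2/2},$$
so it suffices to prove that the function $g(t)=\cosh(t)\,e^{-t^2/2}$ is non-increasing on $[0,\infty)$.

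Finally, I would verify this monotonicity by a direct derivative computation:
$$g'(t)=\bigl(\sinh(t)-t\cosh(t)\bigr)\,e^{-t^2/2}.$$
Setting $h(t)=\sinh(t)-t\cosh(t)$, we have $h(0)=0$ and $h'(t)=-t\sinh(t)<0$ for $t>0$, hence $h(t)<0$ and $g'(t)<0$ on $(0,\infty)$. Therefore $g$ is strictly decreasing on $[0,\infty)$ and $g(u)\le g(v)$, which is the desired inequality.

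I do not expect any genuine obstacle here; the whole argument is driven by the trigonometric identity that collapses the ratio to $\cosh(u)/\cosh(v)$ and by the elementary monotonicity of $g$. In fact, this approach establishes the inequality for all $0\le y<x$ without needing the hypothesis $x\le 2$ at all, so the stated range is not tight. If one wanted to use only the restricted range (for instance to avoid mentioning that the bound is global), one could instead expand both sides as power series in $y$ around $y=0$ and compare term by term, but the monotonicity argument above is cleaner.
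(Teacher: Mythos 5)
Your proof is correct, and it takes a genuinely different route from the one in the paper. Both arguments begin with the same sum-to-product simplification (the paper arrives at the equivalent form $\frac{1+t}{1-t}$ with $t=\tanh(x/2)\tanh(y/2)$, while you keep it as $\cosh\bigl(\tfrac{x+y}{2}\bigr)/\cosh\bigl(\tfrac{x-y}{2}\bigr)$), but they diverge afterward. The paper reduces the claim to the inequality $\tanh\bigl(\tfrac{x}{2}\bigr)\tanh\bigl(\tfrac{y}{2}\bigr)\le\tanh\bigl(\tfrac{xy}{4}\bigr)$, which it proves by a derivative comparison in $y$; that step uses $x\le 2$ in an essential way (to get $xy/4\le y/2$ and hence $\cosh(y/2)\ge\cosh(xy/4)$). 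Your substitution $u=\tfrac{x+y}{2}$, $v=\tfrac{x-y}{2}$ instead collapses the whole statement to the monotonicity of the single function $g(t)=\cosh(t)e^{-t^2/2}$, verified directly via $g'(t)=(\sinh t - t\cosh t)e^{-t^2/2}$ and $\tanh t<t$. This is both shorter and stronger: as you note, it shows the inequality holds for all $0\le y<x$ with no upper bound on $x$, so the hypothesis $x\le 2$ in the lemma is an artifact of the paper's particular proof rather than a genuine constraint. The only cost is that your argument is perhaps less visibly tied to the paper's hyperbolic-tangent bookkeeping, but that bookkeeping is not used elsewhere, so nothing is lost.
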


This technical lemma may be ``verified'' numerically by inspecting a plot of $z=e^{\frac12 xy} \cdot \sinh(x-y) - (\sinh(x)-\sinh(y))$ for $(x,y) \in [0,2]^2$.

For intuition, consider the third-order Taylor approximation to $\sinh(x)$ about $0$:
\[\sinh(x) = x + \frac{1}{6}x^3 \pm O(x^5).\]
Then we can approximate
\begin{align*}
\frac{\sinh(x) - \sinh(y)}{\sinh(x - y)} &\approx \dfrac{x + \frac{1}{6}x^3 - (y + \frac{1}{6}y^3)}{(x-y) + \frac{1}{6}(x-y)^3} \\
&= 1 + \frac{3xy(x-y)}{6(x-y) + (x-y)^3} \\
&\le 1 + \frac{1}{2}xy \\
&\le e^{\frac{1}{2}xy}.
\end{align*}
Unfortunately, turning this intuition into an actual proof is quite involved. We instead provide a proof from \cite{TrigMathSE}:

\begin{proof}
We need the following hyperbolic trigonometric identities.
\begin{align*}
\cosh\left(w+z\right) =& \cosh(w)\cosh(z) + \sinh(w)\sinh(z)\\
=& \cosh(w)\cosh(z) \left( 1 + \tanh(w)\tanh(z) \right),\\
\sinh(x-y) =& \frac12 \left( e^{x-y} + 1 - 1 - e^{-(x-y)} \right)\\
=& \frac12 \left( e^{(x-y)/2} - e^{-(x-y)/2} \right) \left( e^{(x-y)/2} + e^{-(x-y)/2} \right)\\
=& 2 \sinh\left(\frac{x-y}{2}\right) \cosh\left(\frac{x-y}{2}\right) \\
=& 2 \sinh\left(\frac{x-y}{2}\right) \cosh\left(\frac{x}{2}\right)\cosh\left(\frac{-y}{2}\right) \left( 1 + \tanh\left(\frac{x}{2}\right)\tanh\left(\frac{-y}{2}\right)  \right)\\
=& 2 \sinh\left(\frac{x-y}{2}\right) \cosh\left(\frac{x}{2}\right)\cosh\left(\frac{y}{2}\right) \left( 1 - \tanh\left(\frac{x}{2}\right)\tanh\left(\frac{y}{2}\right)  \right), \\
\sinh(x)-\sinh(y) =& \frac12 \left( e^x - e^{-x} - e^y + e^{-y} \right)\\
=& \frac12 \left( e^{(x-y)/2} - e^{-(x-y)/2} \right) \left( e^{(x+y)/2} + e^{-(x+y)/2} \right)\\
=& 2 \sinh\left(\frac{x-y}{2}\right) \cosh\left(\frac{x+y}{2}\right) \\
=& 2 \sinh\left(\frac{x-y}{2}\right) \cosh\left(\frac{x}{2}\right)\cosh\left(\frac{y}{2}\right) \left( 1 + \tanh\left(\frac{x}{2}\right)\tanh\left(\frac{y}{2}\right)  \right). \\
\end{align*}
We also use the fact that $0 \leq \tanh(z) < \min\{z,1\}$ for all $z > 0$.
Thus $$ \frac{\sinh(x)-\sinh(y)}{\sinh(x-y)} = \frac{1 + \tanh\left(\frac{x}{2}\right)\tanh\left(\frac{y}{2}\right) }{1 - \tanh\left(\frac{x}{2}\right)\tanh\left(\frac{y}{2}\right) } = \frac{1+t}{1-t},$$ where $t=\tanh\left(\frac{x}{2}\right)\tanh\left(\frac{y}{2}\right) < 1$. Now $$\frac{1+t}{1-t} \leq e^{xy/2} \iff 1+t \leq (1-t)e^{xy/2} 
\iff (e^{xy/2} + 1) t \leq e^{xy/2}-1$$ $$
\iff t \leq \frac{e^{xy/2}-1}{e^{xy/2}+1} = \frac{e^{xy/4}-e^{-xy/4}}{e^{xy/4}+e^{-xy/4}} = \tanh\left(\frac{xy}{4}\right).$$
So it only remains to show that $ \tanh\left(\frac{x}{2}\right)\tanh\left(\frac{y}{2}\right) \leq \tanh\left(\frac{xy}{4}\right)$. This is clearly true when $y=0$. Now $$\frac{\partial}{\partial y} \tanh\left(\frac{x}{2}\right) \tanh\left(\frac{y}{2}\right) = \tanh\left(\frac{x}{2}\right) \cosh^{-2}\left(\frac{y}{2}\right) \frac12 \leq  \cosh^{-2}\left(\frac{y}{2}\right) \frac{x}{4}$$
and
$$\frac{\partial}{\partial y} \tanh\left(\frac{xy}{4}\right) = \cosh^{-2}\left(\frac{xy}{4}\right) \frac{x}{4}.$$
Since $0 \leq y < x \leq 2$, we have $0 \leq xy/4 \leq y/2$ and, hence, $\cosh(y/2) \geq \cosh(xy/4)$. Thus $$\frac{\partial}{\partial y} \tanh\left(\frac{x}{2}\right) \tanh\left(\frac{y}{2}\right) \leq \frac{\partial}{\partial y} \tanh\left(\frac{xy}{4}\right) .$$
The inequality now follows by integration of the inequality on the derivatives.
\end{proof}

The following lemma immediately implies Lemma \ref{lem:CDPtoDP2} and is also used to prove Lemma \ref{lem:ApproxGDPtoDP}.

\begin{lem} \label{lem:RenyiToED}
Let $P$ and $Q$ be probability distributions on $\mathcal{Y}$ with $\dr{\alpha}{P}{Q} \leq \con + \lin \cdot \alpha$ for all $\alpha \in (1,\infty)$. Then, for any $\varepsilon\geq\con+\lin$ and $$\delta = e^{-(\varepsilon-\con-\lin)^2/4\lin} \cdot \min \left\{ \begin{array}{l} 1 \\ \sqrt{\pi \cdot \lin} \\ \frac{1}{1+(\varepsilon-\con-\lin)/2\lin} \\ \frac{2}{1+\frac{\varepsilon-\con-\lin}{2\lin} + \sqrt{\left(1+\frac{\varepsilon-\con-\lin}{2\lin}\right)^2+\frac{4}{\pi \lin}}} \end{array} \right.,$$
we have $$P(S) \leq e^\varepsilon Q(S) + \delta$$ for all (measurable) $S$.
\end{lem}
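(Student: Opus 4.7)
The plan is to reduce the desired $(\varepsilon,\delta)$-DP guarantee to an explicit bound on a tail integral of the privacy loss random variable, and then apply Markov's inequality using the R\'enyi divergence hypothesis. Let $Z(y) = \log(P(y)/Q(y))$ be the privacy loss. First I would observe that $\sup_S(P(S)-e^\varepsilon Q(S))$ is attained by $S^* = \{Z > \varepsilon\}$, so the worst-case slack is
\[\delta^* = \ex{y\sim P}{\left(1 - e^{\varepsilon - Z(y)}\right)_+}.\]
The integrand is $0$ at $z=\varepsilon$ and has derivative $e^{\varepsilon-z}$ for $z > \varepsilon$, so by a layer-cake / Fubini argument,
\[\delta^* = \int_\varepsilon^\infty e^{\varepsilon - z}\, \pr{y\sim P}{Z > z}\, dz.\]
This is the key identity that all four bounds will feed off of.

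Next, the hypothesis $\dr{\alpha}{P}{Q} \leq \con + \lin\alpha$ is equivalent to $\ex{y\sim P}{e^{(\alpha-1)Z}} \leq e^{(\alpha-1)(\con+\lin\alpha)}$, so Markov gives $\pr{y\sim P}{Z > z} \leq e^{(\alpha-1)(\con+\lin\alpha - z)}$ for any $\alpha > 1$. Optimizing $\alpha$ pointwise at $\alpha(z) = (z-\con+\lin)/(2\lin)$ yields the Gaussian-style tail $\pr{y\sim P}{Z > z} \leq e^{-(z-\con-\lin)^2/(4\lin)}$ for $z \geq \con+\lin$.

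The four bounds then arise from different ways of using these ingredients. For the factor $1$, just bound $(1 - e^{\varepsilon-Z})_+ \leq \mathbb{I}[Z>\varepsilon]$, giving $\delta^* \leq \pr{y\sim P}{Z>\varepsilon}$, and apply the pointwise-optimized Markov bound at $z = \varepsilon$. For the factor $\sqrt{\pi\lin}$, substitute the pointwise Gaussian tail bound into the integral, then complete the square via $u = (z-\con+\lin)/(2\sqrt{\lin})$ so that the integrand becomes $e^{\varepsilon-\con}\cdot e^{-u^2}$ times $2\sqrt{\lin}\,du$; the identity $(\varepsilon-\con+\lin)^2/(4\lin) = (\varepsilon-\con-\lin)^2/(4\lin) + (\varepsilon-\con)$ forces the $e^{\varepsilon-\con}$ factor to cancel, after which the crude Gaussian tail estimate $\int_a^\infty e^{-u^2}\,du \leq (\sqrt{\pi}/2) e^{-a^2}$ delivers the claim. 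For the factor $1/(1+(\varepsilon-\con-\lin)/(2\lin))$, use a \emph{fixed} $\alpha = \alpha^* := (\varepsilon-\con+\lin)/(2\lin)$ in the Markov step, so the integrand in $z$ becomes a pure exponential $e^{-\alpha z + \mathrm{const}}$; the integral evaluates exactly to $e^{(\alpha-1)(\con+\lin\alpha-\varepsilon)}/\alpha = e^{-(\varepsilon-\con-\lin)^2/(4\lin)}/\alpha^*$. The fourth and tightest bound uses the same completion-of-square as the second, but substitutes the sharp Birnbaum/Mills-ratio inequality $\int_a^\infty e^{-u^2}\,du \leq e^{-a^2}/(a + \sqrt{a^2 + 4/\pi})$ in place of the naive $(\sqrt{\pi}/2) e^{-a^2}$; plugging $a = \alpha^*\sqrt{\lin}$ and pulling $\sqrt{\lin}$ out of the square root produces exactly the $2/(\alpha^* + \sqrt{\alpha^{*2} + 4/(\pi\lin)})$ form in the statement.

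The bookkeeping in bounds 2 and 4 is the main obstacle: the Gaussian completion produces a spurious $e^{\varepsilon-\con}$ prefactor whose cancellation hinges on the identity relating $(\varepsilon-\con+\lin)^2$ and $(\varepsilon-\con-\lin)^2$, and getting the precise constants in bound 4 requires invoking (or proving inline via integration by parts with the multiplier $1/(u+\sqrt{u^2+4/\pi})$) the Birnbaum-style Gaussian tail bound rather than a slicker but looser Mills-ratio estimate. Once these inequalities are in hand, everything combines mechanically into the minimum of four expressions displayed in the lemma.
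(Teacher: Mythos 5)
Your proposal is correct and follows essentially the same strategy as the paper's proof: derive the exact expression $\delta^* = \ex{y \sim P}{(1 - e^{\varepsilon - Z(y)})_+} = \int_\varepsilon^\infty e^{\varepsilon - z}\pr{}{Z>z}\,\mathrm{d}z$, apply Markov's inequality to the MGF bound implied by R\'enyi divergence to get the Gaussian-style tail $\pr{}{Z>z}\leq e^{-(z-\con-\lin)^2/4\lin}$, then extract the four terms in the minimum by increasingly sharp estimates of the resulting integral. Your starting point (optimal $S^* = \{Z>\varepsilon\}$) is a cosmetically cleaner route to the same $\delta^*$ that the paper derives by splitting $P(S)$ on the event $f(Y)\leq\varepsilon$ and dropping a nonnegative term; and your completion-of-square, cancellation of the $e^{\varepsilon-\con}$ prefactor via $(\varepsilon-\con+\lin)^2/(4\lin)=(\varepsilon-\con-\lin)^2/(4\lin)+(\varepsilon-\con)$, and use of the Birnbaum-type Gaussian tail bound for the fourth term all match the paper.

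The one place where you genuinely deviate is the third bound: you fix $\alpha=\alpha^*=(\varepsilon-\con+\lin)/2\lin$ in the Markov step, so the integrand $e^{\varepsilon-z}\cdot e^{(\alpha^*-1)(\con+\lin\alpha^*-z)}$ is a pure exponential $e^{\varepsilon}\,e^{(\alpha^*-1)(\con+\lin\alpha^*)}\,e^{-\alpha^* z}$ that integrates exactly to $e^{-(\varepsilon-\con-\lin)^2/4\lin}/\alpha^*$. The paper instead writes $\delta\leq\sqrt{2\lin}\,e^{-(\varepsilon-\con-\lin)^2/4\lin}\,h\bigl((\varepsilon-\con+\lin)/\sqrt{2\lin}\bigr)$ with $h(x)=\pr{}{\mathcal{N}(0,1)>x}\sqrt{2\pi}e^{x^2/2}$ and invokes the Mills-ratio estimate $h(x)\leq 1/x$. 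Both deliver the same $1/(1+(\varepsilon-\con-\lin)/2\lin)$, but your route avoids any Gaussian tail inequality for this term — arguably the more elementary derivation — at the cost of breaking the paper's unified ``substitute three different bounds on $h$'' presentation for terms 2, 3, and 4.
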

\begin{proof}
Define $f : \mathcal{Y} \to \mathbb{R}$ by $f(y) = \log(P(y)/Q(y))$. Let $Y\sim P$, $Y' \sim Q$ and let $Z=f(Y)$ be the privacy loss random variable. That is, $Z=\privloss{P}{Q}$.
For any measurable $S \subset \mathcal{Y}$, 
\begin{align*}
P(S) =& \pr{}{Y \in S}\\
=& \pr{}{Y \in S \wedge f(Y) \leq \varepsilon} +  \pr{}{Y \in S \wedge f(Y) > \varepsilon} \\
=& \int_{S} P(y) \mathbb{I}[f(y) \leq \varepsilon] \mathrm{d}y +  \pr{}{Y \in S \wedge f(Y) > \varepsilon} \\
=& \int_{S} P(y) \mathbb{I}[P(y) \leq e^\varepsilon Q(y)] \mathrm{d}y +  \pr{}{Y \in S \wedge f(Y) > \varepsilon} \\
\leq& \int_{S} e^\varepsilon Q(y) \mathbb{I}[f(y) \leq \varepsilon] \mathrm{d}y +  \pr{}{Y \in S \wedge f(Y) > \varepsilon} \\
=& e^\varepsilon \pr{}{Y' \in S \wedge f(Y') \leq \varepsilon} +  \pr{}{Y \in S \wedge f(Y) > \varepsilon} \\
=& e^\varepsilon \pr{}{Y' \in S} - e^\varepsilon \pr{}{Y' \in S \wedge f(Y') > \varepsilon} +  \pr{}{Y \in S \wedge f(Y) > \varepsilon} \\
\leq& e^\varepsilon \pr{}{Y' \in S} + \left( \pr{}{f(Y) > \varepsilon} - e^\varepsilon \pr{}{f(Y') > \varepsilon}  \right).
\end{align*}
Thus we want to bound 
\begin{align*}
\delta =&  \pr{}{f(Y) > \varepsilon} - e^\varepsilon \pr{}{f(Y') > \varepsilon}\\
=& \ex{Y}{\mathbb{I}[f(Y)>\varepsilon]} - \ex{Y'}{e^\varepsilon \mathbb{I}[f(Y')>\varepsilon]}\\
=& \ex{Y}{\mathbb{I}[f(Y)>\varepsilon]} - \int_{\mathcal{Y}} e^\varepsilon \mathbb{I}[f(y)>\varepsilon] Q(y) \mathrm{d}y\\
=& \ex{Y}{\mathbb{I}[f(Y)>\varepsilon]} - \int_{\mathcal{Y}} e^\varepsilon \mathbb{I}[f(y)>\varepsilon] \frac{Q(y)}{P(y)} P(y) \mathrm{d}y\\
=& \ex{Y}{\mathbb{I}[f(Y)>\varepsilon]} - \ex{Y}{e^\varepsilon \mathbb{I}[f(Y)>\varepsilon] e^{-f(Y)}}\\
=& \ex{Z}{\mathbb{I}[Z>\varepsilon] \left( 1 - e^{\varepsilon-Z} \right)}\\
=& \ex{Z}{\max \left\{ 0, 1 - e^{\varepsilon-Z} \right\}}\\
=& \int_\varepsilon^\infty \left( 1 - e^{\varepsilon-z} \right) \pr{}{Z=z} \mathrm{d}z.
\end{align*}
In particular, $$\delta \leq \ex{Z}{\mathbb{I}[Z>\varepsilon]} = \pr{}{Z>\varepsilon}.$$
Alternatively, by integration by parts, $$ \delta = \int_\varepsilon^\infty e^{\varepsilon-z} \pr{}{Z>z} \mathrm{d}z. $$
Now it remains to bound $\pr{}{Z>z}$.

As in Lemma \ref{lem:CDPtoDP}, by Markov's inequality, for all $\alpha>1$ and $\lambda>\con+\lin$, $$\pr{}{Z>\lambda} \leq \frac{\ex{}{e^{(\alpha-1) Z}}}{e^{(\alpha-1)\lambda}} \leq \frac{e^{(\alpha-1)\dr{\alpha}{M(x)}{M(x')}}}{e^{(\alpha-1)\lambda}} \leq e^{(\alpha-1)(\con+\lin\alpha-\lambda)}.$$ Choosing $\alpha = (\lambda - \con + \lin)/2\lin > 1$ gives $$\pr{}{Z>\lambda} \leq e^{-(\lambda - \con - \lin)^2/4\lin}.$$
Thus $\delta \leq \pr{}{Z>\varepsilon} \leq e^{-(\varepsilon-\con-\lin)^2/4\lin}$. Furthermore,
\begin{align*}
\delta =& \int_\varepsilon^\infty e^{\varepsilon-z} \pr{}{Z>z} \mathrm{d}z\\
\leq& \int_\varepsilon^\infty e^{\varepsilon-z} e^{-(z - \con - \lin)^2/4\lin} \mathrm{d}z\\
=& \int_\varepsilon^\infty e^{ -(z - \con + \lin)^2/4\lin - \con + \varepsilon} \mathrm{d}z\\
=& \frac{e^{\varepsilon-\con}\cdot\sqrt{2\pi \cdot 2\lin}}{\sqrt{2\pi \cdot 2\lin}}\int_\varepsilon^\infty e^{ -(z - \con + \lin)^2/4\lin} \mathrm{d}z\\
=& {e^{\varepsilon-\con}\cdot\sqrt{2\pi \cdot 2\lin}} \cdot \pr{}{\mathcal{N}(\con-\lin,2\lin) > \varepsilon}\\
=& {e^{\varepsilon-\con}\cdot2\sqrt{\pi \cdot \lin}} \cdot \pr{}{\mathcal{N}(0,1) > \frac{\varepsilon+\lin-\con}{\sqrt{2\lin}}}.
\end{align*}
Define $h(x) = \pr{}{\mathcal{N}(0,1) > x} \cdot \sqrt{2\pi} \cdot e^{x^2/2}$. Then $$\delta \leq {e^{\varepsilon-\con}\cdot2\sqrt{\pi \cdot \lin}} \cdot \frac{h\left(\frac{\varepsilon+\lin-\con}{\sqrt{2\lin}}\right)}{\sqrt{2\pi} \cdot e^{\left(\frac{\varepsilon+\lin-\con}{\sqrt{2\lin}}\right)^2/2}} = \sqrt{2\lin} \cdot e^{-(\varepsilon-\con-\lin)^2/4\lin} \cdot h\left(\sqrt{2\lin} + \frac{\varepsilon-\con-\lin}{\sqrt{2\lin}}\right).$$
Now we can substitute upper bounds on $h$ to obtain the desired results.

First we use $\pr{}{\mathcal{N}(0,1)>x} \leq \frac12 e^{-x^2/2}$, which holds for all $x \geq 0$. That is, $h(x)\leq\sqrt{\pi/2}$, whence
$$\delta \leq \sqrt{2\lin} \cdot e^{-(\varepsilon-\con-\lin)^2/4\lin} \cdot \sqrt{\frac{\pi}{2}}.$$
This rearranges to $$\varepsilon \leq \con+\lin+\sqrt{4\lin\cdot\log(\sqrt{\pi \cdot \lin} / \delta)}. $$

Alternatively, we can use $\pr{}{\mathcal{N}(0,1)>x} \leq e^{-x^2/2} / \sqrt{2\pi} x$, which holds for all $x > 0$. That is, $h(x)\leq 1/x$ and
$$\delta \leq \sqrt{2\lin} \cdot e^{-(\varepsilon-\con-\lin)^2/4\lin} \cdot h\left( \frac{\varepsilon-\con+\lin}{\sqrt{2\lin}}\right) = \frac{2\lin}{\varepsilon+\lin-\con} \cdot e^{-(\varepsilon-\con-\lin)^2/4\lin} =  \frac{e^{-(\varepsilon-\con-\lin)^2/4\lin}}{1+(\varepsilon-\con-\lin)/2\lin} \cdot .$$

Finally, we can use $\pr{}{\mathcal{N}(0,1)>x} \leq e^{-x^2/2} \cdot 2 /(\sqrt{x^2+8/\pi}+x)\sqrt{2\pi}$, which holds for all $x\geq0$ \cite[Equation (4)]{Duembgen10}\cite[Equation (7)]{Cook09}\cite[Equation 7.1.13]{AbramowitzS64}. 
That is, $h(x) \leq 2/(\sqrt{x^2+8/\pi}+x)$ and $$\delta \leq  \sqrt{2\lin} \cdot e^{-(\varepsilon-\con-\lin)^2/4\lin} \cdot h\left(\frac{\varepsilon-\con+\lin}{\sqrt{2\lin}}\right) = \frac{2\cdot e^{-(\varepsilon-\con-\lin)^2/4\lin}}{1+\frac{\varepsilon-\con-\lin}{2\lin} + \sqrt{\left(1+\frac{\varepsilon-\con-\lin}{2\lin}\right)^2+\frac{4}{\pi \lin}}}.$$
\end{proof}

\begin{lem}[Restating Lemma \ref{lem:DPtoCDP}] \label{lem:DPtoCDP-app}
Let $M : \mathcal{X}^n \to \mathcal{Y}$ satisfy $(\varepsilon,\delta)$-DP for all $\delta>0$ and \begin{equation} \varepsilon = \hat \con  +\sqrt{\hat \lin\log(1/\delta)}\label{eqn:DP}\end{equation} for some constants $\hat \con, \hat \lin \in [0,1]$. Then $M$ is $\left(\hat \con-\frac{1}{4} \hat \lin+5\sqrt[4]{\hat \lin}, \frac{1}{4} \hat \lin\right)$-\IIP{}.
\end{lem}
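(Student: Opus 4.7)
The plan is to convert the family of $(\varepsilon,\delta)$-DP guarantees into a subgaussian-like tail bound on the privacy loss random variable $Z=\privloss{M(x)}{M(x')}$, and from that tail bound recover the MGF bound $\ex{}{e^{(\alpha-1)Z}} \leq e^{(\alpha-1)(\xi + \lin \alpha)}$ that defines $(\xi,\lin)$-\IIP{}.

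First I would establish the tail bound. Fix $c>0$ and consider the set $S = \{y : \pr{}{M(x)=y} > e^{\varepsilon+c} \pr{}{M(x')=y}\}$. On $S$ we have $\pr{}{M(x') \in S} \leq e^{-\varepsilon-c} \pr{}{M(x) \in S}$, and the $(\varepsilon,\delta)$-DP inequality against $S$ then yields $\pr{}{Z > \varepsilon+c} \leq \delta/(1-e^{-c})$. Substituting the hypothesis $\delta = e^{-(\varepsilon-\hat\con)^2/\hat\lin}$ (valid for every $\varepsilon \geq \hat\con$) and choosing $c$ to optimise --- or, more simply, taking $c$ to be a small constant like $\log 2$ --- gives a tail bound of the form $\pr{}{Z > \hat\con + s} \leq C \cdot e^{-(s - c)^2/\hat\lin}$ for an absolute constant $C$ and all $s \geq c$.

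Second, I would integrate this tail bound to control the MGF. Using the identity $\ex{}{e^{(\alpha-1)Z}} = 1 + (\alpha-1) \int_0^\infty e^{(\alpha-1)t} \pr{}{Z>t}\,\mathrm{d}t$, I split the integral at $t_0 = \hat\con + c$. The contribution from $t \leq t_0$ is at most $e^{(\alpha-1)t_0}$, while on $t>t_0$ I substitute the Gaussian-type tail bound, change variables to $u = t - t_0$, and complete the square in the exponent: $(\alpha-1)u - u^2/\hat\lin = -\tfrac{1}{\hat\lin}(u - (\alpha-1)\hat\lin/2)^2 + (\alpha-1)^2 \hat\lin/4$. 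This yields an upper bound of the shape
\[
\ex{}{e^{(\alpha-1)Z}} \leq e^{(\alpha-1)(\hat\con + c)} \Bigl(1 + C'(\alpha-1)\sqrt{\hat\lin}\, e^{(\alpha-1)^2 \hat\lin/4}\Bigr).
\]
Taking logarithms and dividing by $\alpha-1$, the target $\xi + \lin\alpha = \hat\con - \hat\lin/4 + 5\sqrt[4]{\hat\lin} + \hat\lin\alpha/4$ has the critical property that $\lin\alpha = (\alpha-1)\hat\lin/4 + \hat\lin/4$, so the exponent $(\alpha-1)^2\hat\lin/4$ from the Gaussian integral cancels exactly against the quadratic part of $\alpha(\alpha-1)\hat\lin/4$, leaving only logarithmic prefactors to dominate.

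The main obstacle is the final step: verifying that the residual inequality
\[
\frac{1}{\alpha-1}\log\bigl(1 + C'(\alpha-1)\sqrt{\hat\lin}\,e^{(\alpha-1)^2\hat\lin/4}\bigr) \;\leq\; 5\sqrt[4]{\hat\lin} - c - \hat\lin/4 + (\alpha-1)\hat\lin/4
\]
holds uniformly in $\alpha>1$ and $\hat\lin \in [0,1]$. I would handle this by splitting into two regimes: when the argument of the logarithm is $\leq 2$, apply $\log(1+x)\leq x$ and use the case assumption to bound things by a power of $\hat\lin^{1/4}$; when the argument exceeds $2$, use $\log(1+x)\leq \log(2x)$, in which case the $(\alpha-1)^2\hat\lin/4$ term inside the log is absorbed by the matching term on the right, and the remaining $\frac{1}{\alpha-1}\log((\alpha-1)\sqrt{\hat\lin})$ is maximised (over $\alpha$) at a quantity of order $\sqrt{\hat\lin}$. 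The constant $5$ arises from optimising these two regimes against the constant $c$ from the tail conversion; a looser constant is easier and suffices for the lemma as stated. The hypothesis $\hat\con,\hat\lin \in [0,1]$ ensures that $\sqrt[4]{\hat\lin}$ dominates $\sqrt{\hat\lin}$ uniformly, which is precisely what permits absorbing all the lower-order terms into the slack $5\sqrt[4]{\hat\lin}$.
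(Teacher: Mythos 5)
Your high-level plan is exactly the paper's: derive a Gaussian-type tail bound on $Z$ from the family of $(\varepsilon,\delta)$-DP guarantees, then integrate that tail against $e^{(\alpha-1)z}$, complete the square, and absorb the overhead into the $\xi$ parameter. The tail-bound derivation via the set $S$ is also the same as the paper's.

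There is, however, a genuine gap in the step where you fix the offset $c$. You write that ``taking $c$ to be a small constant like $\log 2$'' gives $\pr{}{Z>\hat\con+s}\leq C e^{-(s-c)^2/\hat\lin}$ with an \emph{absolute} constant $C$, and then that $\hat\lin\in[0,1]$ lets you absorb ``all the lower-order terms'' into $5\sqrt[4]{\hat\lin}$. Trace the overhead through: integrating the tail bound gives
\[
\ex{}{e^{(\alpha-1)Z}}\leq e^{(\alpha-1)(\hat\con+c)}\bigl(1+(\alpha-1)C\sqrt{\pi\hat\lin}\,e^{(\alpha-1)^2\hat\lin/4}\bigr)
\leq e^{(\alpha-1)\bigl(\hat\con+c+C\sqrt{\pi\hat\lin}-\hat\lin/4+\alpha\hat\lin/4\bigr)},
\]
using $1+xe^{y}\leq e^{x+y}$ for $x,y\geq 0$. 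So the residual additive term in $\xi$ is $c+C\sqrt{\pi\hat\lin}$, and you need this to be $\leq 5\sqrt[4]{\hat\lin}$ \emph{uniformly} in $\hat\lin\in(0,1]$. If $c=\log 2$, this fails as soon as $\hat\lin<(\log 2/5)^4\approx 3.7\times10^{-4}$: the constant $c$ does not decay as $\hat\lin\to 0$, while the slack $5\sqrt[4]{\hat\lin}$ does. The fix is to let $c$ itself shrink with $\hat\lin$, namely $c=\sqrt[4]{\hat\lin}$ (the paper's choice). Then $C=1/(1-e^{-\sqrt[4]{\hat\lin}})\leq 2/\sqrt[4]{\hat\lin}$ is \emph{not} absolute, but the total overhead becomes $\sqrt[4]{\hat\lin}+\tfrac{2}{\sqrt[4]{\hat\lin}}\sqrt{\pi\hat\lin}=(1+2\sqrt{\pi})\sqrt[4]{\hat\lin}<5\sqrt[4]{\hat\lin}$, which is exactly how the constant $5$ appears. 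So the ``more simply, constant $c$'' branch must be dropped; only the ``optimize $c$'' branch, made concrete as $c=\sqrt[4]{\hat\lin}$, yields the stated parameters.

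Separately, your final step via the two-regime split on $\log(1+x)$ is more complicated than necessary and it is not clear it closes to the advertised constant; the elementary inequality $1+xe^{y}\leq e^{x+y}$ applied once (as above) does the job in one line and is what the paper uses.
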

\begin{proof}
Let $x,x' \in \mathcal{X}^n$ be neighbouring. Define $f(y) = \log(\pr{}{M(x)=y}/\pr{}{M(x')=y})$. Let $Y\sim M(x)$ and $Z=f(Y)$. That is, $Z=\privloss{M(x)}{M(x')}$ is the privacy loss random variable. Let $Y' \sim M(x')$ and $Z'=f(Y')$. That is, $-Z'$ is the privacy loss random variable if we swap $x$ and $x'$.

Let $\varepsilon,\delta>0$ satisfy \eqref{eqn:DP}.
By postprocessing, for all $t \in \mathbb{R}$,
\begin{align*}
\pr{}{Z>t} =& \pr{}{f(M(x)) > t}\\
\leq& e^\varepsilon \pr{}{f(M(x')) > t} + \delta\\
=& e^\varepsilon \int_\mathcal{Y} \pr{}{M(x')=y} \cdot \mathbb{I}(\pr{}{M(x)=y} > e^t \pr{}{M(x')=y} )\mathrm{d}y + \delta \\
<& e^\varepsilon \int_\mathcal{Y} \pr{}{M(x)=y} \cdot e^{-t} \cdot \mathbb{I}(\pr{}{M(x)=y} > e^t \pr{}{M(x')=y} )\mathrm{d}y + \delta\\
=& e^{\varepsilon - t} \pr{}{Z>t} +\delta,
\end{align*}
whence $\pr{}{Z>t} \leq \frac{\delta}{1-e^{\varepsilon-t}}$. Then we can set $\varepsilon =\hat \con+\lambda$ and $\delta = e^{-\lambda^2/\hat \lin}$ to obtain $$\forall t>0 ~~~ \pr{}{Z>\hat \con+t} \leq \inf_{0<\lambda<t} \frac{e^{-\lambda^2/\hat \lin}}{1-e^{\lambda-t}}.$$

In particular, for all $t\geq 0$, $$\pr{}{Z>\hat \con + \sqrt[4]{\hat \lin} + t} \leq \frac{e^{-t^2/\hat \lin}}{1-e^{-\sqrt[4]{\hat \lin}}} \leq \frac{2}{\sqrt[4]{\hat \lin}} e^{-t^2/\hat \lin}.$$

We use the inequality $1+xe^\con \leq e^{x+\con}$ for all $x,\con \geq 0$.
We have
\begin{align*}
\ex{}{e^{(\alpha-1)Z}}  =& \int_0^\infty \pr{}{e^{(\alpha-1)Z}>t} \mathrm{d}t\\
 =& \int_0^\infty \pr{}{Z>\frac{\log t}{\alpha-1}} \mathrm{d}t\\
 =& \int_{-\infty}^\infty \pr{}{Z>z} \frac{\mathrm{d}t}{\mathrm{d}z}\mathrm{d}z\\
=& \int_{-\infty}^\infty (\alpha-1)e^{(\alpha-1)z}\cdot \pr{}{Z>z}\mathrm{d}z\\
\leq& \int_{-\infty}^{\hat \con + \sqrt[4]{\hat \lin}} (\alpha-1)e^{(\alpha-1)z}\cdot 1\mathrm{d}z + \int_{0}^{\infty} (\alpha-1)e^{(\alpha-1)(t+{\hat \con + \sqrt[4]{\hat \lin}})}\cdot  \frac{2}{\sqrt[4]{\hat \lin}} e^{-t^2/\hat \lin}\mathrm{d}t\\
=& e^{(\alpha-1)({\hat \con + \sqrt[4]{\hat \lin}})} + (\alpha-1)e^{(\alpha-1)({\hat \con + \sqrt[4]{\hat \lin}})}\frac{2}{\sqrt[4]{\hat \lin}}\int_{0}^{\infty} e^{(\alpha-1)t}\cdot   e^{-t^2/\hat \lin}\mathrm{d}t\\
\leq& e^{(\alpha-1)({\hat \con + \sqrt[4]{\hat \lin}})} + (\alpha-1)e^{(\alpha-1)({\hat \con + \sqrt[4]{\hat \lin}})}\frac{2}{\sqrt[4]{\hat \lin}}\int_{-\infty}^{\infty} e^{-(t-\frac12(\alpha-1){\hat \lin})^2/\hat \lin + \frac{1}{4}(\alpha-1)^2 \hat \lin}\mathrm{d}t\\
=& e^{(\alpha-1)({\hat \con + \sqrt[4]{\hat \lin}})} + (\alpha-1)e^{(\alpha-1)({\hat \con + \sqrt[4]{\hat \lin}})}\frac{2}{\sqrt[4]{\hat \lin}}  e^{\frac{1}{4}(\alpha-1)^2 \hat \lin}\sqrt{\pi \hat \lin}\\
=& e^{(\alpha-1)({\hat \con + \sqrt[4]{\hat \lin}})} \left( 1+ (\alpha-1){2}  \sqrt{\pi} \sqrt[4]{\hat \lin} e^{\frac{1}{4}(\alpha-1)^2 \hat \lin}\right)\\
\leq& e^{(\alpha-1)({\hat \con + \sqrt[4]{\hat \lin}})} e^{ (\alpha-1){2}  \sqrt{\pi} \sqrt[4]{\hat \lin}+ {\frac{1}{4}(\alpha-1)^2 \hat \lin}}\\
=& e^{(\alpha-1)\left({\hat \con + \sqrt[4]{\hat \lin}}+{2}  \sqrt{\pi} \sqrt[4]{\hat \lin} + {\frac{1}{4}(\alpha-1) \hat \lin}\right)}\\
=& e^{(\alpha-1)\left(\hat \con + (1+2\sqrt{\pi})\sqrt[4]{\hat \lin} - \frac{\hat \lin}{4} + \frac{\hat \lin}{4}\alpha\right)}.
\end{align*}
Since $\ex{}{e^{(\alpha-1)Z}} = e^{(\alpha-1)\dr{\alpha}{M(x)}{M(x')}}$, this completes the proof.
\end{proof}

We make use of the following technical lemma taken from \cite{DworkSSUV15}.
\begin{lem} \label{lem:Symm}
Let $X$ be a random variable. Then $$\ex{}{e^{X-\ex{}{X}}} \leq \frac12 \ex{}{e^{2X}} + \frac12 \ex{}{e^{-2X}}.$$
\end{lem}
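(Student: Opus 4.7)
The plan is to use a symmetrization trick together with convexity of $\cosh$. Let $X'$ be an independent copy of $X$. First I would apply Jensen's inequality to the convex function $t \mapsto e^{-t}$ to obtain $e^{-\mathbb{E}[X]} \le \mathbb{E}[e^{-X'}]$. Multiplying by $\mathbb{E}[e^X]$ and using independence of $X$ and $X'$ gives
\[
\mathbb{E}\!\left[e^{X - \mathbb{E}[X]}\right] = e^{-\mathbb{E}[X]} \mathbb{E}[e^X] \le \mathbb{E}[e^X]\,\mathbb{E}[e^{-X'}] = \mathbb{E}\!\left[e^{X - X'}\right].
\]

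Next, since $X$ and $X'$ are i.i.d., the random variables $X - X'$ and $X' - X$ have the same distribution, so
\[
\mathbb{E}\!\left[e^{X-X'}\right] = \tfrac{1}{2}\mathbb{E}\!\left[e^{X - X'}\right] + \tfrac{1}{2}\mathbb{E}\!\left[e^{X' - X}\right] = \mathbb{E}\!\left[\cosh(X - X')\right].
\]

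Finally, I would exploit convexity and evenness of $\cosh$. Writing $x - y = \tfrac{1}{2}((2x) + (-2y))$ and using convexity of $\cosh$ together with $\cosh(-z) = \cosh(z)$ gives the pointwise bound $\cosh(x-y) \le \tfrac{1}{2}\cosh(2x) + \tfrac{1}{2}\cosh(2y)$. Taking expectations and using that $X$ and $X'$ are identically distributed,
\[
\mathbb{E}\!\left[\cosh(X - X')\right] \le \tfrac{1}{2}\mathbb{E}[\cosh(2X)] + \tfrac{1}{2}\mathbb{E}[\cosh(2X')] = \mathbb{E}[\cosh(2X)] = \tfrac{1}{2}\mathbb{E}[e^{2X}] + \tfrac{1}{2}\mathbb{E}[e^{-2X}],
\]
which chained with the previous inequalities yields the claim.

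There is no real obstacle here; the only non-obvious move is introducing the independent copy $X'$ to convert the $e^{-\mathbb{E}[X]}$ factor into an $\mathbb{E}[e^{-X'}]$ factor. Once that is done, symmetry collapses $e^{X-X'}$ into $\cosh(X - X')$, and the convexity/evenness of $\cosh$ delivers the final bound.
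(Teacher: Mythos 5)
Your proof is correct and follows essentially the same route as the paper's: introduce an independent copy $X'$, use Jensen to pass from $e^{X - \ex{}{X}}$ to $\ex{}{e^{X-X'}}$, and then split via convexity into the two terms $e^{2X}$ and $e^{-2X'}$. The only cosmetic difference is that you symmetrize to $\cosh(X-X')$ and invoke convexity of $\cosh$, whereas the paper encodes the same pointwise bound $e^{x-y}\le \tfrac12 e^{2x}+\tfrac12 e^{-2y}$ by averaging over a uniform bit $Y$ and applying Jensen to $e^t$ directly.
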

\begin{proof}
Let $X'$ be an independent copy of $X$ and let $Y \in \{0,1\}$ be uniformly random and independent from $X$ and $X'$. By Jensen's inequality,
\begin{align*}\ex{}{e^{X-\ex{}{X}}} =& \ex{X}{e^{\ex{X'}{X-X'}}} \leq  \ex{X, X'}{e^{{X-X'}}} = \ex{X, X'}{e^{2\ex{Y}{YX-(1-Y)X'}}} \\\leq&  \ex{X, X', Y}{e^{{2YX-2(1-Y)X'}}}
= \frac12 \ex{X,X'}{e^{2X-0}} + \frac12 \ex{X,X'}{e^{0-2X'}} = \frac12 \ex{}{e^{2X}} + \frac12 \ex{}{e^{-2X}}.
\end{align*}
\end{proof}
\begin{lem}[Restating Lemma \ref{lem:IIPtoCDP}]
If $M : \mathcal{X}^n \to \mathcal{Y}$ satisfies $(\con,\lin)$-\IIP{}, then $M$ satisfies $(\con+\lin,O(\sqrt{\con+2\lin}))$-CDP.
\end{lem}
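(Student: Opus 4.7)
The plan is to verify the two defining conditions of $(\mu,\tau)$-\mcdp{} with $\mu = \con+\lin$ and $\tau = O(\sqrt{\con+2\lin})$.

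\smallskip
\emph{Mean bound.} By the definition of the privacy loss random variable, $\dr{1}{M(x)}{M(x')} = \ex{}{Z}$. Sending $\alpha \to 1^+$ in the \IIP{} hypothesis and using continuity (or monotonicity) of R\'enyi divergence therefore gives $\ex{}{Z} \leq \con+\lin$ immediately, which establishes the first condition with $\mu = \con+\lin$.

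\smallskip
\emph{Subgaussian bound.} Applying the \IIP{} hypothesis to the original pair $(x,x')$ yields the one-sided MGF bound
\[
\ex{}{e^{\beta Z}} \leq e^{\beta(\con+\lin)+\beta^2\lin} \qquad \text{for all } \beta > 0,
\]
while applying it to the swapped pair $(x',x)$ yields $\ex{}{e^{-\beta Z}} \leq e^{(\beta-1)(\con+\beta\lin)}$ for $\beta > 1$, and for $\beta \in [0,1]$ the trivial bound $\ex{}{e^{-\beta Z}} \leq 1$ holds by concavity of $t \mapsto t^\beta$ together with $\ex{M(x)}{M(x')/M(x)} = 1$. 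The main tool is then Lemma~\ref{lem:Symm}, applied with $X = \lambda Z$, which gives
\[
\ex{}{e^{\lambda(Z-\ex{}{Z})}} \;\leq\; \tfrac12 \ex{}{e^{2\lambda Z}} + \tfrac12 \ex{}{e^{-2\lambda Z}}.
\]
In the regime $|\lambda| \geq 1/2$, plugging in the MGF bounds and using $2|\lambda| \leq 4\lambda^2$ absorbs all linear-in-$|\lambda|$ contributions into the quadratic term, yielding $\ex{}{e^{\lambda(Z-\ex{}{Z})}} \leq e^{4\lambda^2(\con+2\lin)}$, which is the required subgaussian estimate with $\tau^2 = 8(\con+2\lin)$.

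\smallskip
\emph{Main obstacle.} The delicate case is $|\lambda| < 1/2$, where the symmetrization bound carries a spurious linear-in-$|\lambda|$ factor of the form $e^{|\lambda|(\con+\lin)}$ (coming from the $\ex{}{e^{-2\lambda Z}} \leq 1$ estimate) that is not a subgaussian function of $\lambda$ near zero. To handle this, I would exploit the structural identity $\ex{}{e^{-Z}} = 1$ characterizing any privacy loss RV, together with the \IIP{}-derived bound $\ex{}{e^{Z}} \leq e^{\con+2\lin}$ (from $\alpha = 2$). Combining these via the elementary inequality $x^2 \leq 2(\cosh(x) - 1)$ applied to $Z - \ex{}{Z}$ produces a second-moment bound $\var{}{Z} = O(\con+\lin)$, after which Taylor expansion of $\ex{}{e^{\lambda(Z-\ex{}{Z})}}$ about $\lambda = 0$ (whose linear term vanishes by centering) yields the subgaussian estimate with the same $\tau^2 = O(\con+2\lin)$ in the small-$|\lambda|$ regime. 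Taking the larger of the two constants across the two regimes gives the stated $\tau = O(\sqrt{\con+2\lin})$.
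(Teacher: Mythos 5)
Your mean bound and your large-$|\lambda|$ symmetrization argument match the paper's proof essentially step for step, including the use of Lemma~\ref{lem:Symm} and the observation that $2|\lambda|\leq 4\lambda^2$ absorbs the linear terms once $|\lambda|\geq 1/2$. The identities $\ex{}{e^{-Z}}=1$, $\ex{}{e^{Z}}\leq e^{\con+2\lin}$, and the one-sided MGF bounds for $Z$ and $-Z$ are all correct and are exactly the ingredients the paper uses.

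The gap is in the small-$|\lambda|$ case. You propose to extract a variance bound $\var{}{Z}=O(\con+\lin)$ via $x^2\leq 2(\cosh x-1)$ and then obtain the subgaussian estimate by ``Taylor expansion of $\ex{}{e^{\lambda(Z-\ex{}{Z})}}$ about $\lambda=0$.'' A second-moment bound does not control the MGF on the whole interval $|\lambda|<1/2$: the Taylor series has terms $\frac{\lambda^m}{m!}\ex{}{(Z-\ex{}{Z})^m}$ for every $m\geq 2$, and without a uniform handle on the higher central moments the series could in principle grow arbitrarily on the interval. The paper avoids this by \emph{not} discarding the higher moments: after symmetrizing to $\ex{}{\cosh(2\lambda Z)}=1+\sum_{k\geq1}\frac{(2\lambda)^{2k}}{(2k)!}\ex{}{Z^{2k}}$, it uses $|2\lambda|\leq 1$ to replace $(2\lambda)^{2k}$ by $(2\lambda)^2$ in every term, so that the tail of the series is bounded by $(2\lambda)^2\sum_{k\geq1}\frac{\ex{}{Z^{2k}}}{(2k)!}=(2\lambda)^2\bigl(\ex{}{\cosh Z}-1\bigr)$, and the whole weighted moment sum $\ex{}{\cosh Z}$ is what ZCDP at $\alpha=2$ together with $\ex{}{e^{-Z}}=1$ bounds. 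In other words, the crucial quantity is $\ex{}{\cosh Z}-1$, which packages all even moments, not just $\var{}{Z}$. Once you isolate only the $k=1$ term and call the rest ``Taylor expansion,'' the argument no longer closes. To repair it, keep the full $\ex{}{\cosh(2\lambda Z)}$ expansion (over uncentered $Z$, so that the ZCDP bounds apply cleanly) and collapse the series via $(2\lambda)^{2k}\leq(2\lambda)^2$ rather than truncating to the variance.
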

\begin{proof}
Let $x$ and $x'$ be neighbouring databases and $Z\sim\privloss{M(x)}{M(x')}$ the privacy loss random variable
We have $\ex{}{Z}=\dr{1}{M(x)}{M(x')} \in [0, \con+\lin]$ by non-negativity and \IIP{}.
By \IIP{}, for all $\alpha \in (1,\infty)$, we have $$\ex{}{e^{(\alpha-1)Z}} = e^{(\alpha-1)\dr{\alpha}{M(x)}{M(x')}} \leq e^{(\alpha-1)(\con+\lin\alpha)}$$ and
\begin{align*}
\ex{}{e^{-\alpha Z}} =& \ex{Y \sim M(x)}{\left(\frac{\pr{}{M(x)=Y}}{\pr{}{M(x')=Y}}\right)^{-\alpha}}\\
=& \ex{Y \sim M(x)}{\left(\frac{\pr{}{M(x')=Y}}{\pr{}{M(x)=Y}}\right)^{\alpha}}\\
=& e^{(\alpha-1)\dr{\alpha}{M(x')}{M(x)}}\\
\leq& e^{(\alpha-1)(\con+\lin\alpha)}.
\end{align*}
By Lemma \ref{lem:Symm}, for $\lambda \geq 1/2$, $$\ex{}{e^{\lambda (Z-\ex{}{Z})}} \leq \frac12 \ex{}{e^{2\lambda Z}} + \frac12 \ex{}{e^{-2\lambda Z}} \leq \frac12 e^{2\lambda(\con+\lin(2\lambda+1))} + \frac12 e^{(2\lambda-1)(\con+2\lin\lambda)} \leq e^{4(\con+2\lin)\lambda^2}$$ and, for $\lambda \leq -1/2$, $$\ex{}{e^{\lambda (Z-\ex{}{Z})}} \leq \frac12 \ex{}{e^{2\lambda Z}} + \frac12 \ex{}{e^{-2\lambda Z}} \leq \frac12 e^{(-2\lambda-1)(\con-2\lin\lambda)} + \frac12 e^{-2\lambda(\con+\lin(1-2\lambda))}  \leq e^{4(\con+2\lin)\lambda^2}.$$
Now suppose $|\lambda|< 1/2$. Then
\begin{align*}
\ex{}{e^{\lambda (Z-\ex{}{Z})}} \leq&  \frac12 \ex{}{e^{2\lambda Z}} + \frac12 \ex{}{e^{-2\lambda Z}}\\
=& 1 + \sum_{k=1}^\infty \frac{(2\lambda)^{2k}}{(2k)!} \ex{}{Z^{2k}}\\
\leq& 1 + (2\lambda)^2\sum_{k=1}^\infty \frac{1}{(2k)!} \ex{}{Z^{2k}}\\
=& 1 + 4\lambda^2 \left(\frac12 \ex{}{e^Z} + \frac12 \ex{}{e^{-Z}} - 1  \right)\\
\leq& 1 + 4\lambda^2 \left(e^{\con + 2\lin} - 1  \right)\\
\leq& e^{\lambda^2O(\con+2\lin)}.
\end{align*}
\end{proof}

\subsection{Proof of Lemma \ref{lem:Renyi}} \label{app:Renyi}

\begin{proof}[Proof of Non-Negativity] Let $h(t)=t^\alpha$. Then $h''(t) = \alpha(\alpha-1)t^{\alpha-2} > 0$ for all $t>0$ and $\alpha > 1$. Thus $h$ is strictly convex. Hence $e^{(\alpha-1)\dr{\alpha}{P}{Q})} = \ex{x \sim Q}{h(P(x)/Q(x))} \geq h(\ex{x \sim Q}{P(x)/Q(x)})=h(1)=1$, as required. \end{proof}
\begin{proof}[Proof of Composition] \begin{align*}e^{(\alpha-1) \dr{\alpha}{P}{Q}} =& \int_{\Omega \times \Theta} P(x,y)^\alpha Q(x,y)^{1-\alpha} \mathrm{d}(x,y)\\ =& \int_\Omega P'(x)^\alpha Q'(x)^{1-\alpha} \int_\Theta P'_x(y)^\alpha Q'_x(y)^{1-\alpha} \mathrm{d}y \mathrm{d}x\\=& \int_\Omega P'(x)^\alpha Q'(x)^{1-\alpha} e^{(\alpha-1)\dr{\alpha}{P'_x}{Q'_x}} \mathrm{d}x\\\leq& \int_\Omega P'(x)^\alpha Q'(x)^{1-\alpha} \mathrm{d}x \cdot \max_x e^{(\alpha-1)\dr{\alpha}{P'_x}{Q'_x}} \\=& e^{(\alpha-1)\dr{\alpha}{P'}{Q'}} \cdot  e^{(\alpha-1) \max_x \dr{\alpha}{P'_x}{Q'_x}}. \end{align*}
The other side of the inequality is symmetric. \end{proof}

\begin{proof}[Proof of Quasi-Convexity]

Unfortunately R\'enyi divergence is not convex for $\alpha>1$. (Although KL-divergence is.) However, the following property implies that R\'enyi divergence is quasi-convex.

\begin{lem} \label{lem:ExpConvex}
Let $P_0,P_1,Q_0,Q_1$ be distributions on $\Omega$. For $t \in [0,1]$, define $P_t=t P_1 + (1-t)P_0$ and $Q_t=tQ_1+(1-t)Q_0$ to be the convex combinations specified by $t$. Then $$e^{(\alpha-1)\dr{\alpha}{P_t}{Q_t}} \leq t e^{(\alpha-1)\dr{\alpha}{P_1}{Q_1}} + (1-t)e^{(\alpha-1)\dr{\alpha}{P_0}{Q_0}}.$$ 
\end{lem}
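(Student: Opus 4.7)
The plan is to reduce Lemma \ref{lem:ExpConvex} to a pointwise inequality for the integrand, and then establish that pointwise inequality by a clean application of H\"older's inequality on a two-point measure. First I would unpack the definition, writing
\[ e^{(\alpha-1)\dr{\alpha}{P}{Q}} = \int_\Omega P(x)^\alpha Q(x)^{1-\alpha}\,\mathrm{d}x, \]
so that, since integration is linear, it suffices to prove the pointwise statement: for every $x \in \Omega$, with $a_i = P_i(x)$ and $b_i = Q_i(x)$,
\[ (ta_1+(1-t)a_0)^\alpha (tb_1+(1-t)b_0)^{1-\alpha} \leq t\, a_1^\alpha b_1^{1-\alpha} + (1-t)\, a_0^\alpha b_0^{1-\alpha}. \]
If this holds, integrating over $\Omega$ immediately yields the claimed inequality between the exponentials of $(\alpha-1)$ times the R\'enyi divergences.

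To prove the pointwise inequality, I would apply H\"older's inequality on the two-point probability space with weights $(1-t,t)$ and conjugate exponents $p=\alpha$, $q=\alpha/(\alpha-1)$. Factoring $a_i = \left(a_i/b_i^{(\alpha-1)/\alpha}\right) \cdot b_i^{(\alpha-1)/\alpha}$ and applying H\"older gives
\[ ta_1+(1-t)a_0 \leq \Bigl(t\, a_1^\alpha b_1^{1-\alpha} + (1-t)\, a_0^\alpha b_0^{1-\alpha}\Bigr)^{1/\alpha} \Bigl(tb_1+(1-t)b_0\Bigr)^{(\alpha-1)/\alpha}. \]
Raising to the $\alpha$-th power and multiplying both sides by $(tb_1+(1-t)b_0)^{1-\alpha}$ (which is well-defined and positive since $b_i > 0$ whenever the Radon-Nikodym derivatives are finite) produces exactly the desired pointwise bound.

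The only conceptual content is choosing the right split for H\"older; once that is in place, the computation is routine algebra. Equivalently, one can view the pointwise inequality as the joint convexity of $(a,b)\mapsto a^\alpha b^{1-\alpha}$ on $(0,\infty)^2$ for $\alpha>1$, which can be verified via a Hessian computation, but the H\"older derivation is crisper and avoids checking positive semidefiniteness. The main (mild) obstacle is handling points where $b_i = 0$: such points contribute $0$ to the integral defining $e^{(\alpha-1)\dr{\alpha}{P_i}{Q_i}}$ only when $a_i = 0$ as well (by the absolute continuity assumption $P_i \ll Q_i$), so they can be excluded from the integration without changing either side; otherwise $\dr{\alpha}{P_i}{Q_i}=\infty$ and the inequality is trivial.
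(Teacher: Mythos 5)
Your proof is correct, and it takes a genuinely different route from the paper's. The paper proves convexity of $f(t) = e^{(\alpha-1)\dr{\alpha}{P_t}{Q_t}}$ directly: it differentiates under the integral sign twice and rewrites $f''(t)$ as $\alpha(\alpha-1)$ times an integral of a perfect square, hence non-negative. You instead observe that the claimed inequality follows by integration from a pointwise bound --- the joint convexity of $(a,b) \mapsto a^\alpha b^{1-\alpha}$ on $(0,\infty)^2$ for $\alpha>1$ --- and you derive that pointwise bound from H\"older's inequality with conjugate exponents $(\alpha,\alpha/(\alpha-1))$ on the two-point space with weights $(1-t,t)$. The H\"older step checks out: taking $u_i = a_i b_i^{(1-\alpha)/\alpha}$ and $v_i = b_i^{(\alpha-1)/\alpha}$ gives $u_iv_i = a_i$, $u_i^\alpha = a_i^\alpha b_i^{1-\alpha}$, and $v_i^{\alpha/(\alpha-1)} = b_i$, and after raising to the power $\alpha$ and multiplying by $(tb_1+(1-t)b_0)^{1-\alpha}$ one obtains exactly the pointwise inequality. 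Your handling of the degenerate case $b_i=0$ is also sound: under $P_i\ll Q_i$ such points contribute nothing, and otherwise $\dr{\alpha}{P_i}{Q_i}=\infty$ makes the inequality vacuous; and $P_0\ll Q_0$, $P_1\ll Q_1$ together imply $P_t\ll Q_t$, so the mixture side is well-defined too. Compared to the paper, your argument is shorter, avoids differentiating under the integral sign, and makes explicit the stronger structural fact that the integrand $P_t(x)^\alpha Q_t(x)^{1-\alpha}$ is itself convex in $t$ for each fixed $x$ (indeed jointly convex in the pair $(P_t(x),Q_t(x))$), rather than only establishing convexity of the integral. The paper's second-derivative calculation is in essence verifying the same pointwise convexity via a restricted directional Hessian, but the H\"older route is cleaner to state and easier to verify.
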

Moreover,  the limit as $\alpha \to 1+$ gives $$\dr{1}{P_t}{Q_t} \leq t \dr{1}{P_1}{Q_1} + (1-t) \dr{1}{P_0}{Q_0}.$$

\begin{proof}[Proof of Lemma \ref{lem:ExpConvex}]
Let $f(t) = e^{(\alpha-1)\dr{\alpha}{P_t}{Q_t}}$. Since the equality is clearly true for $t=0$ and $t=1$, it suffices to show that $f''(t)\geq 0$ for all $t \in [0,1]$. We have
\begin{align*}
f(t) =& \int_\Omega P_t(x)^\alpha Q_t(x)^{1-\alpha} \mathrm{d}x,\\
f'(t) =& \int_\Omega \frac{\mathrm{d}}{\mathrm{d}t} P_t(x)^\alpha Q_t(x)^{1-\alpha} \mathrm{d}x\\
=& \int_\Omega \alpha P_t(x)^{\alpha-1}\left(\frac{\mathrm{d}}{\mathrm{d}t} P_t(x)\right) Q_t(x)^{1-\alpha} + (1-\alpha) P_t(x)^\alpha Q_t(x)^{-\alpha} \left(\frac{\mathrm{d}}{\mathrm{d}t} Q_t(x) \right)  \mathrm{d}x\\
=& \int_\Omega \alpha P_t(x)^{\alpha-1}\left(P_1(x)-P_0(x)\right) Q_t(x)^{1-\alpha} + (1-\alpha) P_t(x)^\alpha Q_t(x)^{-\alpha} \left(Q_1(x) - Q_0(x) \right)  \mathrm{d}x,\\
f''(t) =& \int_\Omega \alpha(\alpha-1) P_t(x)^{\alpha-2}\left(P_1(x)-P_0(x)\right)^2 Q_t(x)^{1-\alpha} \\& + \alpha(1-\alpha) P_t(x)^{\alpha-1}\left(P_1(x)-P_0(x)\right) Q_t(x)^{-\alpha} \left(Q_1(x) - Q_0(x) \right) \\& + (1-\alpha) \alpha P_t(x)^{\alpha-1} \left(P_1(x)-P_0(x)\right) Q_t(x)^{-\alpha} \left(Q_1(x) - Q_0(x) \right) \\& + (1-\alpha)(-\alpha) P_t(x)^\alpha Q_t(x)^{-\alpha-1} \left(Q_1(x) - Q_0(x) \right)^2  \mathrm{d}x\\
 =& \alpha(\alpha-1) \int_\Omega \left( \sqrt{P_t(x)^{\alpha-2} Q_t(x)^{1-\alpha}} \left(P_1(x)-P_0(x)\right) - \sqrt{P_t(x)^\alpha Q_t(x)^{-\alpha-1}} \left(Q_1(x) - Q_0(x) \right) \right)^2 \mathrm{d}x\\
\geq& 0.
\end{align*}
\end{proof}
\end{proof}

\begin{proof}[Proof of Postprocessing] Let $h(x)=x^\alpha$. Note that $h$ is convex. Let $f^{-1}(y) = \{x \in \Omega : f(x)=y\}$. Let $Q_y$ be the conditional distribution on $x \sim Q$ conditioned on $f(x)=y$. By Jensen's inequality,
\begin{align*}
e^{(\alpha-1)\dr{\alpha}{P}{Q}} =& \ex{x \sim Q}{\left(\frac{P(x)}{Q(x)}\right)^\alpha}\\
 =& \ex{y \sim f(Q)}{\ex{x \sim Q_y}{h\left(\frac{P(x)}{Q(x)}\right)}}\\
 \geq& \ex{y \sim f(Q)}{h\left(\ex{x \sim Q_y}{\frac{P(x)}{Q(x)}}\right)}\\
=& \ex{y \sim f(Q)}{h\left(\int_{f^{-1}(y)} \frac{Q(x)}{Q(f^{-1}(y))} \frac{P(x)}{Q(x)} \mathrm{d}x\right)}\\
=& \ex{y \sim f(Q)}{h\left(\frac{P(f^{-1}(y))}{Q(f^{-1}(y))}\right)}\\
=& e^{(\alpha-1)\dr{\alpha}{f(P)}{f(Q)}}.
\end{align*}
\end{proof}
\begin{proof}[Proof of Monotonicity]
Let $1 <\alpha \leq \alpha' < \infty$. Let $h(x)=x^{\frac{\alpha'-1}{\alpha-1}}$. Then $$h''(x) = \frac{\alpha'-1}{\alpha-1}\left(\frac{\alpha'-1}{\alpha-1}-1\right) x^{\frac{\alpha'-1}{\alpha-1}-2} \geq 0,$$ so $h$ is convex on $(0,\infty)$. Thus $$e^{(\alpha'-1)\dr{\alpha}{P}{Q}} = h\left(e^{(\alpha-1)\dr{\alpha}{P}{Q}}\right) = h\left(\ex{x \sim P}{\left(\frac{P(x)}{Q(x)}\right)^{\alpha-1}}\right) \leq \ex{x \sim P}{h\left(\left(\frac{P(x)}{Q(x)}\right)^{\alpha-1}\right)} = e^{(\alpha'-1)\dr{\alpha'}{P}{Q}},$$
which gives the result.
\end{proof}

\section{Privacy versus Sampling}

In this appendix we prove the technical Lemma \ref{lem:SubSampAcc}. Essentially we show that if a private mechanism can accurately answer a set of queries with a given sample complexity, then those queries can be approximated on an unknown distribution with the same sample complexity. This is related to lower bounds on private sample complexity using Vapnik-Chervonenkis dimension e.g. \cite[Theorem 4.8]{DworkR14} and \cite[Theorem 5.5]{BunNSV15}.

First we state Pinsker's inequality \cite[Theorem 31]{vanErvenH14}.

\begin{lem}[Pinsker's Inequality]
Let $X$ and $Y$ be random variables on $[-1,1]$. Then $$\left|\ex{}{X}-\ex{}{Y}\right| \leq \sqrt{2\dr{1}{X}{Y}} $$ 
\end{lem}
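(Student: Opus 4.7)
The plan is to reduce the claim to the classical form of Pinsker's inequality (stated for total variation distance) via the elementary fact that expectations of bounded functions are controlled by total variation. Let $P$ and $Q$ denote the distributions of $X$ and $Y$ on $[-1,1]$, and write $\mathrm{TV}(P,Q) = \tfrac{1}{2}\int|\mathrm{d}P - \mathrm{d}Q|$ for the total variation distance.

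First, I would bound the expectation gap by twice the total variation. Since $|x|\leq 1$ on the support,
$$\left|\ex{}{X} - \ex{}{Y}\right| = \left|\int x\,(\mathrm{d}P - \mathrm{d}Q)\right| \leq \int |x|\cdot|\mathrm{d}P - \mathrm{d}Q| \leq \int|\mathrm{d}P - \mathrm{d}Q| = 2\,\mathrm{TV}(P,Q).$$
Second, I would invoke the standard Pinsker inequality $\mathrm{TV}(P,Q) \leq \sqrt{\tfrac{1}{2}\dr{1}{P}{Q}}$. Multiplying by $2$ and pulling the constant inside the square root gives $|\ex{}{X}-\ex{}{Y}| \leq 2\sqrt{\tfrac{1}{2}\dr{1}{X}{Y}} = \sqrt{2\,\dr{1}{X}{Y}}$, which is the desired inequality.

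The only nontrivial ingredient is the classical Pinsker inequality itself, whose proof proceeds in the usual two steps. \emph{Step (i), the Bernoulli case:} show that $\phi(p,q) := p\log(p/q) + (1-p)\log((1-p)/(1-q)) - 2(p-q)^2 \geq 0$ for $p,q \in [0,1]$, which follows by fixing $q$ and checking that $\phi(q,q) = 0$, $\partial_p\phi(q,q)=0$, and $\partial_p^2\phi(p,q) = 1/(p(1-p)) - 4 \geq 0$ since $p(1-p)\leq 1/4$. \emph{Step (ii), reduction:} for general $P,Q$, let $A = \{x : \mathrm{d}P/\mathrm{d}Q(x) \geq 1\}$, so that $\mathrm{TV}(P,Q) = P(A)-Q(A)$ equals the TV-distance between $\mathrm{Ber}(P(A))$ and $\mathrm{Ber}(Q(A))$, while the postprocessing property of KL-divergence (already established in Lemma \ref{lem:Renyi}) gives $\dr{1}{\mathrm{Ber}(P(A))}{\mathrm{Ber}(Q(A))} \leq \dr{1}{P}{Q}$. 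The main ``obstacle'' is really just step (i), a short calculus exercise; everything else is bookkeeping that uses only boundedness of $X,Y$ and the postprocessing property of R\'enyi divergence.
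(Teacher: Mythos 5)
Your proof is correct. Note that the paper gives no proof of this lemma at all; it simply cites \cite[Theorem 31]{vanErvenH14}, so there is no internal argument to compare against. Your route is the standard reduction: since $|X|,|Y|\le 1$ almost surely, the expectation gap is bounded by $\int |x|\,|\mathrm{d}P-\mathrm{d}Q| \le \int |\mathrm{d}P-\mathrm{d}Q| = 2\,\mathrm{TV}(P,Q)$, and combining this with the classical Pinsker bound $\mathrm{TV}(P,Q)\le\sqrt{\tfrac12\dr{1}{P}{Q}}$ gives $2\sqrt{\tfrac12\dr{1}{P}{Q}} = \sqrt{2\dr{1}{X}{Y}}$, with the constants matching exactly. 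Your sketch of classical Pinsker is also sound: the Bernoulli inequality $\phi(p,q)\ge 0$ follows from $\phi(q,q)=\partial_p\phi(q,q)=0$ and $\partial_p^2\phi = 1/(p(1-p))-4\ge 0$ (convexity in $p$ with minimum at $p=q$, valid on all of $[0,1]$, not only $p\ge q$), and the reduction to the Bernoulli case via $A=\{x:\mathrm{d}P/\mathrm{d}Q\ge 1\}$ is exactly an application of the postprocessing (data-processing) property at $\alpha=1$, which Lemma~\ref{lem:Renyi} provides.
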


We can also generalise Pinsker's inequality using R\'{e}nyi divergence:

\begin{lem}
Let $P$ and $Q$ be distributions on $\Omega$ and $f : \Omega \to \mathbb{R}$. Then $$\left| \ex{x \sim P}{f(x)} - \ex{x \sim Q}{f(x)} \right| \leq \sqrt{\ex{x \sim Q}{f(x)^2}} \cdot \sqrt{e^{\dr{2}{P}{Q}}-1}.$$
\end{lem}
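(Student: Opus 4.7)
The plan is to apply the Cauchy--Schwarz inequality after rewriting the difference of expectations as a single expectation under $Q$ that exposes the likelihood ratio $P(x)/Q(x)$. This should immediately produce a factor of $\sqrt{\ex{x \sim Q}{f(x)^2}}$ times the $L^2(Q)$-norm of the likelihood ratio minus one, and the latter is exactly what the definition of $\dr{2}{P}{Q}$ controls.

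First I would write
\begin{equation*}
\ex{x \sim P}{f(x)} - \ex{x \sim Q}{f(x)} = \ex{x \sim Q}{f(x)\left(\frac{P(x)}{Q(x)} - 1\right)},
\end{equation*}
which is valid under the standing assumption $P \ll Q$ (otherwise $\dr{2}{P}{Q} = \infty$ and the inequality is trivial). Next, I would apply Cauchy--Schwarz with respect to $Q$ to get
\begin{equation*}
\left|\ex{x \sim Q}{f(x)\left(\frac{P(x)}{Q(x)} - 1\right)}\right| \leq \sqrt{\ex{x \sim Q}{f(x)^2}} \cdot \sqrt{\ex{x \sim Q}{\left(\frac{P(x)}{Q(x)} - 1\right)^2}}.
\end{equation*}

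The final step is to identify the second factor with $\sqrt{e^{\dr{2}{P}{Q}}-1}$. Expanding the square and using $\ex{x \sim Q}{P(x)/Q(x)} = 1$ gives
\begin{equation*}
\ex{x \sim Q}{\left(\frac{P(x)}{Q(x)} - 1\right)^2} = \ex{x \sim Q}{\left(\frac{P(x)}{Q(x)}\right)^2} - 1 = e^{(2-1)\dr{2}{P}{Q}} - 1,
\end{equation*}
by the very definition of R\'enyi divergence of order $2$. Combining these three steps yields the claimed inequality. There is no real obstacle here; the only point to watch is the absolute continuity assumption, which is handled by the convention $\dr{2}{P}{Q}=\infty$ when $P \not\ll Q$ stated in the definition of R\'enyi divergence.
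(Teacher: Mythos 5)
Your proposal is correct and follows essentially the same route as the paper: rewrite the difference as $\ex{x \sim Q}{f(x)(P(x)/Q(x)-1)}$, apply Cauchy--Schwarz, and identify $\ex{x \sim Q}{(P(x)/Q(x)-1)^2}$ with $e^{\dr{2}{P}{Q}}-1$ by expanding the square. The only (harmless) difference is that you also note the absolute-continuity convention explicitly, which the paper leaves implicit.
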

In particular, if $M : \mathcal{X}^n \to \mathcal{Y}$ satisfies $(\con,\lin)$-\IIP{}. Then, for any $f : \mathcal{Y} \to \mathbb{R}$ and all neighbouring $x,x' \in \mathcal{X}^n$, $$\left|\ex{}{f(M(x'))} - \ex{}{f(M(x))}\right| \leq \sqrt{\ex{}{f(M(x))^2}} \cdot \sqrt{e^{\con+2\lin}-1}. $$
\begin{proof}
By Cauchy-Schwartz,
\begin{align*}
\ex{x \sim P}{f(x)} - \ex{x \sim Q}{f(x)} =& \ex{x \sim Q}{f(x) \left(\frac{P(x)}{Q(x)} -1\right)}\\
\leq& \sqrt{\ex{x \sim Q}{f(x)^2}} \cdot \sqrt{\ex{x \sim Q}{\left(\frac{P(x)}{Q(x)} -1\right)^2}}.
\end{align*}
Now
\begin{align*}
\ex{x \sim Q}{\left(\frac{P(x)}{Q(x)} -1\right)^2}
=& \ex{x \sim Q}{\left(\frac{P(x)}{Q(x)}\right)^2 -2 \frac{P(x)}{Q(x)} + 1}\\
=& \ex{x \sim Q}{\left(\frac{P(x)}{Q(x)}\right)^2}-2+1\\
=& e^{\dr{2}{P}{Q}}-1. 
\end{align*}
\end{proof}

\begin{prop} \label{prop:Generalize}
Let $q : \mathcal{X} \to \mathbb{R}^k$ satisfy $\max_{x \in \mathcal{X}} \|q(x)\| \leq 1$, where $\|\cdot\|$ is some norm. Let $M : \mathcal{X}^n \to \mathbb{R}^k$ satisfy $(\con,\lin)$-\IIP{} and $$\ex{M}{\|M(x)-q(x)\|} \leq \alpha $$ for all $x \in \mathcal{X}^n$. Then, for any distribution $\mathcal{D}$ on $\mathcal{X}$, $$\ex{x \sim \mathcal{D}^n,M}{\|M(x) - q(\mathcal{D}) \|} \leq \alpha + \sqrt{2(\con+\lin)}$$ and $$\ex{x \sim \mathcal{D}^n}{\|q(x)-q(\mathcal{D})\|} \leq 2\alpha + \sqrt{2(\con+\lin)},$$ where $q(\mathcal{D}) = \ex{z \sim \mathcal{D}}{q(z)}$.
\end{prop}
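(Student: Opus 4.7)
The plan is to reduce the vector-valued accuracy statement to a scalar statistical-query generalization claim via duality, then invoke the standard ``privacy implies generalization'' argument driven by Pinsker's inequality together with the KL bound $\dr{1}{M(x)}{M(x')}\le\con+\lin$ that $(\con,\lin)$-\IIP{} supplies at $\alpha\to 1$. Once the first inequality is established, the second follows at once from the triangle inequality $\|q(x)-q(\mathcal{D})\|\le\|q(x)-M(x)\|+\|M(x)-q(\mathcal{D})\|$ combined with $\ex{M}{\|M(x)-q(x)\|}\le\alpha$, so essentially all of the work is in the first.

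The crux is to peel off the norm using a measurable dual witness. For each $y\in\mathbb{R}^k$, pick $v^*(y)$ with $\|v^*(y)\|_*\le 1$ and $\langle v^*(y),\,y-q(\mathcal{D})\rangle=\|y-q(\mathcal{D})\|$ (Hahn--Banach plus a measurable selection, straightforward in finite dimension). Define $M'(x):=v^*(M(x))$; by the postprocessing property of R\'enyi divergence (Lemma \ref{lem:Renyi}), $M'$ remains $(\con,\lin)$-\IIP{}. The payoff is that the scalar test function $\tilde\psi(v,z):=\langle v,q(z)\rangle$ then satisfies $|\tilde\psi|\le 1$, because $\|v\|_*\le 1$ and $\max_{z\in\mathcal{X}}\|q(z)\|\le 1$. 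This is precisely the regime in which Pinsker's inequality applies in the sharp form $|\ex{P}{\tilde\psi}-\ex{Q}{\tilde\psi}|\le\sqrt{2\,\dr{1}{P}{Q}}$.

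I then apply the classical resample-one-coordinate argument to $(M',\tilde\psi)$. Fix $j\in[n]$, draw $z_j\sim\mathcal{D}$ independently of $x$, and let $x^{(j)}$ denote $x$ with its $j$-th entry replaced by $z_j$. The inputs $x$ and $x^{(j)}$ are neighbors, so Pinsker gives, pointwise in $(x,z_j)$,
\[\big|\ex{M'}{\tilde\psi(M'(x),x_j)}-\ex{M'}{\tilde\psi(M'(x^{(j)}),x_j)}\big|\le\sqrt{2(\con+\lin)}.\]
Taking expectation over $(x,z_j)$ and then exchanging $x_j\leftrightarrow z_j$ (which preserves the joint distribution since both are i.i.d.\ from $\mathcal{D}$) rewrites the second term as $\ex{x,M'}{\langle M'(x),\,q(\mathcal{D})\rangle}$. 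Averaging the resulting estimate over $j\in[n]$ and substituting $M'(x)=v^*(M(x))$ yields
\[\big|\ex{x,M}{\langle v^*(M(x)),\,q(x)-q(\mathcal{D})\rangle}\big|\le\sqrt{2(\con+\lin)}.\]

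To close, I decompose $\|M(x)-q(\mathcal{D})\|=\langle v^*(M(x)),\,M(x)-q(x)\rangle+\langle v^*(M(x)),\,q(x)-q(\mathcal{D})\rangle$; the first summand is $\le\|M(x)-q(x)\|$ by H\"older and contributes $\le\alpha$ in expectation, while the second is controlled by the display above, giving the first inequality. The main obstacle is the boundedness hypothesis of Pinsker: the natural candidate $\omega\mapsto\|\omega-q(\mathcal{D})\|$ is unbounded, and without the duality reduction one would be forced to fall back on the moment-based R\'enyi--Pinsker lemma appearing just above, which degrades the constant to $\sqrt{e^{\con+2\lin}-1}$ and drags in an unwanted $\ex{M}{\|M(x)-q(\mathcal{D})\|^2}$. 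The hypothesis $\max_z\|q(z)\|\le 1$ is used in exactly one place: certifying $|\tilde\psi|\le 1$.
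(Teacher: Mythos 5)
Your proposal is correct and is essentially the paper's own argument in slightly different packaging: the paper carries the dual witness along as the second coordinate of a postprocessed mechanism $W(x)=(M(x),s)$ with $s=\mathrm{argmax}_{\|v\|_*=1}\langle M(x)-q(\mathcal{D}),v\rangle$, whereas you postprocess directly to $M'(x)=v^*(M(x))$ — the same object. The resample-one-coordinate step, the use of Pinsker on the bounded test function $\langle s,q(z)\rangle$, the split $\|M(x)-q(\mathcal{D})\|=\langle s,M(x)-q(x)\rangle+\langle s,q(x)-q(\mathcal{D})\rangle$ bounded by $\alpha$ plus the generalization term, and the triangle-inequality finish for the second claim all match the paper's proof.
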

\begin{proof}
First define the dual norm: For $x \in \mathbb{R}^k$, $$\|x\|_* := \max_{y \in \mathbb{R}^k : \|y\|=1} {\langle x, y \rangle}.$$ By definition, $\langle x, y \rangle = \langle y, x \rangle \leq \|x\|_* \cdot \|y\|$ for all $x,y \in \mathbb{R}^k$. Moreover, $\|z\| = \max_{y \in \mathbb{R}^k : \|y\|_*=1} \langle z, y \rangle$ for all $z \in \mathbb{R}^k$.

Fix a distribution $\mathcal{D}$. Define $W : \mathcal{X}^n \to \mathbb{R}^k \times \mathbb{R}^k$ as follows.
On input $x \in \mathcal{X}^n$, compute $a=M(x)$ and $$s = \underset{v \in \mathbb{R}^k : \|v\|_* = 1}{\mathrm{argmax}} \langle a-q(\mathcal{D}), v \rangle,$$ and output $(a,s)$.

By postprocessing, $W$ satisfies $(\con,\lin)$-\IIP{} and \begin{equation}\ex{W}{\langle a - q(x), s \rangle \mid (a,s) = W(x)} \leq \ex{W}{\| a - q(x)\| \cdot \|s \|_* \mid (a,s) = W(x)} = \ex{M}{\|M(x)-q(x)\|} \leq \alpha\label{eqn:EmpErr}\end{equation} for all $x \in \mathcal{X}^n$.

The following is similar to \cite[Lemma 3.1]{BassilyNSSSU16}. Let $x \sim \mathcal{D}^n$ and $y \sim \mathcal{D}$. Now
\begin{align}\label{eqn:GenErr}
\ex{x,W}{\langle q(x), s \rangle\mid (a,s) = W(x)} 
=& \frac{1}{n} \sum_{i \in [n]} \ex{x,W}{\langle q(x_i), s \rangle \mid (a,s) = W(x)}\\\nonumber
=& \frac{1}{n} \sum_{i \in [n]} \ex{x,W}{f(x_i, W(x))}\\\nonumber
\intertext{(letting $f : \mathcal{X} \times \mathbb{R}^k \times \mathbb{R}^k \to [-1,1]$ be $f(z,a,s) = \langle q(z), s\rangle/\|s\|_*$)}\nonumber
\leq& \frac{1}{n} \sum_{i \in [n]} \ex{x,y,W}{f(x_i, W(x_1, \cdots, x_{i-1}, y, x_{i+1}, \cdots, x_n))} \\\nonumber&+ \sqrt{2\dr{1}{f(x_i, W(x))}{f(x_i, W(x_1, \cdots, x_{i-1}, y, x_{i+1}, \cdots, x_n))}}\\\nonumber
\intertext{(by Pinsker's inequality)}\nonumber
\leq& \frac{1}{n} \sum_{i \in [n]} \ex{x,y,W}{f(y, W(x_1, \cdots, x_{i-1}, x_i, x_{i+1}, \cdots, x_n))} \\\nonumber&+ \sqrt{2\dr{1}{W(x)}{W(x_1, \cdots, x_{i-1}, y, x_{i+1}, \cdots, x_n)}}\\\nonumber
\intertext{(by postprocessing and convexity and the fact that $x_i$ and $y$ are interchangable)}\nonumber
\leq& \frac{1}{n} \sum_{i \in [n]} \ex{x,y,W}{\langle q(y), s\rangle \mid (j,s,a) = W(x)} \\\nonumber&+ \sqrt{2(\con+\lin)}\\\nonumber
\intertext{(by \IIP{})}\nonumber
=& \ex{x,W}{\langle q(\mathcal{D}), s \rangle \mid (a,s)=W(x)}+\sqrt{2(\con+\lin)}.
\end{align}
Combining \eqref{eqn:EmpErr} and \eqref{eqn:GenErr} gives
\begin{equation}\ex{x,M}{\|M(x) - q(\mathcal{D}) \|} =\ex{x,W}{\langle a-q(\mathcal{D}), s \rangle \mid (a,s) = W(x)} \leq \alpha + \sqrt{2(\con+\lin)}. \label{eqn:TotErr}\end{equation}
Finally, combining \eqref{eqn:TotErr} and \eqref{eqn:EmpErr} gives $$\ex{x}{\|q(x) - q(\mathcal{D}) \|} \leq \ex{x,M}{\|M(x) - q(x) \|} + \ex{x,M}{\|M(x) - q(\mathcal{D}) \|} \leq 2\alpha + \sqrt{2(\con+\lin)}.$$
\end{proof}

\begin{proof}[Proof of Lemma \ref{lem:SubSampAcc}] 
Fix $x \in \mathcal{X}^{n'}$. Let $\mathcal{D}$ be the uniform distribution on elements of $x$ so that $q(\mathcal{D})=q(x)$. By Proposition \ref{prop:Generalize}, $$\ex{\hat x \sim \mathcal{D}^n}{\|q(\hat x)-q(\mathcal{D})\|} \leq 2\alpha + \sqrt{2(\con+\lin)}.$$
In particular, there must exist $\hat x \sim \mathcal{D}^n$ such that $\|q(\hat x)-q(\mathcal{D})\| \leq 2\alpha + \sqrt{2(\con+\lin)}$, as required.
\end{proof}

\end{document}